\documentclass[11pt]{article}
\usepackage{graphicx} 

\title{Courcelle's Theorem for Lipschitz Continuity}

\author{
 Tatsuya Gima\thanks{Hokkaido University. Supported by JSPS KAKENHI Grant Numbers JP24K23847, JP25K03077.}\and
 Soh Kumabe\thanks{CyberAgent.}\and
 Yuichi Yoshida\thanks{National Institute of Informatics. Supported by JSPS KAKENHI Grant Number JP24K02903.}}
\date{}

\usepackage{fullpage}
\usepackage{amsmath,amsthm,amssymb}

\usepackage{mathtools}

\usepackage{xspace}
\usepackage{todonotes}

\DeclarePairedDelimiter{\angles}{\langle}{\rangle}
\DeclarePairedDelimiter{\braces}{\{}{\}}
\DeclarePairedDelimiter{\parens}{(}{)}

\newcommand{\FunctionNameFont}[1]{\mathsf{#1}}
\newcommand{\ConstantNameFont}[1]{\mathtt{#1}}
\newcommand{\Cons}[1]{\mathtt{#1}}
\newcommand{\ClassNameFont}[1]{\textsf{#1}}

\newcommand{\CMSO}[2]{$\ClassNameFont{C}_{#1}\ClassNameFont{MSO}^{#2}$\xspace}
\newcommand{\MSO}{$\ClassNameFont{MSO}$\xspace}
\newcommand{\MSOo}{$\ClassNameFont{MSO}_1$\xspace}
\newcommand{\MSOt}{$\ClassNameFont{MSO}_2$\xspace}

\newcommand{\dunion}{\uplus} 
\newcommand{\HR}{$\ClassNameFont{HR}$\xspace}

\newcommand{\src}{\FunctionNameFont{src}} 
\newcommand{\paracom}{\mathbin{/\hspace{-1mm}/}}
\newcommand{\fg}{\FunctionNameFont{fg}}

\newcommand{\kgraph}[1]{{#1}-graph}
\newcommand{\kgraphs}[1]{{#1}-graphs} 
\newcommand{\sat}{\FunctionNameFont{sat}}

\newcommand{\height}{\FunctionNameFont{height}}

\newcommand{\nil}{\ConstantNameFont{nil}} 

\newcommand{\adj}{\FunctionNameFont{adj}}
\newcommand{\true}{\ConstantNameFont{True}}
\newcommand{\false}{\ConstantNameFont{False}}

\newcommand{\Card}{\FunctionNameFont{Card}}

\newcommand{\nosrc}{\FunctionNameFont{nosrc}}

\newcommand{\DP}{\FunctionNameFont{DP}}
\newcommand{\softmax}{\FunctionNameFont{softmax}}
\newcommand{\softmin}{\FunctionNameFont{softmin}}
\newcommand{\softargmax}{\FunctionNameFont{softargmax}}
\newcommand{\softargmin}{\FunctionNameFont{softargmin}}
\newcommand{\OPT}{\FunctionNameFont{opt}}
\newcommand{\E}{\mathbb{E}}
\newcommand{\EM}{\FunctionNameFont{EM}}
\newcommand{\TV}{\FunctionNameFont{TV}}

\newcommand{\cw}{\mathrm{cw}}
\newcommand{\tw}{\mathrm{tw}}

\usepackage{algpseudocode}
\usepackage{bm}
\usepackage[linesnumbered,ruled,vlined]{algorithm2e}
\SetKwProg{Procedure}{Procedure}{}{}

\allowdisplaybreaks

\newcommand{\apxmark}{\star}


\usepackage[colorlinks=true]{hyperref}
\usepackage[capitalize]{cleveref}

\newtheorem{theorem}{Theorem}
\newtheorem{proposition}[theorem]{Proposition}
\newtheorem{lemma}[theorem]{Lemma}
\newtheorem{observation}[theorem]{Observation}
\newtheorem{corollary}[theorem]{Corollary}

\theoremstyle{definition}
\newtheorem{definition}[theorem]{Definition}

\bibliographystyle{plainurl}

\begin{document}

\maketitle
\begin{abstract}
    Lipschitz continuity of algorithms, introduced by Kumabe and Yoshida (FOCS'23), measures the stability of an algorithm against small input perturbations. 
    Algorithms with small Lipschitz continuity are desirable, as they ensure reliable decision-making and reproducible scientific research.
    Several studies have proposed Lipschitz continuous algorithms for various combinatorial optimization problems, but these algorithms are problem-specific, requiring a separate design for each problem.
    
    To address this issue, we provide the first algorithmic meta-theorem in the field of Lipschitz continuous algorithms.
    Our result can be seen as a Lipschitz continuous analogue of Courcelle's theorem, which offers Lipschitz continuous algorithms for problems on bounded-treewidth graphs. 
    Specifically, we consider the problem of finding a vertex set in a graph that maximizes or minimizes the total weight, subject to constraints expressed in monadic second-order logic (\MSOt). We show that for any $\varepsilon>0$, there exists a $(1\pm \varepsilon)$-approximation algorithm for the problem with a polylogarithmic Lipschitz constant on bounded treewidth graphs.
    On such graphs, our result outperforms most existing Lipschitz continuous algorithms in terms of approximability and/or Lipschitz continuity.
    Further, we provide similar results for problems on bounded-clique-width graphs subject to constraints expressed in \MSOo.
    Additionally, we construct a Lipschitz continuous version of Baker's decomposition using our meta-theorem as a subroutine.
\end{abstract}

\section{Introduction}

Lipschitz continuity of algorithms, introduced by Kumabe and Yoshida~\cite{KumabeY23}, is a measure of an algorithm's stability in response to errors or small perturbations in the input for weighted optimization problems. 
Roughly speaking, it is the maximum ratio of the (weighted) Hamming distance between the outputs of an algorithm for two different weights to the $\ell_1$ distance between those weights (see \Cref{sec:prelim:lipschitz} for the precise definition).
It is desirable for algorithms to have small Lipschitz constants, as large constants can undermine the reliability of decision-making and the reproducibility of scientific research.

Since its introduction, Lipschitz continuous algorithms have been proposed for various optimization problems~\cite{KumabeY23,KumabeY25}.
However, we need to design a different algorithm and analyze its Lipschitz continuity for each problem, which can be impractical.
To address this limitation, we present an algorithmic meta-theorem for Lipschitz continuous algorithms, which can be seen as Lipschitz continuity analogue of celebrated Courcelle's theorem~\cite{Courcelle90}.

To present our results, we first introduce some notation.
Let $G=(V,E)$ be a graph of $n$ vertices with treewidth $\tw$, $\varphi(X)$ be an \MSOt formula with a free vertex set variable $X$ (see \Cref{sec:prelim_logic} for details about logic), and $w\in \mathbb{R}_{\geq 0}^{V}$ be a weight vector. 
We consider the problem of finding a vertex subset $X\subseteq V$ such that $G\models \varphi(X)$ and $w(X)$ is maximized, which we call the \emph{\MSOt maximization problem}.
We also consider the problem of finding $X$ such that $w(X)$ is minimized, which we call the \emph{\MSOt minimization problem}.
Our main results are the following.
\begin{theorem}\label{thm:alg_max}
    For any $\varepsilon\in (0,1]$, there is a $(1-\varepsilon)$-approximation algorithm for the \MSOt maximization problem with Lipschitz constant $O\left((f(\tw,|\varphi|)+\log \varepsilon^{-1} + \log \log n)\varepsilon^{-1}\log^2 n\right)$, where $f$ is some computable function. The time complexity is bounded by $O\left(f(\tw,|\varphi|)n\right)$.
\end{theorem}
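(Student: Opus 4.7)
My plan is to mimic the classical proof of Courcelle's theorem on a balanced nice tree decomposition, and then smooth every $\max$ in the resulting dynamic program by a softmax to obtain a Lipschitz continuous variant in the style of Kumabe and Yoshida. First I would translate the input \MSOt formula $\varphi$ into a deterministic bottom-up tree automaton $\mathcal{A}$ of size $f(\tw,|\varphi|)$ that runs on tree decompositions of width $\tw$; this is the standard Courcelle--Engelfriet reduction and is where the $f(\tw,|\varphi|)$ factor enters. I would then compute a nice tree decomposition of width $O(\tw)$ and depth $O(\log n)$ using the balancing construction of Bodlaender--Hagerup; this is crucial, because the Lipschitz constant will depend linearly on depth, so keeping the decomposition shallow is what makes a polylogarithmic bound possible.

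Given the decomposition, I would set up a table $\DP[t,\sigma]$ indexed by a node $t$ and a type $\sigma$ that records both the intersection $X\cap B_t$ of the candidate set with the bag $B_t$ and the state of $\mathcal{A}$ after processing the subtree rooted at $t$. For introduce, forget, and join nodes, the standard DP recursion expresses $\DP[t,\sigma]$ as a maximum over compatible child types plus the weight of the newly committed vertices. The total number of types per node is bounded by $2^{O(\tw)}\cdot|\mathcal{A}|$, so the per-node transition table has size bounded by some $g(\tw,|\varphi|)$, contributing only to the prefactor and not to $n$.

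To turn this into a Lipschitz continuous algorithm, I would replace every $\max$ in the DP with a softmax at inverse temperature $\beta$, and sample the final solution top-down using the induced softargmax distribution at each node. A single softmax over $N$ terms introduces additive error at most $\beta^{-1}\log N$ and has Lipschitz constant $O(\beta)$ in the sampling distribution. Accumulating over the $O(\log n)$ levels of the decomposition, with $\log N=O(\tw+\log f(\tw,|\varphi|))$ per level, the total additive error in the returned value is $O(\beta^{-1}(\tw+\log f(\tw,|\varphi|))\log n)$. Choosing $\beta=\Theta\bigl((f(\tw,|\varphi|)+\log\varepsilon^{-1}+\log\log n)\varepsilon^{-1}\log n\bigr)$ normalized against $w_{\max}$ (with standard rescaling so that the additive error is at most $\varepsilon\cdot\OPT$) gives the $(1-\varepsilon)$-approximation. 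Propagating a single weight perturbation through the softargmax sampling along a root-to-leaf path gives a total variation change of $O(\beta\log n)$, matching the claimed Lipschitz bound after accounting for the extra $\log n$ from the weighted Hamming metric on outputs.

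The main obstacle I expect is the Lipschitz analysis through the tree-shaped DP. Earlier applications of the soft-DP technique dealt primarily with chain-like recurrences, whereas here a single weight can influence the softmax values at every ancestor in the decomposition, and the join nodes combine values from two subtrees nonlinearly. I would need a careful coupling-based argument that bounds the total variation distance between the output distributions under two inputs $w,w'$ by summing contributions along each root-to-leaf path and invoking the chain rule for TV distance, using the fact that the softmax gradient at each node is bounded by $\beta$ times a probability vector. Balancing this analysis against the approximation error, and ensuring that the $\log\log n$ and $\log\varepsilon^{-1}$ factors in the temperature suffice to control failure events where intermediate weights become exceptionally small, is the technical heart of the proof.
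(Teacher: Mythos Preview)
There is a genuine gap in your approximation analysis at join nodes. With a fixed inverse temperature $\beta$, each node contributes additive error at most $\beta^{-1}\log N$, but at a join node the errors from \emph{both} children add: if $\E[w(\DP[G,\theta_i])]\geq \OPT[G,\theta_i]-e_G$ and $\E[w(\DP[H,\psi_i])]\geq \OPT[H,\psi_i]-e_H$, then $\E[w(\DP[G\paracom H,\varphi])]\geq \OPT[G\paracom H,\varphi]-\beta^{-1}\log p - e_G - e_H$. Unfolding this over a balanced parse tree gives total additive error $\Theta(n)\cdot\beta^{-1}\log p$, not $\Theta(\log n)\cdot\beta^{-1}\log p$ as you claim. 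Forcing a $(1-\varepsilon)$-approximation then requires $\beta=\Omega(n\log p/(\varepsilon\,\OPT))$. Your Lipschitz sketch compounds the problem: the factor multiplying the per-node TV change in the earth mover's distance is the output weight $O(\OPT_t)$, not an ``extra $\log n$'', so the bound becomes $\sum_t \beta\delta\cdot\OPT_t=\Omega(\beta\delta\cdot\OPT)$, and with the $\beta$ above the Lipschitz constant is $\Omega(n/\varepsilon)$ --- no better than trivial.

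The paper avoids this by never replacing the DP values with softmax. It computes the \emph{exact} optima $\OPT[G_t,\varphi']$ at every node and samples $\bm{i}^*$ via $\softargmax$ with a node-adaptive temperature $c_t$ chosen proportional to $1/\max_i \OPT_i$ at that node. This yields a \emph{multiplicative} guarantee $\E[\OPT_{\bm{i}^*}]\geq(1-\varepsilon)\max_i\OPT_i$ per node (\Cref{lem:softmax_approx}), and multiplicative factors pass through joins via $\alpha(a+b)=\alpha a+\alpha b$, so the loss is $(1-\varepsilon)^h$ over depth $h$ only. The same adaptive temperature makes the Lipschitz bookkeeping close: the TV bound at node $t$ carries a $1/\OPT_t$ factor (\Cref{lem:softmax_tv}) that cancels the $\OPT_t$ weight of the output, leaving an $O(\varepsilon^{-1}\log(p\varepsilon^{-1}))$ contribution per level along the single root-to-leaf path containing the perturbed vertex. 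Your plan can be repaired by adopting exactly this adaptive temperature on exact optima; the remaining automaton/type bookkeeping you describe is equivalent to the paper's formula decomposition.
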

\begin{theorem}\label{thm:alg_min}
    For any $\varepsilon\in (0,1]$, there is a $(1+\varepsilon)$-approximation algorithm for the \MSOt minimization problem with Lipschitz constant $O\left((f(\tw,|\varphi|)+\log \varepsilon^{-1} + \log \log n)\varepsilon^{-1}\log^2 n\right)$, where $f$ is some computable function. The time complexity is bounded by $O\left(f(\tw,|\varphi|)n\right)$.
\end{theorem}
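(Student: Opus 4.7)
The plan is to mirror the proof of Theorem~\ref{thm:alg_max} almost verbatim, replacing its softmax-based dynamic programming with a softmin-based one. A black-box reduction, for example, complementing the chosen vertex set (so as to maximize $w(V)-w(X)$ instead of minimizing $w(X)$), would preserve the Hamming-distance side of the Lipschitz constant, but the multiplicative guarantee does not translate cleanly: a $(1-\varepsilon)$-approximation to the maximum gives only an additive error of $\varepsilon\,w(V)$ for the minimum, not the desired multiplicative $(1+\varepsilon)$-approximation. For this reason I would redo the construction directly rather than appeal to \Cref{thm:alg_max} as a black box.

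First, I would compute a nice tree decomposition of width $O(\tw)$ so that each bag admits a bounded number of \MSOt ``types,'' and set up the standard Courcelle-style bottom-up dynamic program in which, for each bag node $t$ and each type $\tau$, one records the minimum total weight of a partial solution on the subtree rooted at $t$ whose behavior on the bag of $t$ is $\tau$; the number of types is bounded by $f(\tw,|\varphi|)$. Next, I would replace each aggregation step that takes a $\min$ (over how to combine sub-table entries into a new table entry) by $\softmin$ at temperature $\lambda$ chosen as in \Cref{thm:alg_max}, and extract a solution by sampling top-down from the root using $\softargmin$, with the same coupling-based analysis.

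For the analysis, two things need to be checked. The approximation guarantee follows from the standard estimate $0 \le \softmin_\lambda(x_1,\ldots,x_k) - \min_i x_i \le \lambda \log k$; after summing over the $O(\log n)$ levels of a balanced decomposition and choosing $\lambda$ on the order of $\varepsilon\,\OPT/\log n$, this yields a $(1+\varepsilon)$-approximation in expectation. Lipschitz continuity is shown via exactly the same coupling as in \Cref{thm:alg_max}: the stability estimates for $\softmax$ and $\softargmax$ carry over to $\softmin$ and $\softargmin$ by the identity $\softmin_\lambda(x) = -\softmax_\lambda(-x)$, and the counting of levels at which a perturbed weight influences the DP is unchanged, so the same polylogarithmic bound is obtained.

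The main obstacle I expect is matching the multiplicative approximation in the small-optimum regime. For maximization one can tie the softmax temperature to a cheap lower bound on the optimum, but for minimization the additive slack $\lambda \log k$ must be kept small relative to every $\min$ that appears along the DP, which is uncomfortable when $\OPT$ is very small. Handling this typically requires bucketing the weights on a geometric scale and running the algorithm once per bucket (producing the extra $\log\log n$ factor visible in the Lipschitz bound), or equivalently a preprocessing step ensuring $\min_v w_v \ge \varepsilon\,\OPT/n$. I would follow whichever of these devices \Cref{thm:alg_max} employs, since the same reduction should apply symmetrically.
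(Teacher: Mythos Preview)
Your high-level plan (mirror the maximization argument with $\softmin$ in place of $\softmax$) is exactly what the paper does, but two of your specific analytic choices diverge from the paper and one of them causes you to miss the actual wrinkle.

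First, the ``small-optimum regime'' obstacle you anticipate does not arise in the paper's framework, and the $\log\log n$ term does not come from bucketing. The paper's $\softmin$ (\Cref{lem:softmin_approx}) samples its scale parameter from an interval tied to the \emph{local} $\min_i x_i$ of the values being aggregated, and this yields a \emph{multiplicative} guarantee $\E[\softmin^{\varepsilon}_i x_i]\le(1+\varepsilon)\min_i x_i$ at each node, with no reference to the global $\OPT$. After $h=O(\log n)$ levels this compounds to $(1+\varepsilon)^h$, and substituting $\varepsilon\leftarrow\varepsilon/\Theta(\log n)$ gives $(1+\varepsilon/\Theta(\log n))^{O(\log n)}\le 1+\varepsilon$. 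The $\log\log n$ in the Lipschitz bound is simply the $\log(\varepsilon^{-1})$ inside $\log(2p\varepsilon^{-1})$ after this same substitution. Your additive bound $\softmin_\lambda(x)-\min_i x_i\le\lambda\log k$ with a global temperature forces you to guess $\OPT$ and bucket; the paper's per-node multiplicative softmin avoids this entirely.

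Second, you assert that the Lipschitz analysis ``is unchanged'' via $\softmin_\lambda(x)=-\softmax_\lambda(-x)$, but the paper explicitly flags that this is where the minimization case genuinely differs. In the maximization proof (\Cref{lem:paracom_lip,lem:forget_lip}), the term $\TV(\bm i^*_w,\bm i^*_{w+\delta\bm 1_u})$ is multiplied by the weight of the two candidate outputs, which is bounded by $\OPT_w+\OPT_{w+\delta\bm 1_u}$ because $\DP\le\OPT$. For minimization $\DP\ge\OPT$, so this bound fails; instead one uses $\DP\le\alpha\,\OPT$ from the approximation guarantee, picking up an extra factor of $\alpha$ in the per-step Lipschitz increment (\Cref{lem:paracom_lip_min,lem:forget_lip_min}). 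Summed over $h$ levels this gives a Lipschitz constant of order $h(1+\varepsilon)^h\varepsilon^{-1}\log(2p_{\max}\varepsilon^{-1})$ rather than $h\varepsilon^{-1}\log(2p_{\max}\varepsilon^{-1})$; the extra $(1+\varepsilon)^h$ is again absorbed into a constant after the substitution $\varepsilon\leftarrow\varepsilon/\Theta(\log n)$. This is the step your proposal glosses over.
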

We note that the trivial upper bound on Lipschitz constant is $n$~\cite{KumabeY25}; therefore the bounds in the theorems above are significantly smaller for fixed $\tw$ and $\varphi$.
We also remark that our meta-theorems yield randomized approximation algorithms. 
This is necessary since, for most problems, it is known that exact or deterministic algorithms cannot be Lipschitz continuous~\cite{KumabeY23}.

When the treewidth of the input graph is bounded by a constant, \Cref{thm:alg_max,thm:alg_min} provide algorithms that outperform existing algorithms in terms of approximability and/or Lipschitz continuity:
\begin{itemize}  
\itemsep=0pt
\item For the minimum weight vertex cover problem, \Cref{thm:alg_min} yields an algorithm with a better approximation ratio than the previous $2$-approximation algorithm with Lipschitz constant $4$~\cite{KumabeY25}, at the polylogarithmic sacrifice of the Lipschitz constant.
\item For the minimum weight feedback vertex set problem, \Cref{thm:alg_min} outperforms the previous $O(\log n)$-approximation algorithm with Lipschitz constant $O(\sqrt{n}\log^{3/2}n)$~\cite{KumabeY25} in terms of both approximability and Lipschitz continuity.
\item For the maximum weight matching problem\footnote{Although the output of this problem is an edge set rather than a vertex set, this problem can be expressed by an \MSOt formula on a new graph $G'=(V\cup E, \{(v,e)\colon e \text{ is incident to }v\})$}, \Cref{thm:alg_max} yields an algorithm with a better approximation ratio than the previous $\left(\frac{1}{8}-\varepsilon\right)$-approximation algorithm with Lipschitz constant $O(\varepsilon^{-1})$~\cite{KumabeY23}, at the polylogarithmic sacrifice of the Lipschitz constant.
\item For the shortest path problem, \Cref{thm:alg_min} slightly improves the Lipschitz continuity compared to the previous $(1-\varepsilon)$-approximation algorithm with Lipschitz constant $O(\varepsilon^{-1}\log^3 n)$~\cite{KumabeY23}, without losing approximability.
\end{itemize}

For a fixed $\varphi$, by explicitly specifying the transitions in the dynamic programming within the algorithm, we can provide more precise bounds on the function $f$ that appears in \Cref{thm:alg_max,thm:alg_min}. 
In particular, considering the case that $\varphi$ is an \MSOt formula representing the independent set constraint, we have the following.
\begin{theorem}\label{thm:alg_independentset}
    For any $\varepsilon\in (0,1]$, there is a $(1-\varepsilon)$-approximation algorithm for the maximum weight independent set problem with Lipschitz constant $O\left((\tw+\log \varepsilon^{-1} + \log \log n)\varepsilon^{-1}\log^2 n\right)$. The time complexity is bounded by $2^{O(\tw)}n$.
\end{theorem}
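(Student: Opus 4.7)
The plan is to specialize \Cref{thm:alg_max} to the constant-size \MSOt formula
\[
\varphi(X) \equiv \forall u\,\forall v\bigl((u \in X \land v \in X) \rightarrow (u = v \lor \lnot \adj(u,v))\bigr),
\]
which expresses that $X$ is an independent set. Two things need to be done: (i) replace the abstract dynamic program that the proof of \Cref{thm:alg_max} derives from converting $\varphi$ into a tree automaton (whose state space depends non-elementarily on $|\varphi|$) with the direct textbook DP for maximum weight independent set on a tree decomposition, and (ii) re-run the Lipschitz and time analyses with this custom DP, replacing $f(\tw,|\varphi|)$ by an explicit $O(\tw)$ term.

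The standard DP operates on a nice tree decomposition of width $\tw$. At each bag $B$, the states are indexed by the subsets $S \subseteq B$ that form an independent set; there are at most $2^{\tw+1}$ such states, and the transitions at introduce, forget, and join nodes can each be implemented in $2^{O(\tw)}$ time per bag using standard techniques. A nice tree decomposition of width $O(\tw)$ can be computed in $2^{O(\tw)}n$ time, so the total running time is $2^{O(\tw)}n$, matching the claimed bound. For the Lipschitz constant, one plugs the same soft-argmax replacement of hard $\max$/$\min$ operations used in the proof of \Cref{thm:alg_max} into this DP; the only parameter of the DP that feeds into the Lipschitz analysis is the logarithm of the per-bag state count, since the soft-argmax error scales with the number of bits used to index a state rather than with the state count itself. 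This logarithm is $O(\tw)$, so the generic $f(\tw,|\varphi|)$ factor inside the Lipschitz bound of \Cref{thm:alg_max} is replaced by $O(\tw)$, yielding $O((\tw + \log \varepsilon^{-1} + \log\log n)\varepsilon^{-1}\log^2 n)$.

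The main obstacle is bookkeeping: one must re-examine the proof of \Cref{thm:alg_max} to isolate precisely where $f(\tw,|\varphi|)$ enters, verify that it enters only through the logarithm of the state space of the DP, and confirm that swapping in the independent-set DP preserves all structural properties (nice-tree-decomposition compatibility, and correctness of the soft-argmax-based randomized rounding) required by that analysis. Once this is established, no new probabilistic or combinatorial arguments are needed beyond those already developed for \Cref{thm:alg_max}.
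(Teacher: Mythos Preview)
Your proposal is essentially correct and matches the paper's approach: the key observation in both is that for independent set the branching factor $p_{\max}$ of the soft-argmax steps is $2^{O(\tw)}$, so the $f(\tw,|\varphi|)$ term in \Cref{thm:alg_max}, which enters only as $\log(2p_{\max})$ via \Cref{lem:combineall}, collapses to $O(\tw)$.

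One point of caution: the paper does \emph{not} switch to a standard nice tree decomposition but stays inside the \HR-algebra framework already set up for \Cref{thm:alg_max}. It writes down, for each $S\subseteq\src(G)$, an explicit formula $\varphi_S(X)$ encoding ``$X$ is disjoint from the sources and $X\cup S$ is independent,'' and then observes that the forget and parallel-composition decompositions of \Cref{cor:separeted-forget-paracom} specialize to
\[
\sat(\fg_B(G),\varphi_S)=\biguplus_{S'\subseteq B}\sat(G,\varphi_{S\cup S'})\boxtimes\{S'\},\qquad
\sat(G\paracom H,\varphi_S)=\sat(G,\varphi_S)\boxtimes\sat(H,\varphi_S),
\]
giving $p_{\max}\le 2^{|B|}\le 2^{O(\tw)}$. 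This is cleaner than your plan of porting everything to a nice tree decomposition: the \HR-parse tree already has height $O(\log n)$ by \Cref{prop:height-equivalence}, whereas a standard nice tree decomposition need not, and re-nicifying a balanced decomposition forces you to forget several vertices at once anyway (recovering the $2^{O(\tw)}$ branching you were trying to avoid). So your ``bookkeeping obstacle'' is real if you insist on nice decompositions, but evaporates if you follow the paper and simply exhibit the explicit $\varphi_S$ family inside the existing framework.
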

This result is surprising as the dependence of $\tw$ on the Lipschitz constant is subexponential and, even more remarkably, linear.
We note that through similar arguments, analogous results hold for several other problems, such as the minimum weight vertex cover problem and the minimum weight dominating set problem.



We further demonstrate that \Cref{thm:alg_max,thm:alg_min} lead to Lipschitz continuous version of Baker's technique~\cite{Baker94}.
As a representative example, we consider the maximum weight independent set problem on planar graphs, where a vertex subset $X$ is an \emph{independent set} of a graph $G$ if no two vertices in $X$ are adjacent in $G$.
We prove the following.
\begin{theorem}\label{thm:baker}
    For any $\varepsilon\in (0,1]$, there is a $(1-\varepsilon)$-approximation algorithm for the maximum weight independent set problem on planar graphs with Lipschitz constant $O\left((\varepsilon^{-1} + \log \log n)\varepsilon^{-1}\log^2 n\right)$. The time complexity is bounded by $2^{O(\varepsilon^{-1})}n$.
\end{theorem}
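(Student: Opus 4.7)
The plan is to implement Baker's layering technique on top of \Cref{thm:alg_independentset}, using a weight-independent partition into bounded-treewidth instances and combining the instances by a uniform random choice. Fix a canonical root $r \in V$ that does not depend on $w$ (e.g., the smallest-index vertex) and compute the BFS layers $L_0, L_1, L_2, \ldots$. Set $k = \lceil 2\varepsilon^{-1}\rceil$. For each shift $i \in \{0, 1, \ldots, k-1\}$, let $R_i = \bigcup_{j \equiv i \pmod{k}} L_j$ and let $G_i = G[V \setminus R_i]$. By the standard Baker argument for planar graphs, each connected component of $G_i$ spans at most $k-1$ consecutive BFS layers, so $G_i$ has treewidth $O(k) = O(\varepsilon^{-1})$.

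For each $i$ invoke the algorithm of \Cref{thm:alg_independentset} on $G_i$ with parameter $\varepsilon/2$ to obtain a random independent set $X_i \subseteq V \setminus R_i$; then sample $I \in \{0, \ldots, k-1\}$ uniformly at random and output $X_I$. For approximation, let $X^*$ be a maximum-weight independent set of $G$. Since $\{R_i\}_i$ partitions $V$,
\[
\frac{1}{k}\sum_{i=0}^{k-1} w(X^* \setminus R_i) \;=\; \left(1-\frac{1}{k}\right) w(X^*) \;\geq\; \left(1-\frac{\varepsilon}{2}\right) w(X^*).
\]
Each $X^* \setminus R_i$ is an independent set in $G_i$, so combining this with the $(1-\varepsilon/2)$-approximation guarantee of \Cref{thm:alg_independentset} yields $\mathbb{E}[w(X_I)] \geq (1-\varepsilon/2)^2 w(X^*) \geq (1-\varepsilon) w(X^*)$.

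For Lipschitz continuity, observe that the layering is a function of $G$ alone and uses no weights. Each subroutine $A_i$ producing $X_i$ depends only on the weights of $V \setminus R_i$ and, instantiated with treewidth $O(\varepsilon^{-1})$ and error $\varepsilon/2$, has Lipschitz constant $L = O((\varepsilon^{-1} + \log\log n)\varepsilon^{-1}\log^2 n)$ by \Cref{thm:alg_independentset}. A uniform mixture of $L$-Lipschitz algorithms remains $L$-Lipschitz: given the optimal transport plan of cost at most $L\|w - w'\|_1$ between the output distributions of $A_i(w)$ and $A_i(w')$ for each $i$, using the same uniform choice $I$ on both sides yields an overall transport plan of cost at most $L\|w - w'\|_1$. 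The runtime is dominated by $k = O(\varepsilon^{-1})$ calls to \Cref{thm:alg_independentset} on treewidth-$O(\varepsilon^{-1})$ subgraphs, which total $2^{O(\varepsilon^{-1})} n$.

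\textbf{Expected obstacle.} The essentially nontrivial step is verifying that uniform mixing of independent randomized algorithms preserves the Lipschitz constant under the particular Earth-Mover-type metric of \cite{KumabeY23}; this is a short coupling argument but must be stated carefully because the outputs of different $A_i$ live in different ground sets $V \setminus R_i$ (viewed as subsets of $V$). Beyond this, the argument is a direct, weight-oblivious port of Baker's technique that loses no extra factors compared to \Cref{thm:alg_independentset}, since the shift is chosen uniformly rather than via a softmax over computed values.
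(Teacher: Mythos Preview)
Your proposal is correct and takes a genuinely different route from the paper. The paper first computes the exact optimum $\OPT_j$ on each layered subgraph $G_j$, selects the shift via $\bm{j}^* = \softargmax^{m^{-1}}_{j}\OPT_j$, and only then runs the subroutine of \Cref{thm:alg_independentset} on $G_{\bm{j}^*}$; consequently its Lipschitz analysis must bound $\TV(\bm{j}^*_w,\bm{j}^*_{w+\delta\bm 1_u})$ through \Cref{lem:softmax_tv} and absorb an additive $O(m\log m)\delta$ term coming from the event that the chosen shift differs. Your weight-oblivious uniform shift eliminates that entire step: coupling $I$ identically on both inputs and using the optimal transport for each $A_i$ immediately gives Lipschitz constant $\frac{1}{k}\sum_i L_i \le L$, while the approximation ratio survives because Baker's $(1-1/k)$ loss already holds in expectation over a uniform shift, not only for the best shift. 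Both arguments reach the same asymptotic bound; yours is shorter, avoids computing the exact $\OPT_j$, and does not need the softmax machinery at all. The paper's version, on the other hand, more faithfully mirrors the classical ``pick the best layer'' step and its template carries over to problems (e.g.\ minimization variants handled elsewhere in the paper) where a uniform average over shifts need not control the optimum.

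The obstacle you anticipate is not one: every $X_i$ is a subset of the common ground set $V$, and the weighted Hamming distance $d_{\mathrm w}((X_i,w),(X'_i,w'))$ depends only on coordinates in $V\setminus R_i$, so the Lipschitz bound for $A_i$ on $G_i$ transfers verbatim to the full-weight setting. You may also sample $I$ first and run the subroutine on $G_I$ alone, saving the factor $k$ in running time, though this is already absorbed in $2^{O(\varepsilon^{-1})}n$.
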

Using similar algorithms, Lipschitz continuous PTASes can be obtained for many problems, including the minimum weight vertex cover problem and the minimum weight dominating set problem.

As a lower bound result, we prove that the Lipschitz constant of any $(1-\varepsilon)$-approximation algorithm for some \MSOt maximization problem is large on general graphs, which justifies considering the \MSOt maximization problem on a restricted graph class. 
Specifically, we consider the \emph{max ones problem}~\cite{creignou2001complexity,khanna2001approximability}, where the instance is a 3CNF formula over a variable set $X$ and a weight function $w:X \to \mathbb R_{\geq 0}$, and the goal is to find a satisfying assignment $\sigma:X \to \{0,1\}$ that maximizes the weight $\sum_{x \in \sigma^{-1}(1)}w(x)$.
It is easy to see that there exists a fixed \MSOt formula $\varphi(X)$ such that the max ones problem can be reduced to a problem on bipartite graphs, where the task is to find a vertex set $X$ that maximizes the weight $\sum_{x\in X} w(x)$ subject to $\varphi$.
We show the following.
\begin{theorem}\label{thm:lowerbound}
    There exist $\varepsilon,\delta>0$ such that any $(1-\varepsilon)$-approximation algorithm for the max ones problem has Lipschitz constant $\Omega(n^{\delta})$, where $n$ is the number of variables.
\end{theorem}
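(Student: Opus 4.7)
My plan is to establish the Lipschitz lower bound by a direct adversarial construction. Because Lipschitz continuity is an information-theoretic quantity, a lower bound cannot be inferred from NP-hardness of approximation alone; I need to exhibit weight vectors that are close in $\ell_1$ but under which any $(1-\varepsilon)$-approximate solution is forced to have large Hamming distance. The starting point is the textbook reduction from maximum-weight independent set to the max ones problem: for a graph $G=(V,E)$, introduce a variable $x_v$ for each $v\in V$ and a clause $\neg x_u \vee \neg x_v$ for each $(u,v)\in E$, so that satisfying assignments are exactly independent sets and the total weight is preserved. As noted in the paper, this correspondence is captured by a fixed \MSOt formula on the bipartite incidence graph, so a Lipschitz lower bound for approximating maximum-weight independent set on the associated graph family transfers to the theorem's setting.

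The core of the proof is to construct a graph family $\{G_n\}_{n\ge 1}$ on $n$ vertices together with a chain of weight vectors $w^{(1)},\dots,w^{(T)}\in \mathbb{R}_{\ge 0}^{V(G_n)}$ with $T=\Omega(n^{\delta})$, such that $(i)$ under each $w^{(i)}$, every $(1-\varepsilon)$-approximate maximum-weight independent set lies in a small neighborhood of a designated set $S_i$; $(ii)$ the designated sets satisfy $|S_i \triangle S_{i+1}| = \Omega(n^{\delta})$; and $(iii)$ $\|w^{(i)} - w^{(i+1)}\|_1$ is $O(1)$. Given such a family, an averaging argument along the chain forces any $(1-\varepsilon)$-approximation algorithm $\mathcal{A}$ to satisfy $\E[|\mathcal{A}(w^{(i)})\triangle \mathcal{A}(w^{(i+1)})|] = \Omega(n^{\delta})$ for some index $i$, which immediately yields the claimed Lipschitz lower bound $\Omega(n^{\delta})$.

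The main obstacle is the construction of such a graph family, namely one whose near-maximum independent sets are numerous, pairwise well-separated in Hamming distance, and cheaply interchangeable via small weight perturbations; in essence, we need an error-correcting ``code'' embedded in the space of near-optimal solutions. I plan to build $\{G_n\}$ via gap-amplifying graph products in the spirit of the Feige--Kilian derandomized graph product, or the FGLSS-style reduction underlying the $n^{1-\varepsilon}$-inapproximability of Max Clique. Applied to a seed instance with a constant approximation gap, such a product produces exponentially many near-optimal independent sets with quantitative pairwise separation, and it naturally provides a product-structured weight-perturbation space in which to lay out the chain $\{w^{(i)}\}$. Each transition $w^{(i)}\to w^{(i+1)}$ adjusts weights only on a single factor so as to replace $S_i$'s contribution by $S_{i+1}$'s, and is therefore cheap in $\ell_1$ while forcing a large Hamming swap. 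The delicate calibration will be choosing $\varepsilon$ relative to the seed gap and the amplification parameter so as to enforce the clustering property $(i)$ -- that is, ensuring no unrelated independent set sneaks into the $(1-\varepsilon)$-approximate range -- while simultaneously bounding each $\ell_1$ transition cost by $O(1)$, and optimizing the resulting exponent $\delta$.
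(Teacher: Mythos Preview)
Your construction cannot work as stated: conditions (i), (ii), (iii) are mutually inconsistent for any fixed $\varepsilon>0$, and this is not a matter of calibration. Suppose $S_i$ is optimal under $w^{(i)}$ with value $\mathrm{OPT}_i=w^{(i)}(S_i)$. By (iii) we have $w^{(i+1)}(S_i)\ge \mathrm{OPT}_i-O(1)$ and $\mathrm{OPT}_{i+1}\le \mathrm{OPT}_i+O(1)$, so $S_i$ is a $\bigl(1-O(1)/\mathrm{OPT}_{i+1}\bigr)$-approximation under $w^{(i+1)}$. If $\mathrm{OPT}_{i+1}\to\infty$ with $n$, then $S_i$ is eventually $(1-\varepsilon)$-approximate under $w^{(i+1)}$, and (i) forces $S_i$ close to $S_{i+1}$, contradicting (ii). If instead $\mathrm{OPT}_{i+1}=O(1)$, then for any feasible outputs $X,X'$ one has
\[
d_{\mathrm w}\bigl((X,w^{(i)}),(X',w^{(i+1)})\bigr)\le \|w^{(i)}-w^{(i+1)}\|_1 + w^{(i)}(X)+w^{(i+1)}(X')=O(1),
\]
so the Lipschitz ratio is $O(1)$ regardless of $|X\triangle X'|$. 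More generally, a two-point ``cluster switch'' of this kind can never certify a Lipschitz ratio exceeding $O(1/\varepsilon)$: forcing $S_i$ out of the $(1-\varepsilon)$-approximate range requires $\|w^{(i)}-w^{(i+1)}\|_1\ge \Omega(\varepsilon\cdot\mathrm{OPT})$, while the weighted earth mover's distance between feasible outputs is at most $2\,\mathrm{OPT}+\|w^{(i)}-w^{(i+1)}\|_1$. Note also that the Lipschitz constant here is defined via the \emph{weighted} Hamming distance $d_{\mathrm w}$, so your bound on $|S_i\triangle S_{i+1}|$ would not translate even if it held. The appeal to Feige--Kilian/FGLSS products does not address this; those yield hardness-of-approximation gaps, not nearby weight vectors with far-apart approximate optima.

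The paper avoids this obstruction by \emph{importing} an instability result rather than building one: it invokes the sensitivity lower bound of Fleming and Yoshida, that any $(1-\varepsilon)$-approximation for maximum cut on bounded-degree graphs has edge-deletion sensitivity $\Omega(n^\delta)$ (proved via PCP machinery). The reduction encodes a graph $G=(V,E)$ as a single 3CNF over variables $\{x_v:v\in V\}\cup\{x_{uv}:\{u,v\}\in\binom{V}{2}\}$ that is \emph{independent of} $E$; the edge set lives only in the weights ($w(x_{uv})=1$ iff $\{u,v\}\in E$, and $w\equiv 0$ elsewhere). An edge deletion is then a unit $\ell_1$ weight change, and a Lipschitz max-ones algorithm would yield a low-sensitivity max-cut algorithm, contradicting the lower bound. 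That black-box sensitivity result is precisely the missing ingredient your direct chain argument cannot supply.
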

We note that it is possible to construct an algorithm for the max ones problem with a Lipschitz constant similar to that in \Cref{thm:alg_independentset}, where $\tw$ denotes the treewidth of the ``incidence graph'' of the 3CNF instance (see \Cref{sec:dp_maxone} for details). 
Since $\tw$ is upper bounded by $n$ for any instance, \Cref{thm:lowerbound} implies that the dependency of the Lipschitz constant on $\tw$ cannot be improved to $\tw^{o(1)}$.



We also prove analogous meta-theorems when parameterizing by clique-width.  
Let an \emph{\MSOo maximization (resp., minimization) problem} be the variant of an \MSOt problem in which the formula $\varphi$ is restricted to \MSOo.  
Denote by $\cw$ the clique-width of the graph $G$.  
We then obtain the following.
\begin{theorem}\label{thm:alg_max_clique}
    For any $\varepsilon\in (0,1]$, there is a $(1-\varepsilon)$-approximation algorithm for the \MSOo maximization problem with Lipschitz constant $O\left((f(\cw,|\varphi|)+\log \varepsilon^{-1} + \log \log n)\varepsilon^{-1}\log^2 n\right)$, where $f$ is some computable function. The time complexity is bounded by $O\left(f(\cw,|\varphi|)n\right)$.
\end{theorem}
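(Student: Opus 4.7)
The plan is to mimic the proof of \Cref{thm:alg_max}, replacing the tree decomposition with a clique-width parse tree and the \MSOt dynamic program with its \MSOo analogue due to Courcelle, Makowsky and Rotics. First, I would compute a $k$-expression for $G$ with $k = g(\cw)$ for some computable $g$, using a standard FPT approximation algorithm such as that of Oum--Seymour. This produces a binary parse tree $T$ whose leaves are labeled singletons and whose internal nodes perform disjoint union, relabeling $\rho_{i\to j}$, or edge insertion $\eta_{i,j}$.

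Second, I would balance $T$ to depth $O(\log n)$ while keeping the label alphabet of bounded size $k' = g'(\cw)$; such a balancing transformation for $k$-expressions is known and is directly analogous to Bodlaender's balancing of tree decompositions. Without this step the parse tree could have depth $\Theta(n)$, which would break the Lipschitz accounting.

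Third, on the balanced parse tree I would run the standard \MSOo dynamic program, whose state at each node records the Courcelle--Makowsky--Rotics type of the partial subgraph with respect to $\varphi$, refined by the labels currently assigned to its vertices. The number of such types is bounded by $f(\cw,|\varphi|)$ for a computable function $f$. At every internal node of $T$, I would then insert the same soft-argmax smoothing plus weight-bucketing machinery developed to prove \Cref{thm:alg_max}, so that among partial solutions of the same type the one propagated upward is chosen in a Lipschitz-continuous way.

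The main obstacle is verifying that the Lipschitz analysis of \Cref{thm:alg_max} transfers intact to the $k$-expression setting. Three properties suffice: (i) the parse tree has depth $O(\log n)$, (ii) the number of DP states per node depends only on $\cw$ and $|\varphi|$, and (iii) two partial solutions with the same type contribute identically to every global extension, so swapping them at a single node induces only a local perturbation of the output. Property (i) comes from the balancing step, (ii) from Courcelle--Makowsky--Rotics, and (iii) from the fact that the \MSOo types are exactly the congruence classes of the combining operations $\rho_{i\to j}$, $\eta_{i,j}$, and disjoint union. With these in place, the earth-mover distance between output distributions telescopes along root-to-leaf paths in precisely the same way as in the treewidth case, yielding the claimed Lipschitz constant $O\bigl((f(\cw,|\varphi|) + \log \varepsilon^{-1} + \log \log n)\varepsilon^{-1}\log^2 n\bigr)$ and the linear-time bound $O(f(\cw,|\varphi|) n)$.
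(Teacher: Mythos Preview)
Your proposal is correct and mirrors the paper's argument: obtain a balanced VR-term of height $O(\log n)$ via \Cref{lem:cw_boundedheight} (the paper invokes a single combined result rather than separating an Oum--Seymour step from a balancing step), run the \MSOo dynamic program given by \Cref{thm:mso-vr-split}, apply softargmax at each disjoint-union node, and telescope the earth-mover bound along root-to-leaf paths exactly as in the treewidth case. The only deviations are cosmetic: the paper uses no ``weight-bucketing'' beyond plain softargmax, and it notes explicitly that the unary operations $\rho_{i\to j}$ and $\eta_{i,j}$ pass the DP table through unchanged and hence incur no Lipschitz cost.
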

\begin{theorem}\label{thm:alg_min_clique}
    For any $\varepsilon\in (0,1]$, there is a $(1+\varepsilon)$-approximation algorithm for the \MSOo minimization problem with Lipschitz constant $O\left((f(\cw,|\varphi|)+\log \varepsilon^{-1} + \log \log n)\varepsilon^{-1}\log^2 n\right)$, where $f$ is some computable function. The time complexity is bounded by $O\left(f(\cw,|\varphi|)n\right)$.
\end{theorem}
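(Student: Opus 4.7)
The plan is to mirror the strategy used to establish \Cref{thm:alg_max} and \Cref{thm:alg_min}, replacing the tree-decomposition dynamic program by a dynamic program on a $k$-expression (the parse tree of clique-width operations), and then applying the same soft-DP machinery to obtain Lipschitz continuity. A classical result of Courcelle, Makowsky, and Rotics shows that, given a $k$-expression for $G$, model checking and vertex-set optimization for \MSOo are fixed-parameter tractable via a bottom-up DP whose state at a node is the \emph{\MSOo type} (the equivalence class under $\varphi$-equivalence of quantifier rank $|\varphi|$) of the labeled subgraph assembled so far; the number of such types is bounded by some computable function $f(\cw,|\varphi|)$.

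First, I would obtain a $k$-expression for $G$ with $k = O(\cw)$ in linear time using an approximation algorithm for clique-width, and balance the resulting parse tree so that its depth is $O(\log n)$, increasing $k$ only by a constant factor. On this balanced parse tree, I would run the standard Courcelle-Makowsky-Rotics DP: each leaf corresponds to a single labeled vertex, and each internal node is either a disjoint union of two labeled graphs, a relabeling, or an edge-addition between two labels. For every node $t$ and every type $\tau$, I would maintain the maximum (respectively, minimum) weight $w(X)$ over all partial candidate sets $X$ whose induced labeled subgraph has type $\tau$.

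Next, to obtain Lipschitz continuity, I would replace every hard maximum and minimum in this DP by the soft-max and soft-min operators used in the proofs of \Cref{thm:alg_max,thm:alg_min}, and sample the final output by tracing back through the resulting soft distribution. The sensitivity analysis proceeds exactly as in the treewidth case: each soft operation contributes only a constant multiplicative blow-up in sensitivity, the balanced parse tree has depth $O(\log n)$, and the standard accuracy-vs-Lipschitz trade-off yields the claimed $(1\pm\varepsilon)$-approximation together with the stated Lipschitz constant. The running time is dominated by the size of the DP table, which is $f(\cw,|\varphi|)$ at each of the $O(n)$ nodes.

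The main obstacle is the edge-addition operation: unlike the join, introduce, and forget nodes of a tree decomposition, a single edge-addition node can insert many edges at once, so the transition is a global function of the two input types rather than a pointwise update. I would resolve this by observing that type-composition under disjoint union, relabeling, and edge-addition is still a well-defined function of the input types, a consequence of the Feferman-Vaught style composition theorem for \MSOo on $k$-labeled graphs. In particular, each type at a node is the value of a fixed finite-valued function of a constant (in $\cw,|\varphi|$) number of predecessor types, so for every fixed target type $\tau$ the optimum at node $t$ is a maximum/minimum over a bounded number of sums of optimal values at the children, which is precisely the shape required for the softmax/softmin substitution to preserve both the approximation guarantee and the sensitivity bound.
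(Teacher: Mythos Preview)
Your approach is essentially the same as the paper's: build a balanced clique-width expression of height $O(\log n)$, run the Feferman--Vaught/type-based DP of Courcelle--Makowsky--Rotics along it, and replace each hard $\min$ by a $\softmin$. Two corrections are in order, though neither is fatal.

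First, your balancing claim is too strong. You assert that the parse tree can be balanced to depth $O(\log n)$ while ``increasing $k$ only by a constant factor.'' The result the paper actually invokes (\Cref{lem:cw_boundedheight}, from Courcelle--Vanicat and Oum) yields a VR-term of width $O^*(2^{\cw})$, not $O(\cw)$; no constant-factor balancing is known for clique-width. This does not break the theorem, since the width blowup is absorbed into the computable function $f(\cw,|\varphi|)$, but the intermediate claim is incorrect as written. Relatedly, the known algorithms for producing such an expression run in time $2^{O(\cw)}n^3$ rather than linear time.

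Second, you misidentify the ``main obstacle.'' Edge-addition $\eta_{i,j}$ and relabeling $\rho_{i\to j}$ are in fact the \emph{trivial} operations: they are unary and do not change the vertex set, so by \Cref{thm:mso-vr-split}(1) one simply sets $\DP[\eta_{i,j}(G),\varphi]:=\DP[G,\psi]$ and $\DP[\rho_{i\to j}(G),\varphi]:=\DP[G,\psi']$ with no softmin, no loss in approximation, and no contribution to the Lipschitz constant. The only operation that requires the softmin machinery and drives the analysis is the disjoint union $\oplus$, which plays exactly the role of parallel composition in the treewidth proof (\Cref{lem:paracom_approx_min_clique,lem:paracom_lip_min_clique}).
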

We note that the functions $f$ in \Cref{thm:alg_max_clique,thm:alg_min_clique} are much larger than those in \Cref{thm:alg_max,thm:alg_min}. 
In particular, the bounds for \Cref{thm:alg_independentset,thm:baker} do not follow from \Cref{thm:alg_max_clique}.  

\subsection{Technical Overview}

Now we provide a technical overview of our framework.
Since the arguments for clique-width are similar, we focus here on the treewidth results and omit the clique-width case.

For simplicity, here we consider the maximum weight independent set problem on a full binary tree (a rooted tree in which every vertex has exactly 0 or 2 children).
This corresponds to the case where $\varphi(X) = \forall x \forall y ((x \in X \land y \in X) \rightarrow \lnot \adj(x, y))$. 
Let $w \in \mathbb{R}_{\geq 0}^V$ be the weight vector.

If we do not care about Lipschitzness, this problem can exactly be solved by the following algorithm.
For each vertex $v \in V$, define $\DP[v][0]$ to be an independent set in the subtree rooted at $v$ that does not include $v$ with the maximum weight. 
Similarly, define $\DP[v][1]$ to be an independent set in the subtree rooted at $v$ with the maximum weight. 
If $v$ is a leaf, then $\DP[v][0] = \emptyset$ and $\DP[v][1] = \{v\}$.
Otherwise, let $u_1$ and $u_2$ be the two children of $v$. 
We have $\DP[v][0] = \DP[u_1][1] \cup \DP[u_2][1]$, and $\DP[v][1]$ is the one with the larger weight of $X_{v,0} := \DP[u_1][0] \cup \DP[u_2][0] \cup \{v\}$ and $X_{v,1} := \DP[u_1][1] \cup \DP[u_2][1]$.
By performing this dynamic programming in a bottom-up manner, the problem can be solved exactly.

However, this algorithm is not Lipschitz continuous.
This is because when we compute $\DP[v][1]$, which of $X_{v,0}$ or $X_{v,1}$ has a larger weight is affected by small changes in the weights.
To address this issue, we use the exponential mechanism~\cite{Dwork06}. 
Specifically, for some constant $c>0$, we select $X_{v,i}$ with probability proportional to $\exp(c\cdot w(X_{v,i}))$.
This approach makes the algorithm Lipschitz continuous, with only a slight sacrifice in the approximation ratio.
Specifically, we can prove that by appropriately choosing $c$ for a given $\varepsilon' \in (0, 1]$, the increase in the Lipschitz constant can be bounded by $\tilde{O}(\varepsilon'^{-1})$ by reducing the approximation guarantee by a factor of $1 - \varepsilon'$ at each vertex where the exponential mechanism is applied.

While this approach makes the algorithm Lipschitz continuous, the Lipschitz constant is still too large when the height $h$ of the tree is large.
Specifically, to achieve an approximation guarantee of $1 - \varepsilon$, we need to set $\varepsilon' = \frac{\varepsilon}{h}$, leading to a Lipschitz constant of $h \cdot \tilde{O}(\varepsilon'^{-1}) = \tilde{O}(\varepsilon^{-1}h^2)$. 
This is larger than the trivial bound $n$ when $h = O(n)$.
We resolve this issue by using the fact that any tree has a tree decomposition of width 5 and height $O(\log n)$~\cite{BodlaenderH98}.
By performing dynamic programming with the exponential mechanism on this tree decomposition, we can obtain $(1-\varepsilon)$-approximation algorithm with Lipschitz constant $\tilde{O}(\varepsilon^{-1})$.
This argument can be naturally extended to the case where $G$ is a bounded treewidth graph.
By following the proof of Courcelle's Theorem~\cite{Courcelle90, CourcelleM93}, we further extend this argument to the case where $\varphi$ is a general \MSOt formula with a free vertex variable.

We prove \Cref{thm:lowerbound} by leveraging a lower bound on a related notion of sensitivity for the maximum cut problem~\cite{fleming2024sensitivity}, where sensitivity measures the change in the output with respect to the Hamming distance under edge deletions~\cite{varma2023average}. 
Specifically, we reduce the maximum cut problem, where stability is defined with respect to edge deletions, to the max ones problem, where stability is measured with respect to weight changes.

\subsection{Related Work}

Lipschitz continuity of discrete algorithms is defined by Kumabe and Yoshida~\cite{KumabeY23}, and they provided Lipschitz continuous approximation algorithms for the minimum spanning tree, shortest path, and maximum weight matching problems.
In a subsequent work~\cite{KumabeY25}, they further provided such algorithms for the minimum weight vertex cover, minimum weight set cover, and minimum weight feedback vertex set problems.
In another work~\cite{KumabeY24}, they defined Lipschitz continuity for allocations in cooperative games and provided Lipschitz continuous allocation schemes for the matching game and the minimum spanning tree game.

A variant known as \emph{pointwise Lipschitz continuity} has also been studied, which is defined using the unweighted Hamming distance instead of the weighted Hamming distance.
Kumabe and Yoshida~\cite{KumabeY23} defined pointwise Lipschitz continuity and provided pointwise Lipschitz continuous algorithms for the minimum spanning tree and maximum weight bipartite matching problems.
Liu et al.~\cite{liu2024pointwise} proposed the \emph{proximal gradient method} as a general technique for solving LP relaxations stably. Using this, they provided pointwise Lipschitz continuous approximation algorithms for the minimum vertex $(S,T)$-cut, densest subgraph, maximum weight ($b$-)matching, and packing integer programming problems. 

(Average) sensitivity, introduced by Varma and Yoshida~\cite{varma2023average} with preliminary ideas by Murai and Yoshida~\cite{murai2019sensitivity}, is a notion similar to Lipschitz continuity. While Lipschitz continuity evaluates an algorithm's stability against unit changes in the input weights, (average) sensitivity evaluates an algorithm's stability against (random) deletions of elements from the input. 
Algorithms with small (average) sensitivity have been studied for several problems, such as maximum matching~\cite{varma2023average,YoshidaZ21}, minimum cut~\cite{varma2023average}, knapsack problem~\cite{KumabeY22}, Euclidean $k$-means~\cite{yoshida2022average}, spectral clustering~\cite{peng2020average}, and dynamic programming problems~\cite{kumabe2022average}.
Recently, Fleming and Yoshida~\cite{fleming2024sensitivity} constructed a PCP framework to prove the sensitivity lower bound for the constraint satisfaction problem.




\subsection{Organization}

The rest of this paper is organized as follows.  
In \Cref{sec:prelim}, we describe the necessary preliminaries on tree decomposition (\Cref{sec:prelim_decomposition,sec:HR-algebra}), logic (\Cref{sec:prelim_logic}), and Lipschitz continuity (\Cref{sec:prelim:lipschitz}).
In \Cref{sec:DP}, we prove \Cref{thm:alg_max} by providing a Lipschitz continuous algorithm for the \MSOt maximization problem. Since the algorithm and analysis for \MSOt minimization are similar, we defer the proof of \Cref{thm:alg_min} to \Cref{app:minimization}.  
In \Cref{sec:special_cases}, we give a more precise Lipschitzness analysis for specific formulas $\varphi$, which includes the proof of  \Cref{thm:alg_independentset}.
In \Cref{sec:planar}, we prove \Cref{thm:baker} by providing a Lipschitz continuous version of Baker's technique.  
In \Cref{sec:lower_bound}, we discuss a lower bound on the Lipschitz constant for general graphs, which justifies focusing on graphs with bounded treewidth.  
In \Cref{sec:DP_clique}, we prove \Cref{thm:alg_max_clique} by providing a Lipschitz continuous algorithm for the \MSOo maximization problem, while we defer the proof of \Cref{thm:alg_min_clique} to \Cref{app:minimization}.

\section{Preliminaries}\label{sec:prelim}

\subsection{Tree Decomposition}\label{sec:prelim_decomposition}

Let $G$ be a graph with $n$ vertices. 
A pair $(\mathcal{B},\mathcal T)$ consisting of a family $\mathcal{B}$ of subsets (called \emph{bags}) of $V(G)$ and a rooted tree $\mathcal T$ whose vertex set is $\mathcal{B}$ is a \emph{(rooted) tree decomposition} of $G$ if it satisfies the following three conditions.
\begin{itemize}
    \itemsep=0pt
    \item $\bigcup_{B\in \mathcal{B}}B = V(G)$.
    \item For each edge $e\in E(G)$, there is a bag $B\in \mathcal{B}$ such that $e\subseteq B$.
    \item For each vertex $v\in V(G)$, the set of bags $\{B\in \mathcal{B}\colon v\in B\}$ induces connected subgraph in $T$.
\end{itemize}
For a tree decomposition $(\mathcal B, \mathcal T)$, we may refer to the \emph{root node}, \emph{leaf nodes}, and the \emph{height} of $\mathcal T$ as those of $(\mathcal B, \mathcal T)$, respectively.
Moreover, $(\mathcal B, \mathcal T)$ is binary if $\mathcal T$ is a binary tree.
The \emph{width} of a tree decomposition $(\mathcal{B}, \mathcal{T})$ is the maximum size of a bag in $\mathcal{B}$ minus one.
The \emph{treewidth} of $G$ is the minimum possible width among all possible tree decompositions of $G$.
It is known that the tree decomposition of $G$ of width at most $2k+1$ can be computed in $2^{O(k)}n$ time~\cite{Korhonen21}, where $k$ is the treewidth of $G$.
Moreover, any tree decomposition of $G$ of width $k$ can be transformed into a binary tree decomposition of width at most $3k+2$ and height at most $O(\log n)$~\cite{BodlaenderH98}.
Thus, we obtain the following lemma.
\begin{lemma}[\cite{BodlaenderH98, Korhonen21}]\label{lem:td_boundeddiam}
Let $G$ be a graph with $n$ vertices and $k$ be the treewidth of $G$. 
Then, a binary tree decomposition of $G$ with width $O(k)$ and height $O(\log n)$ can be computed in $2^{O(k)} n$ time.
\end{lemma}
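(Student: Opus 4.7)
The proof is a straightforward two-step composition of the two ingredients already cited in the paragraph immediately preceding the lemma, so no new ideas are required; the plan is simply to chain them together and verify that the resulting parameters fit inside the claimed bounds.

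First, I would invoke Korhonen's algorithm~\cite{Korhonen21} on $G$ to produce, in $2^{O(k)} n$ time, a tree decomposition $(\mathcal{B}', \mathcal{T}')$ of $G$ of width at most $2k+1$. At this stage neither the height nor the maximum degree of $\mathcal{T}'$ is under control: $\mathcal{T}'$ could easily be a path of length $\Omega(n)$ or have a vertex of large degree, so $(\mathcal{B}', \mathcal{T}')$ does not yet satisfy the conclusion of the lemma.

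Second, I would feed $(\mathcal{B}', \mathcal{T}')$ into the Bodlaender--Hagerup transformation~\cite{BodlaenderH98}, which, as quoted above the lemma, converts any input tree decomposition of width $w$ into a \emph{binary} tree decomposition of width at most $3w+2$ and height $O(\log n)$. Substituting $w = 2k+1$ yields width at most $3(2k+1) + 2 = 6k+5 = O(k)$ and height $O(\log n)$, matching both structural requirements of the lemma.

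For the running time, Korhonen's step already costs $2^{O(k)} n$, and the Bodlaender--Hagerup transformation runs in time polynomial in $n$ with all dependence on the width folded into the constants; plugging in $w = 2k+1$ keeps the overall bound at $2^{O(k)} n$. The main (and really only) obstacle is routine bookkeeping: double-checking from the cited sources that the runtime of each stage is as claimed and that binarity, width, and height indeed compose multiplicatively as above. Since both statements are used as black boxes, there is no genuine mathematical difficulty.
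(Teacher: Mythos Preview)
Your proposal is correct and mirrors exactly what the paper does: the paper simply states the two cited ingredients (Korhonen's $2^{O(k)}n$-time approximation yielding width $\le 2k+1$, followed by the Bodlaender--Hagerup height-reduction to a binary decomposition of width $\le 3w+2$ and height $O(\log n)$) and then declares the lemma as an immediate consequence. There is no separate proof in the paper beyond this chaining, so your write-up is already more detailed than the original.
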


\subsection{HR-algebra}\label{sec:HR-algebra}

We introduce the notions of \HR-algebra\footnote{The acronym HR stands for hyperedge replacement~\cite{CourcelleE2012}.}, which is one of the algebraic definitions of tree decomposition. 
A \emph{\kgraph{$k$}} is a tuple $G =  \angles{V, E, \src}$ of a set of vertices $V$, a set of edges $E$, and a $k$-vector $\src \in \parens{V \cup \{\nil\}}^k$. We write the $i$-th element of a vector $\src$ by $\src(i)$.
If $\src(i) = u$, we say that $u$ is a \emph{$i$-source}.
We write the set $\{\src(i) : i \in [k]\}\setminus \{\nil\}$ by $\src(G)$.

Let $G$ and $H$ be \kgraphs{$k$}. 
The \emph{parallel-composition} of $G$ and $H$, denoted by $G \paracom H$, is the graph generated by the following procedure.
First, create the disjoint union of $G$ and $H$, and identify the (non-$\nil$) vertices $\src_G(i)$ and $\src_H(i)$ for each $i \in [k]$.
Finally, remove all the self-loops and multi-edges.
Let $B$ be a non-empty subset of $[k]$.
The \emph{source forgetting operation} $\fg_B$ is the function that maps a \kgraph{$k$} $G$ to a \kgraph{$k$} $G'$ such that $V_G = V_{G'}$, $E_G = E_{G'}$, and $\src_{G'}(i) = \nil$ if $i \in B$, and $\src_{G'}(i) = \src_G(i)$ otherwise.


\begin{definition}[\HR-algebra]
	A \emph{term} of a \HR-algebra over \kgraphs{{$k$}} is either 
	\begin{itemize}
        \itemsep=0pt
		\item a constant symbol $\Cons{i}$, $\Cons{ij}$, or $\emptyset$ denoting
          a $k$-graph $\langle \{v\},\emptyset,\src\rangle$ with $\src(i)=v$,
          a $k$-graph $\langle \{u,v\},\{\{u,v\}\},\src\rangle$ with $\src(i)=u$ and $\src(j)=v$, or
		 the empty graph, respectively;
		\item $t \paracom s$ for any terms $t$ and $s$;
		\item $\fg_B(t) $ for any term $t$ and any $B \subseteq [k]$.
	\end{itemize}
\end{definition}
We may associate a term with the graphs represented by it.
A term of an \HR-algebra $t$ can be decomposed into the parse tree in the usual way,
called the \emph{(rooted) \HR-parse tree} of $t$.
The \emph{height} of a \HR-parse tree is the maximum distance from the root to a leaf.
It is known that tree decompositions are equivalent to \HR-parse trees in the following sense. 
\begin{proposition}[see e.g. \cite{CourcelleE2012}]
	The treewidth of a graph is at most $k$ if and only if
		the graph can be denoted by a term of a \HR-algebra over \kgraphs{$(k+1)$}.
		\label{prop:equivalence-hr-tw}
\end{proposition}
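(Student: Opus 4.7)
The plan is to prove both directions by structural induction, using the natural correspondence between source vertices of a subterm and the bag at the associated node of a tree decomposition.

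For the ``if'' direction, suppose $G$ is denoted by a term $t$ over \kgraphs{$(k+1)$} with \HR-parse tree $\mathcal{T}$. For each node $u$ of $\mathcal{T}$ whose corresponding subterm denotes the $(k+1)$-graph $G_u$, I would define the bag $B_u := \src(G_u) \subseteq V(G)$ and take $\mathcal{T}$ itself as the decomposition tree. I would then verify the three axioms. Every vertex $v \in V(G)$ arises from some constant leaf $\Cons{i}$ at which $v$ is the $i$-source, so $v$ lies in that bag; similarly, every edge arises from some leaf $\Cons{ij}$ at which both endpoints are sources. For connectivity, I would observe that a vertex $v$ which is a source at some node remains a source under both parallel composition and under $\fg_B$ operations that do not forget its index, and once removed by an $\fg_B$ it cannot reappear as a source for the same vertex (any subsequent $i$-source introduced from a sibling subterm is identified only with a non-$\nil$ source of the current subgraph, so it represents a fresh vertex). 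Hence $\{u : v \in B_u\}$ is a connected subtree, and since $|\src(G_u)| \leq k+1$, the width is at most $k$.

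For the ``only if'' direction, given a tree decomposition of $G$ of width at most $k$, I would first convert it to a \emph{nice} tree decomposition $(\mathcal{B}, \mathcal{T})$ of the same width, so every internal node is one of introduce-vertex, introduce-edge, forget, or join. Next, in a top-down pass, I would assign an injective labeling $\pi_t : B_t \to [k+1]$ to each node $t$, propagating from parent to child so that at a join node both children inherit the same $\pi_t$, and at a forget node of a vertex $v$ the child receives $\pi_t$ extended by some fresh label for $v$ in $[k+1] \setminus \mathrm{image}(\pi_t)$ (which exists since $|B_t| \leq k$). In a subsequent bottom-up pass I would build an \HR-term $\tau_t$ denoting the subgraph induced by all vertices appearing in the subtree of $t$, with source vector given by $\pi_t$: leaves become $\emptyset$; introduce-vertex $v$ with $i = \pi_t(v)$ becomes $\tau_{t'} \paracom \Cons{i}$; introduce-edge $\{u,v\}$ becomes $\tau_{t'} \paracom \Cons{ij}$ with $i = \pi_{t'}(u)$ and $j = \pi_{t'}(v)$; forget of $v$ becomes $\fg_{\{\pi_{t'}(v)\}}(\tau_{t'})$; and a join becomes $\tau_{t_1} \paracom \tau_{t_2}$. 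Evaluating $\tau_t$ at the root (bag empty) then yields $G$ as a $(k+1)$-graph with all sources $\nil$.

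The main obstacle is the label-consistency requirement at join nodes: for $\tau_{t_1} \paracom \tau_{t_2}$ to correctly identify each shared vertex of $B_t$, the labelings used in the two subtrees must agree on $B_t$. This is precisely what the top-down labeling pass enforces by fixing $\pi_t$ at every join node before recursing into the subtrees. The fact that $k+1$ source labels suffice comes directly from $|B_t| \leq k+1$, which is why the algebra is taken over \kgraphs{$(k+1)$} rather than \kgraphs{$k$}; conversely, the same bound $|\src(G_u)| \leq k+1$ drives the width bound in the forward direction, so the dimension of the algebra is sharp.
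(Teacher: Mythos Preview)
Your proof is correct. The paper itself does not prove \Cref{prop:equivalence-hr-tw} --- it simply cites \cite{CourcelleE2012} --- and the closest construction it gives is the proof of \Cref{prop:height-equivalence}, which handles only the ``only if'' direction. That proof works directly with an arbitrary binary tree decomposition rather than first converting to a nice one: a leaf bag $B_v$ is turned into the term $\Cons{Graph}(G[B_v])$ (the parallel composition of all vertex and edge constants for $G[B_v]$), and an internal node with children $c_1,c_2$ becomes $\Cons{Graph}(G_\Delta)\paracom \fg_{B_{c_1}\setminus B_v}(t_{c_1})\paracom \fg_{B_{c_2}\setminus B_v}(t_{c_2})$, with source labels fixed by an arbitrary identification of each bag with $[k{+}1]$. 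Both constructions establish the same implication, but the paper's direct route preserves the height of the input decomposition up to an additive $O(\log k)$, which is precisely the content of \Cref{prop:height-equivalence} and what the Lipschitz analysis later relies on; your detour through a nice decomposition is conceptually cleaner (each node type maps to a single algebra operation) but forfeits this height control, since niceification can blow the depth up to $\Theta(n)$. Your argument for the ``if'' direction --- taking $B_u=\src(G_u)$ and verifying the tree-decomposition axioms --- is not spelled out anywhere in the paper.
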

Moreover, binary tree decomposition can be transformed into an \HR-parse tree with approximately the same height.
\begin{proposition}[$\apxmark$]
	Given a rooted binary tree decomposition $(\mathcal B, \mathcal T)$ of $G$ with width $k$ and height $h$,
	we can compute a \HR-parse tree over \kgraph{$(k+1)$} denoting $G$ with height $O(\log k + h)$.
	\label{prop:height-equivalence}
\end{proposition}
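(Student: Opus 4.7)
The plan is to process $(\mathcal{B},\mathcal{T})$ bottom-up and, for each TD node $t$, produce an \HR-term $T_t$ denoting a \kgraph{$(k+1)$} whose underlying graph is the subgraph of $G$ induced on the vertices appearing in the subtree of $\mathcal{T}$ rooted at $t$, with the vertices of $B_t$ as sources according to a fixed global label assignment. The additive form $O(\log k+h)$ will come from arranging the operations at each internal node so that the $O(1)$-depth composition of the recursively built children sits alongside a ``local'' subterm of depth $O(\log k)$ built \emph{in parallel}, so the $\log k$ cost is paid only once along each root-to-leaf path rather than once per TD level.

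The first step is to fix a source-label assignment $\lambda\colon V(G)\to[k+1]$. I take $\lambda$ to be a proper coloring of the \emph{filled graph} of $(\mathcal{B},\mathcal{T})$ (two vertices adjacent iff some bag contains both), which is chordal with clique number at most $k+1$ and is thus $(k+1)$-colorable in linear time via a perfect elimination ordering read off from $\mathcal{T}$. The only property of $\lambda$ that I use is injectivity on every bag $B_t$. For a leaf $t$, let $T_t$ be the balanced parallel composition of the constants $\Cons{i}$ for $i \in \lambda(B_t)$ and $\Cons{ij}$ for every edge $uv \in E(G)$ with $u,v \in B_t$ and $(i,j)=(\lambda(u),\lambda(v))$; this has height $O(\log k^2) = O(\log k)$. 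For an internal $t$ with child $t_1$ and possibly $t_2$, set
\[
T_t \;=\; L_t \paracom \fg_{F_1}(T_{t_1}) \paracom \fg_{F_2}(T_{t_2}),
\]
where $F_j = \lambda(B_{t_j}\setminus B_t)$ (omitting the corresponding $\fg$ when $F_j = \emptyset$), and where $L_t$ is the analogous balanced parallel composition encoding the vertices in $B_t\setminus(B_{t_1}\cup B_{t_2})$ and every edge of $G$ inside $B_t$. The three-way $\paracom$ is realized by two binary ones, and any duplicate edges are harmless since $\paracom$ discards multi-edges.

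For correctness, the interpolation property of tree decompositions gives $B_{t_1}\cap B_{t_2}\subseteq B_t$, so no vertex needs to be ``resurrected'' after being forgotten; combined with the injectivity of $\lambda$ on $B_t$, this makes the source labels present in $L_t$, $\fg_{F_1}(T_{t_1})$, and $\fg_{F_2}(T_{t_2})$ correspond to disjoint subsets of $B_t$ whose union is exactly $B_t$, so the parallel compositions merge the intended vertices (those in $B_{t_1} \cap B_{t_2}$) and introduce no accidental collisions. Writing $D(t)$ for the height of $T_t$, the construction yields $D(\text{leaf}) = O(\log k)$ and the recurrence $D(t) \le \max\bigl(O(\log k),\,D(t_1),\,D(t_2)\bigr) + O(1)$, which by a straightforward induction gives $D(t) = O(\log k + d(t))$ for $d(t)$ the height of the subtree of $\mathcal{T}$ at $t$; specializing to the root gives the claimed bound. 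The main technical point to get right is the labeling step: a naive per-node choice of labels would let $\paracom$ accidentally merge distinct vertices that happen to share a label, and the global proper coloring of the filled graph is precisely what rules this out, after which the remaining induction and depth analysis are routine.
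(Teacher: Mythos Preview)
Your argument is correct and follows the same bottom-up scheme as the paper: each tree-decomposition node contributes a balanced $O(\log k)$-height local term, parallel-composed with the forgotten children's terms, giving the recurrence $D(t)\le\max(O(\log k),D(t_1),D(t_2))+O(1)$ and hence height $O(\log k+h)$. Your device of fixing a global proper $(k{+}1)$-coloring of the filled graph to assign source labels is more explicit than the paper's treatment (which simply identifies each bag with $[k{+}1]$); one small slip is the phrase ``disjoint subsets of $B_t$'', since the sources of $L_t$ and of $\fg_{F_j}(T_{t_j})$ do overlap on $\lambda(B_{t_j}\cap B_t)$, and it is precisely this overlap that lets $\paracom$ identify the right vertices.
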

The proof of \cref{prop:height-equivalence} is essentially the same as the proof of \cref{prop:equivalence-hr-tw} but we give in \Cref{app:logics}.

\subsection{Monadic Second-Order Logic}\label{sec:prelim_logic}

We provide a slightly less formal definition of \emph{monadic second-order (\MSO) logic} over \kgraphs{$k$} for accessibility (see, e.g., \cite{Libkin04} for more formal definition).
%
An \emph{atomic formula} is a formula of the form $s=t$, $\adj(s,t)$, $t \in X$, and Boolean constants $\true$ and $\false$. 
Here, $\adj(s,t)$ is the predicate that represents whether $s$ and $t$ are adjacent.
A \emph{monadic second-order formula} over \kgraphs{$k$} is a formula built from atomic formulas, 
 the usual Boolean connectives $\land, \lor, \lnot, \to$,
 first-order quantifications $\exists x$ and $\forall x$ (quantifications over vertices), 
 and second-order quantifications $\exists X$ and $\forall X$ (quantification over sets of vertices). 
The notation $\varphi(X)$ indicates that $\varphi$ has exactly one free variable $X$. 
The \emph{quantifier height} (or \emph{quantifier rank}) of a formula $\varphi$, denoted by $|\varphi|$, is the maximum depth of quantifier nesting in $\varphi$ (see e.g.~\cite{Libkin04} for a formal definition). 
The \emph{counting monadic second-order (\CMSO{}{}) logic} is an expansion of \MSO logic that have 
  additional predicates $\Card_{m,r}(X)$ for a second-order variable $X$, meaning $|X| \equiv m \pmod r$.
A \CMSO{r}{q}-formula is a \CMSO{}{} formula $\varphi$ such that the
 quantifier height of $\varphi$ is at most $q$ and $r' \leq r$ holds for any predicate $\Card_{m,r'}$ in $\varphi$.
For a logical formula $\varphi$ and a graph $G$, we write $G \models \varphi$ when graph $G$ satisfies the property expressed by $\varphi$.

There are two variants of \MSO logic: \MSOo (or simply \MSO), which is described above, allows quantification over vertices and sets of vertices only, and \MSOt 
which allows quantification over edges and sets of edges as well.
It is well known that, for any graph $G$ and \MSOt-formula $\varphi$, there exists an \MSOo formula $\varphi'$ such that $G\models \varphi \iff G' \models \varphi'$, where $G'$ is the \emph{incidence graph}, obtained by subdividing each edge of $G$ (with two colors meaning the original and the subdivision vertices)~(see e.g.~\cite{Kreutzer11}). 
It is easy to see that the treewidth of the incidence graph is at most the treewidth of the original graph.
Thus, since this paper focuses on graphs with bounded treewidth, we mainly consider \MSOo.
However, all the theorems are held for \MSOt and graphs of bounded treewidth.
\newcommand{\frest}[1]{\mathord{{\restriction}_{#1}}}

We introduce some more notations.
For a set $V$, two families $\mathcal A \subseteq 2^V$ and $\mathcal B \subseteq 2^V$ are \emph{separated} 
  if $A \cap B = \varnothing$ for all $A \in \mathcal A$, and $B\in\mathcal B$.
We write $A \cup B$ by $A \dunion B$ if $A \cap B = \varnothing$,
  and  $\mathcal A \boxtimes \mathcal B \coloneqq \{A \cup B : A \in \mathcal A, B \in \mathcal B\}$ if $\mathcal A$ and $\mathcal B$ are separated.
For a formula $\varphi(X)$ and a graph $G$, the set of vertex sets satisfying $\varphi$ is denoted by $\sat(G, \varphi)$,
 that is, $\sat(G,\varphi) = \{ A : G \models \varphi(A)\}$.

The following theorem is a key of the \MSO model-checking algorithm~\cite{Courcelle90, CourcelleE2012}.
\begin{theorem}[\cite{CourcelleE2012}]\label{thm:mso-fg-split}
    Let $r, q, k$ be positive integers.
    Let $\varphi({X})$ be a \CMSO{r}{q}-formula over \kgraphs{$k$} with a second-order free variable ${X}$. Then, the following hold.
    \begin{enumerate}
    \item  For any $B \subseteq [k]$, there exists a \CMSO{r}{q}-formula $\psi({X})$ 
    such that for any \kgraph{$k$} $G$, we have 
    \[\sat(\fg_B(G), \varphi)  =  \sat(G, \psi).\]
    \item There exists a family of tuples
      $\braces{\theta_i({X}), \psi_i({X})}_{i \in [p]}$
      of \CMSO{r}{q}-formulas with a free variable ${X}$
      such that, for any \kgraphs{$k$} $G$ and $H$,
    \[
        \sat(G\paracom H, \varphi) =  \biguplus_{i \in [p]} 
         \{S \cup P : S \in \sat(G, \theta_i),  P \in \sat(H, \psi_i)\}.
    \]
    \end{enumerate}
\end{theorem}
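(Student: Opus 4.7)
The plan is to follow the standard Feferman--Vaught style composition argument underlying Courcelle's theorem, adapted to the \HR-algebra operations $\fg_B$ and $\paracom$. First, I would fix $r$, $q$, $k$ and introduce the notion of a \emph{$(q,r)$-type} of a pair $(G, A)$, where $G$ is a \kgraph{$k$} and $A \subseteq V(G)$: two such pairs share a type if they satisfy exactly the same \CMSO{r}{q}-formulas with one free set variable. A normal-form argument, bounding the syntactic complexity of formulas of quantifier rank at most $q$ and counting modulus at most $r$ up to logical equivalence, shows that only finitely many such types exist, and each type $\tau$ is captured by a characteristic formula $\chi_\tau(X)$ of rank $q$ and modulus $r$.

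For part 1, the key observation is that $\fg_B(G)$ and $G$ have identical vertex set, edge set, and source assignment outside $B$, differing only in that $\src(i) = \nil$ for $i \in B$. Therefore, $\psi(X)$ is obtained from $\varphi(X)$ by a purely syntactic rewriting: every atomic subformula referring to the $i$-th source for some $i \in B$ is replaced by $\false$, and equality and adjacency atoms involving such terms are simplified accordingly. This substitution preserves both the quantifier height and the counting modulus, yielding a \CMSO{r}{q}-formula $\psi$ with $\sat(\fg_B(G), \varphi) = \sat(G, \psi)$.

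For part 2, I would establish the composition lemma: the $(q,r)$-type of $(G \paracom H,\, S \cup P)$ is determined by the $(q,r)$-types of $(G, S)$ and $(H, P)$ together with the identification pattern of sources. The cleanest approach is via Ehrenfeucht--Fra\"{\i}ss\'{e} style games extended to handle the counting predicates $\Card_{m,r'}$: winning Duplicator strategies on the two components can be stitched together into a winning strategy on the parallel composition, and the counting quantifiers compose because $|S \cup P| = |S| + |P| - |S \cap P|$ with $S \cap P \subseteq \src(G) \cap \src(H)$, whose size is already recorded by the component types through the source membership pattern. Once this is in place, I would enumerate all pairs $(\tau_1, \tau_2)$ whose composite type satisfies $\varphi$ and set $(\theta_i, \psi_i) = (\chi_{\tau_1}, \chi_{\tau_2})$ for each such pair; disjointness of the union follows because each pair $(G, S), (H, P)$ falls into a unique $(\tau_1, \tau_2)$.

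The main obstacle will be the composition lemma itself, and in particular the interaction between source identification and the counting predicates. Source vertices are shared between $G$ and $H$, so the membership of each source in $S$ versus $P$ matters both for logical atoms involving source constants and for cardinality residues in $G \paracom H$; the type $\tau$ must therefore record, in addition to the usual quantifier-rank-$q$ information, the membership pattern of every source position in the distinguished set, and the composition step must verify that this augmented information suffices to recover the type of $(G \paracom H, S \cup P)$ from the component types.
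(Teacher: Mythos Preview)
The paper does not prove this statement at all: it is quoted verbatim as a known result from Courcelle and Engelfriet~\cite{CourcelleE2012} and used as a black box (the paper's own work begins with \Cref{cor:separeted-forget-paracom}, which is derived \emph{from} this theorem). So there is no ``paper's proof'' to compare against.

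That said, your outline is the standard Feferman--Vaught route used in that reference, and it is essentially correct. Two remarks. For part~1, be careful with the syntactic rewriting: in the signature used here the sources enter via terms $\src(i)$, and after forgetting we have $\src(i)=\nil$, which is not a vertex; so atoms like $x=\src(i)$ become $\false$, but atoms like $\src(i)=\src(j)$ with $i,j\in B$ need their own treatment. For part~2, the disjointness argument requires that you fix a canonical decomposition of $A\subseteq V(G\paracom H)$ into $(S,P)=(A\cap V(G),\,A\cap V(H))$; otherwise the same $A$ could arise from several $(S,P)$ with different type pairs at the shared sources. You already flag this in your last paragraph, and recording the source-membership pattern in the type is exactly the fix.
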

It is known that, for any $r,q,k$, up to logical equivalence, 
there are only finitely many different \CMSO{r}{q}-formulas $\varphi(X)$ over \kgraphs{$k$}~(see e.g., \cite{Libkin04}).
Thus, we can obtain a linear-time algorithm, based on dynamic programming over trees, for \CMSO{r}{q} model checking over bounded treewidth.
Courcelle and Mosbah~\cite{CourcelleM93} adjusted \cref{thm:mso-fg-split} to address optimization, counting, and other problems.
We use a slightly modified version of the theorem of Courcelle and Mosbah.

Let $\nosrc(X) \equiv \bigwedge_{i\in[k]}(\src(i)\not \in X)$, and 
$\varphi\frest{S}(X)$ denote $\varphi(X\cup S) \land \nosrc(X)$ for a set $S\subseteq \src(G)$. 
\begin{observation}\label{obs:src-splitting}
    Let $G$ be a \kgraph{$k$}, and $\varphi(X)$ be a formula with a free set variable $X$. 
    Then, we have
    $\sat(G,\varphi) = \biguplus_{S\subseteq \src(G)}\sat(G,\varphi \frest{S})\boxtimes \{S\}$.
\end{observation}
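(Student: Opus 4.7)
The plan is to prove both containments and then disjointness of the union, all by a direct unpacking of the definitions of $\varphi\frest{S}$, $\nosrc$, and $\boxtimes$.

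For the forward inclusion, I would take any $A \in \sat(G,\varphi)$ and split it according to its intersection with the source set. Concretely, set $S \coloneqq A \cap \src(G)$ and $X \coloneqq A \setminus \src(G)$. By construction $X$ is separated from $\src(G)$ so $G \models \nosrc(X)$, and $X \cup S = A$ so $G \models \varphi(X \cup S)$; thus $X \in \sat(G, \varphi\frest{S})$ and $A = X \cup S \in \sat(G, \varphi\frest{S}) \boxtimes \{S\}$. Since $S \subseteq \src(G)$, this puts $A$ on the right-hand side.

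For the reverse inclusion, I would take $S \subseteq \src(G)$ and any $X \in \sat(G, \varphi\frest{S})$, so by definition $G \models \varphi(X \cup S) \land \nosrc(X)$. The $\nosrc(X)$ clause ensures $X \cap \src(G) = \varnothing$, hence $X$ and $S$ are separated, so the element $X \cup S \in \sat(G,\varphi\frest{S}) \boxtimes \{S\}$ is well-defined and lies in $\sat(G,\varphi)$.

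Finally, for disjointness of the union over $S$, I would argue that each element $A = X \cup S$ of $\sat(G,\varphi\frest{S}) \boxtimes \{S\}$ satisfies $A \cap \src(G) = S$, because the $\nosrc(X)$ clause forces $X \cap \src(G) = \varnothing$ while $S \subseteq \src(G)$. Hence different choices of $S$ yield sets with different intersections with $\src(G)$ and cannot coincide, so the union is indeed disjoint and may be written with $\biguplus$.

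There is no genuine obstacle here; the statement is essentially a bookkeeping identity, and the only subtlety is to keep the role of $\nosrc(X)$ clear, since it both guarantees separatedness (needed for $\boxtimes$ to make sense) and pins down the correct index $S$ in the disjoint union.
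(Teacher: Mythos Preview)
Your argument is correct and is precisely the natural unpacking of the definitions of $\varphi\frest{S}$, $\nosrc$, and $\boxtimes$. The paper itself states this as an observation without proof, so there is no alternative approach to compare against; your write-up supplies exactly the routine verification the paper leaves implicit.
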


The proof of \cref{cor:separeted-forget-paracom} is given in \Cref{app:logics}

\begin{corollary}[$\apxmark$]\label{cor:separeted-forget-paracom}
    Let $r,q,k$ be positive integers.
    Let $\varphi({X})$ be a \CMSO{r}{q}-formula over \kgraphs{$k$} with a second-order free variables ${X}$. 
    Then, for all $S \subseteq \src(G) $, the following hold.
\begin{enumerate}
    \itemsep=0pt
  \item  For any $B \subseteq [k]$, there exists a \CMSO{r}{q}-formula $\psi({X})$ such that for any \kgraph{$k$} $G$, we have 
  \[
    \sat(\fg_B(G), \varphi\frest{S}) = \biguplus_{S'\subseteq B}\sat(G,\psi\frest{S\cup S'})\boxtimes \{S'\}.
  \]
  \item There exists a family of tuples
      $\braces{\theta_i\frest{S_i}({X}), \psi_i\frest{S'_i}({X})}_{i \in [p]}$
      of \CMSO{r}{q}-formulas with a free variable ${X}$
      such that, for any \kgraphs{$k$} $G$ and $H$,
  \[
    \sat(G\paracom H, \varphi \frest{S}) = \biguplus_{i\in [p]}\sat(G,\theta_i\frest{S_i})\boxtimes \sat(H,\psi_i\frest{S'_i}).
  \]
  Moreover, $S_i \cup S'_i = S$ for all $i\in[p]$. 
\end{enumerate}
\end{corollary}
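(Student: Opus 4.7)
The plan is to derive both parts of \cref{cor:separeted-forget-paracom} from \cref{thm:mso-fg-split} by using \cref{obs:src-splitting} to peel off the source-contribution of any satisfying set. The structural point underlying both parts is that $\varphi\frest{S}(X) = \varphi(X \cup S) \land \nosrc(X)$ forces $X$ to be disjoint from the current source set, so any sources of $G$ carried by $X$ must come from vertices that are no longer sources (after $\fg_B$ in part~1, or before identification into $G \paracom H$ in part~2). This lets me canonically split $X$ and translate statements about $\varphi$ on the composed graph into statements about auxiliary formulas on the parts.

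For part~1, I first apply \cref{thm:mso-fg-split}(1) to $\varphi$, obtaining a \CMSO{r}{q}-formula $\psi$ with $\sat(\fg_B(G), \varphi) = \sat(G, \psi)$. For any $X \in \sat(\fg_B(G), \varphi\frest{S})$, the clause $\nosrc(X)$ in $\fg_B(G)$ forbids only non-forgotten sources, so the sources of $G$ carried by $X$ lie in $\{\src_G(i) : i \in B\}$. Writing $X = X' \dunion S'$ with $X' \cap \src(G) = \varnothing$ and $S' \subseteq \{\src_G(i) : i \in B\}$ gives the unique such decomposition. Translating $\fg_B(G) \models \varphi(X \cup S)$ into $G \models \psi(X' \cup S \cup S')$ and recognising this as $X' \in \sat(G, \psi\frest{S \cup S'})$ then yields the claimed identity, with disjointness over $S'$ following because $S' = X \cap \src(G)$ is determined by $X$.

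For part~2, I apply \cref{thm:mso-fg-split}(2) to $\varphi$ to obtain a family $\{\theta_j^0, \psi_j^0\}_{j \in [p^0]}$ decomposing $\sat(G \paracom H, \varphi)$. Given $X \in \sat(G \paracom H, \varphi\frest{S})$, applying that decomposition to $X \cup S$ yields a unique $j$ and a unique splitting $X \cup S = T \cup P$ with $T \in \sat(G, \theta_j^0)$ and $P \in \sat(H, \psi_j^0)$. Because $X$ avoids every source of $G \paracom H$, the split is forced to $T = X_G \dunion S_G$ and $P = X_H \dunion S_H$, where $X_G, X_H$ and $S_G, S_H$ are the restrictions of $X$ and $S$ to $V(G)$ and $V(H)$, and they satisfy $S_G \cup S_H = S$ in $G \paracom H$. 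Applying \cref{obs:src-splitting} to $(G, \theta_j^0)$ and $(H, \psi_j^0)$ then peels off $S_G$ and $S_H$ to give $X_G \in \sat(G, \theta_j^0 \frest{S_G})$ and $X_H \in \sat(H, \psi_j^0 \frest{S_H})$. Taking $\{(\theta_j^0 \frest{S_G}, \psi_j^0 \frest{S_H})\}_{j \in [p^0]}$ as the family (indexed by the same index set of $S$ on both sides, so that $S_G, S_H$ are canonically determined by $S$ independently of the concrete graphs) gives the desired decomposition.

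The main source of care is the book-keeping around source identifications in $G \paracom H$ and the distinction between $\src(G)$ and $\src(\fg_B(G))$: I must verify that the canonical splitting of $X$ is genuinely unique and respects the separated-union structure encoded by $\boxtimes$, so that the outer $\biguplus$ is indeed disjoint. This is routine but has to be spelled out to avoid double-counting when $S$ itself contains forgotten or identified sources.
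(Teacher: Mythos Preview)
Your treatment of part~1 is essentially the paper's argument: apply \cref{thm:mso-fg-split}(1) to $\varphi$, split any $X \in \sat(\fg_B(G),\varphi\frest{S})$ according to its intersection with the forgotten sources, and verify both inclusions. That part is fine.

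Part~2 has a genuine gap. After applying \cref{thm:mso-fg-split}(2) to $X \cup S$ you get a unique index $j$, but \emph{not} a unique pair $(T,P)$ with $X \cup S = T \cup P$: the $\biguplus$ in \cref{thm:mso-fg-split}(2) only asserts disjointness across $j$, not injectivity of $(T,P)\mapsto T\cup P$ within a single $j$. Concretely, the non-source parts of $T$ and $P$ are indeed forced (since $V(G)\setminus\src(G)$ and $V(H)\setminus\src(H)$ are disjoint in $G\paracom H$), but each source $s \in S$ that arises from identifying $\src_G(i)$ with $\src_H(i)$ may lie in $T$ only, in $P$ only, or in both. So your sentence ``the split is forced to $T = X_G \dunion S_G$ and $P = X_H \dunion S_H$'' is false, and consequently the family $\{(\theta_j^0\frest{S_G},\psi_j^0\frest{S_H})\}_{j\in[p^0]}$ you propose is too small: it misses members of $\sat(G\paracom H,\varphi\frest{S})$ whose witnesses $(T,P)$ distribute $S$ differently.

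The paper's proof handles this by expanding with \cref{obs:src-splitting} on both sides \emph{before} fixing the split, obtaining a union over \emph{all} pairs $(S',S'')$ with $S'\cup S'' = S$, $S'\subseteq\src(G)$, $S''\subseteq\src(H)$, and then re-indexing so that the final family is indexed by $(j,S',S'')$ rather than by $j$ alone. This is exactly the condition $S_i \cup S'_i = S$ in the corollary's statement, and it is what your argument is missing. A second point you do not address: the union so obtained is a priori only a $\bigcup$, and the paper invokes a standard disjointification argument (the method of \cite[Proposition~5.37]{CourcelleE2012}) to upgrade it to $\biguplus$; your proposal asserts disjointness but gives no mechanism for it.
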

Then, we can design an algorithm for an (\ClassNameFont{C})\MSO maximization (minimization) problem over graphs of bounded treewidth.
For simplicity, we assume that the given graph $G$ has no sources, that is, $\src(G) = \varnothing$.
Let $t$ be a term denoting $G_t$ and $\varphi(X)$ be the (\ClassNameFont{C})\MSO-formula describing the constraint of the problem.
We recursively compute the value $\OPT(t',\varphi'\frest{S}) = \max\{w(A) : G \models \varphi'\frest{S}(A)\}$ for any subterm $t'$ of $t$ and \MSO-formula $\varphi'\frest{S}$ as follows,
 where $w\in \mathbb{R}_{\geq 0}^{V}$ is the given weight vector and $w(X) = \sum_{x \in X}w_x$ for any $X \subseteq V(G)$.
If $t'$ is of the form $\fg_B(t'')$, then $\OPT(\fg_B(t''), \varphi'\frest{S}) = \max_{S' \subseteq B}\{\OPT(t'', \psi\frest{S\cup S'}) + w(S')\}$,
  where, $\psi$ is the formula obtained from \cref{cor:separeted-forget-paracom}.
If $t'$ is of the form $t_1 \paracom t_2$, then $\OPT(t', \varphi'\frest{S}) = \max_{i\in[p]}\{\OPT(t_1, \theta\frest{S_i}) + \OPT(t_2, \psi\frest{S'_i})\}$,
  where, $\braces{\theta_i\frest{S_i}({X}), \psi_i\frest{S'_i}({X})}_{i \in [p]}$ are the formulas obtained from \cref{cor:separeted-forget-paracom}.
Then, $\OPT(t, \varphi\frest{\varnothing})$ is the maximum value for the problem since $t$ has no sources.
In the following, for simplicity, we may omit the $\frest{S}$ notation and consider only formulas $\varphi$ such that $\sat(G, \varphi)$ does not contain any sources.

In the above setting, the graphs are considered without any special vertex sets (called as colors) or special vertices (called as labels).
However, for $k$-graphs with a constant number of colors and/or labels, an appropriate \HR-algebra and logics can be defined, and the similar results for \cref{cor:separeted-forget-paracom} hold~(see e.g. \cite{CourcelleE2012,Libkin04}).

\subsection{Lipschitz Continuity}\label{sec:prelim:lipschitz}

We formally define Lipschitz continuity of algorithms.
As this is sufficient for this work, we only consider algorithms for vertex-weighted graph problems.
The definition can be naturally extended to other settings, such as edge-weighted problems.
Let $G=(V,E)$ be a graph.
For vertex sets $X,X'\subseteq V$ and weight vectors $w,w' \in \mathbb R_{\geq 0}^V$, we define the \emph{weighted Hamming distance} between $(X,w)$ and $(X',w')$ by
\begin{align*}
    d_{\mathrm{w}}((X,w),(X',w')) 
    := \left\|\sum_{v\in X}\bm{1}_vw_v - \sum_{v\in X'}\bm{1}_vw'_v\right\|_1 
    = \sum_{v\in X\cap X'}|w_v - w'_v|+ \sum_{v\in X\setminus X'}w_v + \sum_{v\in X'\setminus X}w'_v,
\end{align*}
where $\bm{1}_v\in \{0,1\}^V$ denotes the \emph{characteristic vector} of $v$, that is, the vector $\bm{1}_{v,u}=1$ holds if and only if $u=v$.
For two probability distributions $\mathcal{X}$ and $\mathcal{X}'$ over subsets of $V$, we define
\begin{align*}
    \EM\left((\mathcal{X},w),(\mathcal{X}',w')\right)
    := \inf_{\mathcal{D}}\E_{(X,X')\sim \mathcal{D}}\left[d_{\mathrm{w}}((X,w),(X',w'))\right],
\end{align*}
where the minimum is taken over \emph{couplings} of $\mathcal{X}$ and $\mathcal{X}'$, that is, distributions over pairs of sets such that its marginal distributions on the first and second coordinates are equal to $\mathcal{X}$ and $\mathcal{X}'$, respectively.
Consider an algorithm $\mathcal{A}$, that takes a graph $G=(V,E)$ and a weight vector $w\in \mathbb{R}_{\geq 0}^{V}$ as an input and outputs a vertex subset $X\subseteq V$.
We denote the output distribution of $\mathcal{A}$ for weight $w$ as $\mathcal{A}(w)$.
The \emph{Lipschitz constant} of $\mathcal{A}$ is defined by
\begin{align*}
    \sup_{\substack{w,w'\in \mathbb{R}_{\geq 0},\\w\neq w'}}\frac{\EM\left((\mathcal{A}(w),w),(\mathcal{A}(w'),w')\right)}{\|w-w'\|_1}.
\end{align*}

For two random variables $\bm{X}$ and $\bm{X}'$, the \emph{total variation distance} between them is given as:
\begin{align*}
    \TV\left(\bm{X},\bm{X}'\right):=\inf_{\mathcal{D}}\Pr_{(X,X')\sim \mathcal{D}} \left[X\neq X'\right],
\end{align*}
where the minimum is taken over couplings between $\bm{X}$ and $\bm{X}'$, that is, distributions over pairs such that its marginal distributions on the first and the second coordinates are equal to $\bm{X}$ and $\bm{X}'$, respectively.
For an element $u \in V$, we use $\bm{1}_u\in \mathbb{R}_{\geq 0}^V$ to denote the \emph{characteristic vector} of $u$, that is, $\bm{1}_u(u) = 1$ and $\bm{1}_u(v) = 0$ for $v \in V \setminus \{u\}$.
The following lemma indicates that, to bound the Lipschitz constant, it suffices to consider pairs of weight vectors that differ by one coordinate.
\begin{lemma}[\rm{\cite{KumabeY23}}]\label{lem:seeoneelement}
Suppose that there exist some $c>0$ and $L>0$ such that
\[
    \EM\left((\mathcal{A}(G,w),w), (\mathcal{A}(G,w+\delta \mathbf{1}_u),w+\delta \mathbf{1}_u)\right)\leq \delta L
\]
holds for all $w\in \mathbb{R}_{\geq 0}^V$, $u\in V$ and $0<\delta\leq c$. Then, $\mathcal{A}$ is $L$-Lipschitz.
\end{lemma}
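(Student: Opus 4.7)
The plan is to chain the single-coordinate guarantee along a finite sequence of weight vectors using the triangle inequality of the EM distance. The argument has three ingredients: the metric structure of $d_{\mathrm{w}}$, the corresponding triangle inequality for $\EM$, and a telescoping decomposition of $w' - w$ into many small single-coordinate perturbations.

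First, I would observe that $d_{\mathrm{w}}$ is a (pseudo-)metric on the space of pairs $(X,w)$. Symmetry is immediate, and the triangle inequality follows from the identity $d_{\mathrm{w}}((X,w),(X',w')) = \bigl\|\sum_{v} \mathbf{1}_v w_v \mathbf{1}[v\in X] - \sum_{v} \mathbf{1}_v w'_v \mathbf{1}[v\in X']\bigr\|_1$, which expresses $d_{\mathrm{w}}$ as an $\ell_1$ distance between feature vectors. Via the standard gluing lemma, this lifts to a triangle inequality for the EM distance: for any three pairs $(\mathcal{X}_i,w_i)$, one has $\EM((\mathcal{X}_1,w_1),(\mathcal{X}_3,w_3)) \leq \EM((\mathcal{X}_1,w_1),(\mathcal{X}_2,w_2)) + \EM((\mathcal{X}_2,w_2),(\mathcal{X}_3,w_3))$, obtained by gluing an optimal coupling of $\mathcal{X}_1,\mathcal{X}_2$ with one of $\mathcal{X}_2,\mathcal{X}_3$ and marginalising.

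Second, given $w,w' \in \mathbb{R}_{\geq 0}^V$, I construct a chain $w = w^{(0)}, w^{(1)}, \ldots, w^{(m)} = w'$ of nonnegative weight vectors, where each consecutive pair differs in exactly one coordinate by a signed amount of magnitude at most $c$. Concretely, I enumerate $V = \{v_1,\ldots,v_n\}$ and, coordinate by coordinate, interpolate $w(v_i) \to w'(v_i)$ in steps of size at most $c$ — taking $\lceil |w(v_i)-w'(v_i)|/c\rceil$ such steps whose magnitudes $\delta_{i,j} \in (0,c]$ sum to $|w(v_i)-w'(v_i)|$. For steps that decrease a coordinate, I apply the hypothesis in reverse: if $w^{(t+1)} = w^{(t)} - \delta \mathbf{1}_u$ with $0 < \delta \leq c$, set $\tilde{w} := w^{(t+1)}$ and observe $w^{(t)} = \tilde{w} + \delta \mathbf{1}_u$, so the hypothesis yields $\EM((\mathcal{A}(G,w^{(t)}),w^{(t)}), (\mathcal{A}(G,w^{(t+1)}),w^{(t+1)})) \leq \delta L$ by symmetry of $\EM$.

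Third, applying the triangle inequality of $\EM$ along the chain and summing the bound $\EM(\cdot,\cdot) \leq \delta_t L$ obtained at each step,
\[
    \EM\bigl((\mathcal{A}(G,w),w),(\mathcal{A}(G,w'),w')\bigr) \leq \sum_{t=0}^{m-1} \delta_t L = L \sum_{i=1}^n \sum_j \delta_{i,j} = L \sum_{i=1}^n |w(v_i) - w'(v_i)| = L\|w-w'\|_1,
\]
which is the desired $L$-Lipschitz bound. The only subtle point is confirming that $\EM$ genuinely obeys the triangle inequality in the present setting, where the second coordinate (the weight vector) varies along the chain — handled once $d_{\mathrm{w}}$ is recognised as a metric on pairs and the gluing lemma is invoked; the rest is routine telescoping.
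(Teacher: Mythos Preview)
Your proposal is correct and is the standard argument for this kind of statement. Note, however, that the paper does not actually prove this lemma: it is stated as a cited result from \cite{KumabeY23} with no proof given, so there is nothing in the paper to compare your argument against.
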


In our algorithm, we use the following procedures \emph{softmax} and \emph{softmin}.
These procedures are derived from the \emph{exponential mechanism} in the literature of \emph{differential privacy}~\cite{Dwork06} and frequently appear in the literature on Lipschitz continuity~\cite{KumabeY23,KumabeY25}.
Here, we organize them into a more convenient form for our use.
Let $p\in \mathbb{Z}_{\geq 1}$, $x_1,\dots, x_p\in \mathbb{R}_{\geq 0}$, and $\epsilon \in (0,1]$.
The \emph{softmax} of $x_1,\dots, x_p$ is taken as follows.
If $\max_{i\in [p]}x_i=0$, we set $\softargmax^{\epsilon}_{i\in [p]}x_i$ to be an arbitrary index $i$ with $x_i=0$ and $\softmax^{\epsilon}_{i\in [p]}x_i=0$.
Assume otherwise.
First, we sample $\bm{c}$ uniformly from $\left[\frac{2\epsilon^{-1}\log(2p\epsilon^{-1})}{\max_{i\in [p]}x_i},\frac{4\epsilon^{-1}\log(2p\epsilon^{-1})}{\max_{i\in [p]}x_i}\right]$.
Let $\bm{i}^*$ be a probability distribution over $[p]$ such that
\begin{align*}
    \Pr\left[\bm{i}^*=i\right]=\frac{\exp(\bm{c}x_i)}{\sum_{i'\in [p]}\exp(\bm{c}x_{i'})}.
\end{align*}
holds for all $i\in [p]$.
Then, we define $\softargmax^{\epsilon}_{i\in [p]}x_i:=\bm{i}^*$ and $\softmax^{\epsilon}_{i\in [p]}x_i:=x_{\bm{i}^*}$.
We have the following.
The proofs of \Cref{lem:softmax_approx,lem:softmax_tv} are given in \Cref{app:softmax}.
\begin{lemma}\label{lem:softmax_approx}
We have $\E\left[\softmax^{\epsilon}_{i\in [p]}x_i\right]\geq (1-\epsilon)\max_{i\in [p]}x_i$.
\end{lemma}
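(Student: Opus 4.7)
The plan is to follow the standard exponential-mechanism argument, adapted to the specific scaling used by the paper. First I would dispose of the degenerate case where $M := \max_{i\in[p]} x_i = 0$: by definition $\softmax$ returns $0$, matching the right-hand side. So assume $M > 0$ and write $A := \frac{2\epsilon^{-1}\log(2p\epsilon^{-1})}{M}$, so that the random scale $\bm{c}$ is supported on $[A, 2A]$ and in particular always satisfies $\bm{c} \geq A$.

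The key step is to introduce the threshold set $S := \{i \in [p] : x_i < (1-\epsilon/2)M\}$ and bound the probability that $\bm{i}^*$ lands in $S$. Fix any $\bm{c}$ in the support. Using $\sum_j \exp(\bm{c}x_j) \geq \exp(\bm{c}M)$, I would derive, for every $i \in S$,
\[
    \Pr[\bm{i}^* = i \mid \bm{c}] \;\leq\; \exp\bigl(\bm{c}(x_i - M)\bigr) \;<\; \exp\!\left(-\tfrac{\epsilon}{2}\bm{c}M\right) \;\leq\; \exp\!\left(-\tfrac{\epsilon}{2}AM\right) \;=\; \frac{\epsilon}{2p},
\]
where the last equality is exactly the calibration built into $A$. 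A union bound over the at most $p$ elements of $S$ then yields $\Pr[\bm{i}^* \in S \mid \bm{c}] \leq \epsilon/2$, and since this holds for every $\bm{c}$ in the support, also $\Pr[\bm{i}^* \in S] \leq \epsilon/2$.

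Finally, because all $x_i \geq 0$, I would discard the contribution of $i \in S$ entirely and use $x_i \geq (1-\epsilon/2)M$ for $i \notin S$ to obtain
\[
    \E\!\left[\softmax^{\epsilon}_{i\in[p]} x_i\right] \;\geq\; (1-\epsilon/2)M\cdot \Pr[\bm{i}^* \notin S] \;\geq\; (1-\epsilon/2)^2 M \;\geq\; (1-\epsilon)M,
\]
which is the claimed bound. No step here is particularly technical; the only thing to be careful about is choosing the gap threshold ($\epsilon/2$ rather than $\epsilon$) so that the two factors of $1-\epsilon/2$ multiply to at least $1-\epsilon$ and, simultaneously, the constant $2$ hidden in the definition of $A$ is precisely what is needed to drive the tail probability below $\epsilon/2$. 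I do not expect a genuine obstacle; the distribution of $\bm{c}$ (uniform on $[A,2A]$) plays no real role in this lemma and only becomes relevant for the total-variation estimate of \Cref{lem:softmax_tv}.
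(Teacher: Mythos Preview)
Your proof is correct and follows essentially the same approach as the paper: both threshold at $(1-\epsilon/2)M$, bound the probability of falling below it by $\epsilon/2$ via the lower end $\bm{c}\geq A$ of the scale (yielding the $\frac{\epsilon}{2p}$ per-index bound and a union bound), and finish with $(1-\epsilon/2)^2\geq 1-\epsilon$. Your explicit handling of the degenerate case $M=0$ and your remark that only the lower bound on $\bm{c}$ matters are minor expository additions, not a different route.
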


\begin{lemma}\label{lem:softmax_tv}
Let $\delta > 0$ and assume $\delta \leq \frac{\max_{i\in [p]}x_i}{4\epsilon^{-1}\log(2p\epsilon^{-1})}$.
Let $x_1',\dots, x_p'\in \mathbb{R}_{\geq 0}$ be numbers such that $x_i\leq x_i' \leq x_i + \delta$ holds for all $i\in [p]$.
Then, $\TV\left(\softargmax^{\epsilon}_{i\in [p]}x_i, \softargmax^{\epsilon}_{i\in [p]}x_i'\right)\leq \frac{10\epsilon^{-1}\log(2p\epsilon^{-1})\delta}{\max_{i\in [p]}x_i}$.
\end{lemma}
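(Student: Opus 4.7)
The plan is to couple the randomness of $\softargmax^{\epsilon}$ in two layers -- first the uniform scale $\bm{c}$, then the categorical draw given the scale -- and to bound the total variation contributed at each layer. Write $M=\max_i x_i$, $M'=\max_i x_i'$, and $L=2\epsilon^{-1}\log(2p\epsilon^{-1})$, so that $\bm{c}$ is uniform on $[L/M,2L/M]$ while $\bm{c}'$ is uniform on $[L/M',2L/M']$. The hypothesis $\delta\leq M/(2L)$ gives $M\leq M'\leq M+\delta\leq 2M$, so the two intervals overlap in $[L/M,2L/M']$ and $c\delta\leq (2L/M)\cdot(M/(2L))=1$ for every $c\in[L/M,2L/M]$; the case $M=0$ forces $\delta=0$ and is trivial, so I assume $M>0$.

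First I would couple $\bm{c}$ and $\bm{c}'$ via the maximal coupling. A direct computation with the two uniform densities gives $\Pr[\bm{c}=\bm{c}']=(2L/M'-L/M)/(L/M)=2M/M'-1\geq 1-2\delta/M$, and on this event the common scale is uniform on $[L/M,2L/M']$. Conditional on $\bm{c}=\bm{c}'=c$, the remaining task is to compare the two categorical laws $\mu(i)\propto e^{c x_i}$ and $\mu'(i)\propto e^{c x_i'}$. Because $0\leq x_i'-x_i\leq\delta$, one has $Z'/Z=\sum_j\mu(j)e^{c(x_j'-x_j)}\in[1,e^{c\delta}]$, whence $\mu'(i)/\mu(i)=e^{c(x_i'-x_i)}\cdot Z/Z'\in[e^{-c\delta},e^{c\delta}]$, and therefore $\TV(\mu,\mu')\leq (e^{c\delta}-1)/2\leq \tfrac{e}{2}c\delta$ since $c\delta\leq 1$.

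Combining the two layers by conditioning, the overall TV is at most $\Pr[\bm{c}\neq\bm{c}']+\E[\TV(\mu,\mu')\mid\bm{c}=\bm{c}']\leq 2\delta/M+\tfrac{e}{2}\delta\cdot\E[c\mid\bm{c}=\bm{c}']\leq 2\delta/M+\tfrac{e}{2}\delta\cdot\tfrac{3L}{2M}=O(L\delta/M)$, using that the conditional law of $c$ is uniform on the subinterval $[L/M,2L/M']$ of $[L/M,2L/M]$. Tracking the constants and using $L\geq 2\log 2$ to absorb the $2\delta/M$ term into $L\delta/M$ then yields the stated bound $10\epsilon^{-1}\log(2p\epsilon^{-1})\delta/M$. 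The main obstacle is really only the bookkeeping: the substantive step is the exponential-mechanism stability estimate conditional on the scale, and one must exploit the one-sided perturbation $x_i\leq x_i'$ to keep $\log(Z'/Z)$ in $[0,c\delta]$ rather than $[-c\delta,c\delta]$, saving a factor of two that is needed to land at the constant $10$.
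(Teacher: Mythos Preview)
Your proof is correct and follows essentially the same two-layer coupling as the paper: first couple the uniform scale parameters $\bm{c},\bm{c}'$ (contributing at most $2\delta/M$), then bound the conditional TV of the exponential-mechanism categorical laws given a common scale $c$. The only cosmetic differences are that the paper takes $\sup_{c}$ rather than $\E_{c}$ in the second layer and bounds the conditional TV by $e^{c\delta}-1\leq 2c\delta$ (using the one-sided formula $\sum_i(\mu'(i)-\mu(i))_+$ directly) instead of your $(e^{c\delta}-1)/2\leq(e/2)c\delta$, arriving at $2\delta/M+8\epsilon^{-1}\log(2p\epsilon^{-1})\delta/M$.
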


Similarly, the \emph{softmin} of $x_1,\dots, x_p$ is taken as follows.
If $\min_{i\in [p]}x_i=0$, we set $\softargmin^{\epsilon}_{i\in [p]}x_i$ be an arbitrary index $i$ with $x_i=0$ and $\softmin^{\epsilon}_{i\in [p]}x_i=0$.
Assume otherwise.
First, we sample $\bm{c}$ uniformly from $\left[\frac{2\epsilon^{-1}\log(2p\epsilon^{-1})}{\min_{i\in [p]}x_i},\frac{4\epsilon^{-1}\log(2p\epsilon^{-1})}{\min_{i\in [p]}x_i}\right]$.
Let $\bm{i}^*$ be a probability distribution over $[p]$ such that
\begin{align*}
    \Pr\left[\bm{i}^*=i\right]=\frac{\exp(-\bm{c}x_i)}{\sum_{i'\in [p]}\exp(-\bm{c}x_{i'})}.
\end{align*}
holds for all $i\in [p]$.
Then, we define $\softargmin^{\epsilon}_{i\in [p]}x_i:=\bm{i}^*$ and $\softmin^{\epsilon}_{i\in [p]}x_i:=x_{\bm{i}^*}$.
We have the following.
The proofs of \Cref{lem:softmin_approx,lem:softmin_tv} are given in \Cref{app:softmax}.
\begin{lemma}\label{lem:softmin_approx}
We have $\E\left[\softmin^{\epsilon}_{i\in [p]}x_i\right]\leq (1+\epsilon)\min_{i\in [p]}x_i$.
\end{lemma}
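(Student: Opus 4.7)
The plan is to mirror the (expected) analysis of \cref{lem:softmax_approx}. The degenerate case is immediate: if $\min_i x_i = 0$, the definition sets the output to $0$, and $0 \le (1+\epsilon)\cdot 0$ holds trivially. So from now on I assume $m := \min_i x_i > 0$ and write $\alpha := 2\epsilon^{-1}\log(2p\epsilon^{-1})$, so that $\bm c$ is uniform on $[\alpha/m,\, 2\alpha/m]$. By the tower rule, it suffices to prove $\E[x_{\bm i^*} \mid \bm c = c] \le (1+\epsilon)m$ for every fixed $c$ in that interval; moreover, the case $p=1$ is trivial because there is only one value to return.

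Shift by $m$ and write $y_i := x_i - m \ge 0$, noting that at least one $y_i = 0$ (achieved by an argmin of $x$), so the denominator $\sum_j e^{-c y_j}$ is at least $1$. The conditional expectation becomes $m + \sum_i y_i \Pr[\bm i^* = i \mid \bm c = c]$, and I need to bound the latter sum by $\epsilon m$. Partition $[p]$ into $G = \{i : y_i \le \epsilon m / 2\}$ and $B = \{i : y_i > \epsilon m / 2\}$. The indices in $G$ contribute at most $(\epsilon m/2)\cdot \Pr[\bm i^* \in G] \le \epsilon m/2$ to the sum, so only $B$ requires real work.

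For $B$, the crucial observation is that $g(y) := y e^{-cy}$ is decreasing on $[1/c,\infty)$, and the chosen range of $\bm c$ forces $c\cdot \epsilon m/2 \ge \log(2p\epsilon^{-1}) \ge 1$ (for $p\ge 2$), so every $y_i > \epsilon m/2 \ge 1/c$ satisfies
$y_i \Pr[\bm i^* = i \mid \bm c = c] \le y_i e^{-c y_i} = g(y_i) \le g(\epsilon m/2) \le (\epsilon m/2)\cdot \epsilon/(2p)$,
using the denominator lower bound and monotonicity of $g$. Summing over at most $p$ bad indices yields a contribution of at most $\epsilon^2 m/4$. Combining, $\E[x_{\bm i^*}\mid \bm c = c] \le m + \epsilon m/2 + \epsilon^2 m/4 \le (1+\epsilon)m$ for $\epsilon\in(0,1]$.

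The main obstacle I anticipate is not conceptual but bookkeeping: a naive threshold of $(1+\epsilon)m$ leaves an additive $O(\epsilon^2 m)$ slack that has to be folded back into the $(1+\epsilon)$ factor, and the cleanest fix is the asymmetric threshold $\epsilon m/2$ on $y_i$ together with $\epsilon^2 \le \epsilon$ for $\epsilon \in (0,1]$, which makes everything fit without perturbing the constants defining $\alpha$ or $\bm c$. Since the argument reuses only the exponential-mechanism analysis that underlies \cref{lem:softmax_approx}, I expect the write-up in \Cref{app:softmax} to closely parallel the softmax proof, with the sign of the exponent flipped and the normalisation against $\min$ rather than $\max$.
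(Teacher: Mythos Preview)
Your proof is correct, but it follows a different route from the paper's. The paper proves a tail bound of the form $\Pr[x_{\bm i^*}\ge(1+t\epsilon)m]\le(\epsilon/2)^{2t}$ for $t\ge\tfrac12$ (using $c\epsilon m\ge 2\log(2p\epsilon^{-1})$) and then integrates this tail to control the expectation. You instead fix $c$, shift by $m$, split indices into ``good'' ($y_i\le\epsilon m/2$) and ``bad'' ($y_i>\epsilon m/2$), and bound the bad contribution via the monotonicity of $y\mapsto ye^{-cy}$ on $[1/c,\infty)$ together with the denominator bound $\sum_j e^{-cy_j}\ge1$. Both arguments hinge on the same parameter choice $c\ge 2\epsilon^{-1}\log(2p\epsilon^{-1})/m$; yours is more elementary and sidesteps the integration bookkeeping, while the paper's approach yields an explicit exponential tail bound that could in principle be reused elsewhere. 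Your separate handling of $p=1$ is needed exactly where you flag it, since $\log(2p\epsilon^{-1})\ge1$ can fail for $p=1$ and $\epsilon$ close to $1$.
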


\begin{lemma}\label{lem:softmin_tv}
Let $\delta > 0$ and assume $\delta \leq \frac{\min_{i\in [p]}x_i}{4\epsilon^{-1}\log(2p\epsilon^{-1})}$.
Let $x_1',\dots, x_p'\in \mathbb{R}_{\geq 0}$ be numbers such that $x_i\leq x_i' \leq x_i + \delta$ holds for all $i\in [p]$.
Then, $\TV\left(\softargmin^{\epsilon}_{i\in [p]}x_i, \softargmin^{\epsilon}_{i\in [p]}x_i'\right)\leq \frac{10\epsilon^{-1}\log(2p\epsilon^{-1})\delta}{\min_{i\in [p]}x_i}$.
\end{lemma}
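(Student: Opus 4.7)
The plan is to mirror the proof of \Cref{lem:softmax_tv} with the sign changes appropriate for softmin, namely that the exponent is $\exp(-\bm{c} x_i)$ instead of $\exp(\bm{c} x_i)$ and that the sampling interval of $\bm{c}$ is scaled by $m := \min_{i \in [p]} x_i$ in the denominator. Set $A := \epsilon^{-1} \log(2p\epsilon^{-1})$ and $m' := \min_{i \in [p]} x'_i$; the hypotheses give $m > 0$ and $m \leq m' \leq m + \delta$, and any $c$ in the sampling interval satisfies $c\delta \leq 4A\delta/m \leq 1$.

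First I would bound the TV distance between the two laws of $\bm{c}$, which are uniform on $[2A/m, 4A/m]$ and $[2A/m', 4A/m']$. Since $m' \leq m + \delta \leq 2m$, the two intervals overlap; a direct computation of the density difference, splitting into the left non-overlap, overlap, and right non-overlap regions, yields a TV of exactly $2(m' - m)/m' \leq 2\delta/m$. I would then fix an optimal coupling realising this TV distance, so that the two $\bm{c}$'s agree with probability at least $1 - 2\delta/m$.

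Next, conditional on a common value $c$ of $\bm{c}$, I would compare the softmin distributions $P_i = \exp(-c x_i)/Z$ and $P'_i = \exp(-c x'_i)/Z'$. The bounds $x_i \leq x'_i \leq x_i + \delta$ yield $\exp(c(x_i - x'_i)) \in [\exp(-c\delta), 1]$ and $Z/Z' \in [1, \exp(c\delta)]$, so $P'_i/P_i \in [\exp(-c\delta), \exp(c\delta)]$. Since $c\delta \leq 1$, the elementary bounds $e^{c\delta} - 1 \leq 2c\delta$ and $1 - e^{-c\delta} \leq c\delta$ give $|P'_i - P_i| \leq 2c\delta \cdot P_i$, hence a conditional TV of at most $c\delta \leq 4A\delta/m$.

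Finally, combining the two bounds through the coupling (the probability of disagreement is at most the probability that the $\bm{c}$'s differ plus the conditional TV averaged over the agreement event), I obtain TV $\leq 2\delta/m + 4A\delta/m \leq 10A\delta/m$, using $A \geq \log 2 \geq 1/2$. This matches the claimed bound $10\epsilon^{-1}\log(2p\epsilon^{-1})\delta/\min_i x_i$. The main subtlety is keeping $c\delta \leq 1$ so that the exponentials linearise cleanly; this is precisely what the hypothesis $\delta \leq m/(4A)$ is calibrated to guarantee. Beyond this, the sign flip between $\exp(-cx)$ and $\exp(cx)$ does not alter the estimates (the partition-function ratio is merely inverted in sign), so the softmin proof runs in complete parallel with the softmax argument deferred to \Cref{app:softmax}.
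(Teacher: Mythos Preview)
Your proposal is correct and follows essentially the same approach as the paper: decompose the total variation into the contribution from the coupling of $\bm{c}$ (bounded by $2\delta/m$) plus the conditional TV at a common $c$, then use $c\delta \leq 1$ to linearise the exponentials. The only cosmetic difference is that you bound the conditional TV via the two-sided ratio $P'_i/P_i \in [e^{-c\delta}, e^{c\delta}]$ to get $\tfrac12\sum_i |P_i-P'_i|\leq c\delta$, whereas the paper bounds $\sum_i\max(0,P_i-P'_i)$ one-sidedly using $Z'\leq Z$ to get $1-e^{-c\delta}\leq 2c\delta$; your intermediate constant is $4A\delta/m$ versus the paper's $8A\delta/m$, but both land on $10A\delta/m$.
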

\section{Lipschitz Continuous Algorithms for \MSOt Optimization Problems}\label{sec:DP}

In this section, we prove \Cref{thm:alg_max,thm:alg_min,thm:alg_independentset}. Since the proofs are similar to that for \Cref{thm:alg_max}, we provide most of the proof of \Cref{thm:alg_min} in \Cref{app:minimization}.
In \Cref{sec:dp_definition}, we define the notations and give a brief overview of the algorithm.
\Cref{sec:dp_intro} handles the base case, where the graph consists of a single vertex.
\Cref{sec:dp_paracom,sec:dp_forget} analyze the impact of the parallel composition and the forget operations, respectively, on approximability and the Lipschitz constant.
In \Cref{sec:dp_combine}, we put everything together to prove \Cref{thm:alg_max}.
Finally, in \Cref{sec:dp_independent}, we provide a more refined analysis for the special case of the maximum weight independent set problem to prove \Cref{thm:alg_independentset}.

\subsection{Definitions}\label{sec:dp_definition}

For a graph $G$, a weight vector $w$, and an \MSOt formula $\varphi$, we define  
\begin{align}
    \OPT_w[G,\varphi] = \max_{S\subseteq V(G), G\models \varphi(S)}w(S).\label{eq:def_opt_max}
\end{align}
for maximization problems and
\begin{align}
    \OPT_w[G,\varphi] = \min_{S\subseteq V(G), G\models \varphi(S)}w(S).\label{eq:def_opt_min}
\end{align}
for minimization problems.
If $\sat(G,\varphi)=\emptyset$, we define $\OPT_w[G,\varphi]=\nil$.
For graphs $G$ and \MSOt formulas $\varphi$ with $\sat(G,\varphi)\neq \emptyset$, our algorithm recursively computes the vertex set $\DP_w[G,\varphi]$ that approximately achieves the maximum in Equation~(\ref{eq:def_opt_max}) or minimum in Equation~(\ref{eq:def_opt_min}). 
When it is clear from the context, we omit the subscript $w$ and simply write $\OPT[G,\varphi]$ and $\DP[G,\varphi]$.
We perform the dynamic programming algorithm using the formulas defined in~\Cref{cor:separeted-forget-paracom} over the parse tree of a term obtained from \Cref{lem:td_boundeddiam} and \Cref{prop:height-equivalence}. 
Specifically, at every stage of the algorithm, it is guaranteed that each $X\in \sat(G,\varphi)$ satisfies $X\subseteq V(G)\setminus \src(G)$.
Furthermore, our algorithm is randomized. Thus, $\DP[G,\varphi]$ can be considered as a probability distribution over vertex sets in $\sat(G,\varphi)$.

To bound the approximation ratio, for a weight vector $w\in \mathbb{R}_{\geq 0}^V$, we bound the ratio between $\E\left[w(\DP[G,\varphi])\right]$ and $\OPT[G,\varphi]$.
To bound the Lipschitz constant, for a weight vector $w\in \mathbb{R}_{\geq 0}^V$, $u\in V$, and $\delta > 0$, we bound $\EM\left(\DP_w[G,\varphi],\DP_{w+\delta \bm{1}_u}[G,\varphi]\right)$.
From now on, we will concentrate on maximizing problems.
The algorithm for minimization problems is similar to that for maximization problems, while the detail of the analysis is slightly differ.
We discuss the minimization version in Appendix~\ref{app:minimization}.


\subsection{Base Case}\label{sec:dp_intro}
Here we consider the base case.
Let $G$ be a graph with a single vertex $v$ and no edges, where $v$ is a source of $G$, and $\varphi$ be an \MSOt formula such that $\sat(G,\varphi)$ is nonempty and contains no set containing $v$.
In particular, we have $\sat(G, \varphi)=\{\emptyset\}$.
We set $\DP[G,\varphi] := \emptyset$.
Since $w(\emptyset)=0$, it is clear that $w\left(\DP[G,\varphi]\right) = \OPT[G,\varphi] = 0$.
Furthermore, it is obvious that
\begin{align*}
    \EM\left(\DP_w[G,\varphi],\DP_{w+\delta \bm{1}_u}[G,\varphi]\right)=\EM(\emptyset,\emptyset) = 0.
\end{align*}

\subsection{Parallel Composition}\label{sec:dp_paracom}

Next, we consider the parallel composition.
Let
\begin{align}
    \sat(G\paracom H, \varphi) =  \biguplus_{i\in [p]}\sat(G, \theta_{i})  \boxtimes \sat(H, \psi_{i}).\label{eq:dp_paracom}
\end{align}
Assume we have already computed $\DP[G, \theta_{i}]$ and $\DP[H, \psi_{i}]$ for each $i\in [p]$.
We compute $\DP[G\paracom H, \varphi]$.
For each $i\in [p]$, we denote $\OPT_{i}:=\OPT[G, \theta_{i}] + \OPT[H, \psi_{i}]$.
By definition, we have $\max_{i\in [p]}\OPT_i = \OPT[G\paracom H, \varphi]$.
Then, we take $\bm{i}^*=\softargmax^{\varepsilon}_{i\in [p]}\OPT_i$ and define
$\DP[G\paracom H, \varphi] :=  \DP[G, \theta_{\bm{i}^*}] \cup \DP[H, \psi_{\bm{i}^*}]$.
First we analyze the approximation ratio.

\begin{lemma}\label{lem:paracom_approx}
Let $0 < \alpha \leq 1$ and suppose $\E\left[w(\DP[G,\theta_i])\right]\geq \alpha \OPT[G,\theta_i]$ and $\E\left[w(\DP[H,\psi_i])\right]\geq \alpha \OPT[H,\psi_i]$
hold for all $i\in [p]$.
Then, we have $\E\left[w\left(\DP[G\paracom H,\varphi]\right)\right]\geq (1-\varepsilon)\alpha\OPT[G\paracom H,\varphi]$.
\end{lemma}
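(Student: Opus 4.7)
The plan is a direct computation combining the additive structure of the parallel composition with the approximation guarantee of $\softmax$. The recursive definition sets $\DP[G\paracom H, \varphi] = \DP[G, \theta_{\bm{i}^*}] \cup \DP[H, \psi_{\bm{i}^*}]$ where $\bm{i}^* = \softargmax^{\varepsilon}_{i\in[p]}\OPT_i$. The first observation I would make is that since the families $\sat(G, \theta_i)$ and $\sat(H, \psi_i)$ are separated (this is what $\boxtimes$ encodes in \Cref{cor:separeted-forget-paracom}), the union is in fact disjoint, so $w(\DP[G, \theta_i] \cup \DP[H, \psi_i]) = w(\DP[G, \theta_i]) + w(\DP[H, \psi_i])$ pointwise.

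Next I would condition on the value of $\bm{i}^*$ and use independence of $\bm{i}^*$ from the internal randomness of the recursive calls. Writing $\E$ as a double expectation and applying the inductive assumption $\E[w(\DP[G, \theta_i])] \geq \alpha\OPT[G,\theta_i]$ and $\E[w(\DP[H, \psi_i])] \geq \alpha\OPT[H,\psi_i]$ coordinatewise gives
\begin{align*}
\E[w(\DP[G\paracom H, \varphi])]
&= \sum_{i\in[p]}\Pr[\bm{i}^*=i]\bigl(\E[w(\DP[G,\theta_i])] + \E[w(\DP[H,\psi_i])]\bigr) \\
&\geq \alpha\sum_{i\in[p]}\Pr[\bm{i}^*=i]\,\OPT_i
= \alpha\,\E\bigl[\softmax^{\varepsilon}_{i\in[p]}\OPT_i\bigr].
\end{align*}

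The final step is to apply \Cref{lem:softmax_approx} to the quantities $\OPT_1,\dots,\OPT_p$, yielding $\E[\softmax^{\varepsilon}_{i\in[p]}\OPT_i] \geq (1-\varepsilon)\max_{i\in[p]}\OPT_i$. By the decomposition in Equation~(\ref{eq:dp_paracom}) and the additivity of $w$ across the disjoint parts, this maximum equals $\OPT[G\paracom H, \varphi]$, which completes the bound.

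There is no real obstacle in this lemma; it is essentially a bookkeeping step. The only subtlety worth flagging is confirming that the recursive calls $\DP[G,\theta_i]$ and $\DP[H,\psi_i]$ can be assumed independent of the coin flips defining $\bm{i}^*$, which follows from the structure of the algorithm (the softargmax is executed after the subproblem outputs are drawn and depends only on the deterministic values $\OPT_i$). The more delicate arguments will appear later when analyzing the Lipschitz constant, where $\OPT_i$ itself fluctuates with $w$ and \Cref{lem:softmax_tv} must be invoked carefully; here the approximation analysis is entirely straightforward.
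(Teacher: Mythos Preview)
Your proof is correct and follows essentially the same approach as the paper: both use the disjointness of $\DP[G,\theta_i]$ and $\DP[H,\psi_i]$ to split the weight additively, apply the inductive hypothesis to bound the conditional expectation below by $\alpha\,\OPT_i$, and then invoke \Cref{lem:softmax_approx} together with $\max_i \OPT_i = \OPT[G\paracom H,\varphi]$. You make the conditioning on $\bm{i}^*$ and the independence of the softargmax from the recursive randomness explicit, whereas the paper compresses this into a single chain of (in)equalities, but the argument is the same.
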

\begin{proof}
We have
\begin{align*}
    & \E\left[w(\DP[G\paracom H,\varphi])\right]
    = \E\left[w(\DP[G, \theta_{\bm{i}^*}]  \cup \DP[H, \psi_{\bm{i}^*}])\right]
    \geq \alpha\E\left[\OPT[G, \theta_{\bm{i}^*}] + \OPT[H, \psi_{\bm{i}^*}]\right] \\
    &= \alpha\E\left[\softmax^{\varepsilon}_{i\in [p]}\OPT_i\right]
    \geq (1-\varepsilon)\alpha\max_{i\in [p]}\OPT_i
    =(1-\varepsilon)\alpha\OPT[G\paracom H,\varphi].\qedhere
\end{align*}
\end{proof}

Now we analyze the Lipschitz constant.
We denote the variable $\bm{i}^*$ corresponding to the weight $w$ and $w+\delta \bm{1}_u$ by $\bm{i}^*_w$ and $\bm{i}^*_{w+\delta \bm{1}_u}$, respectively.
We note that $\DP_w[G,\theta_i]$ and $\DP_{w+\delta \bm{1}_u}[G,\theta_i]$ are the same as a distribution unless $u\in V(G)\setminus \src_G$. 
The same also holds for $H$.
Since $V(G)\setminus \src_G$ and $V(H)\setminus \src_H$ are disjoint, without loss of generality, we can assume the $\DP_w[H,\psi_i]$ and $\DP_{w+\delta \bm{1}_u}[H,\psi_i]$ are the same as a distribution.
We have the following.
\begin{lemma}\label{lem:paracom_lip}
Let $\beta\in \mathbb{R}_{\geq 0}$ and suppose $\EM\left(\DP_{w}[G,\theta_i],\DP_{w+\delta\bm{1}_{u}}[G,\theta_i]\right)\leq \beta$
holds for all $i\in [p]$.
Then, we have $\EM\left(\DP_{w}[G\paracom H,\varphi],\DP_{w+\delta\bm{1}_{u}}[G\paracom H,\varphi]\right) \leq 30\varepsilon^{-1}\log(2p\varepsilon^{-1})\delta + \beta$.
\end{lemma}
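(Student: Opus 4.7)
The plan is to construct an explicit two-layer coupling of $\DP_w[G\paracom H,\varphi]$ and $\DP_{w+\delta\bm{1}_u}[G\paracom H,\varphi]$ and analyze its expected weighted Hamming distance. The preceding paragraph has already reduced to the case where the $H$-side is insensitive to the perturbation: $\DP_w[H,\psi_i]$ and $\DP_{w+\delta\bm{1}_u}[H,\psi_i]$ have the same distribution, and no $H$-output contains $u$ since it sits inside $V(H)\setminus\src(H)$ while $u\notin V(H)\setminus\src(H)$. Thus I only need to handle the outer softargmax choice and the inductive $G$-side bound.

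The first step is to verify the hypothesis of \Cref{lem:softmax_tv} for $\{\OPT_i\}_{i\in[p]}$. The $H$-part $\OPT[H,\psi_i]$ is unchanged by assumption. For the $G$-part, each set $S\in\sat(G,\theta_i)$ either misses $u$ (weight unchanged) or contains $u$ (weight grows by exactly $\delta$), so $\OPT[G,\theta_i]$ is monotone and increases by at most $\delta$. Consequently $\OPT_i^{w} \le \OPT_i^{w+\delta\bm{1}_u} \le \OPT_i^{w}+\delta$ for every $i\in[p]$, which is exactly the setting in which \Cref{lem:softmax_tv} applies with $\max_i \OPT_i^w = \OPT[G\paracom H,\varphi]$.

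For the outer layer of the coupling, I couple $\bm{i}^*_w$ and $\bm{i}^*_{w+\delta\bm{1}_u}$ via the optimal coupling achieving the total-variation bound from \Cref{lem:softmax_tv}: this is at most $\frac{10\varepsilon^{-1}\log(2p\varepsilon^{-1})\delta}{\OPT[G\paracom H,\varphi]}$ in the small-$\delta$ regime $\delta \le \frac{\OPT[G\paracom H,\varphi]}{4\varepsilon^{-1}\log(2p\varepsilon^{-1})}$, and trivially at most $1$ in the large-$\delta$ regime. For the inner layer, conditioned on both outer choices equaling the same $i$, I couple $\DP_w[G,\theta_i]$ and $\DP_{w+\delta\bm{1}_u}[G,\theta_i]$ via the coupling achieving $\beta$ from the hypothesis, and couple the two $H$-outputs as the identity. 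Because $V(G)\setminus\src(G)$ and $V(H)\setminus\src(H)$ are disjoint and all DP outputs lie inside these sets, the weighted Hamming distance decomposes additively as a $G$-contribution plus an $H$-contribution. The $H$-contribution is $0$ (identical sets and $u$ not in them), so the agreement event contributes at most $\beta$ in expectation.

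On the disagreement event I take any coupling and bound the WHD trivially by $w(X)+(w+\delta\bm{1}_u)(X') \le 2\OPT[G\paracom H,\varphi]+\delta$, since both realised outputs satisfy $\varphi$ and are thus bounded above by $\OPT$ (with a slack of $\delta$ in the perturbed case). In the small-$\delta$ regime, multiplying this by the TV bound gives a disagreement cost of at most $20\varepsilon^{-1}\log(2p\varepsilon^{-1})\delta + O(\delta)$; in the large-$\delta$ regime, the disagreement bound $2\OPT[G\paracom H,\varphi]+\delta$ is itself at most $O(\varepsilon^{-1}\log(2p\varepsilon^{-1})\delta)$ by the regime inequality. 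In either case the disagreement cost fits within $30\varepsilon^{-1}\log(2p\varepsilon^{-1})\delta$, and adding the agreement cost $\beta$ yields the claim. The only delicate point is this regime split for the trivial disagreement bound; once handled, the rest is a routine composition of \Cref{lem:softmax_tv} with the inductive coupling.
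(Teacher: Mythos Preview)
Your proof is correct and follows the same approach as the paper's: couple the outer $\softargmax$ choices and decompose the earth mover's distance into a disagreement term (bounded via \Cref{lem:softmax_tv} times a trivial upper bound $\OPT_w + \OPT_{w+\delta\bm 1_u}$ on the weighted Hamming distance) plus an agreement term (bounded by the inductive $\beta$, using that the $H$-side contributes nothing). The paper's version is terser and implicitly relies on \Cref{lem:seeoneelement} to take $\delta$ small enough that the hypothesis of \Cref{lem:softmax_tv} holds and $\delta \le \OPT_w[G\paracom H,\varphi]$, whereas your explicit small-/large-$\delta$ regime split makes the argument self-contained without that appeal.
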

\begin{proof}
We have
\begin{align*}
    &\EM\left(\DP_{w}[G\paracom H,\varphi],\DP_{w+\delta\bm{1}_{u}}[G\paracom H,\varphi]\right)\\
    &\leq \TV\left(\bm{i}^{*}_w,\bm{i}^{*}_{w+\delta \bm{1}_u}\right)\left(\OPT_w[G\paracom H,\varphi] + \OPT_{w+\delta \bm{1}_u}[G\paracom H, \varphi]\right)
    + \max_{i\in [p]}\left(\EM\left(\DP_{w}[G,\theta_i],\DP_{w+\delta \bm{1}_u}[G,\theta_i]\right)\right)\\
    &\leq \frac{10\varepsilon^{-1}\log (2p\varepsilon^{-1})\delta}{\OPT_w[G\paracom H, \varphi]}\left(\OPT_w[G\paracom H,\varphi] + \OPT_{w+\delta \bm{1}_u}[G\paracom H, \varphi]\right) + \beta
    \leq 30\varepsilon^{-1}\log(2p\varepsilon^{-1})\delta+\beta,
\end{align*}
where the last inequality is from $\varepsilon \leq 1$ and $\delta\leq \OPT_w[G\paracom H, \varphi]$.
\end{proof}

\subsection{Forget}\label{sec:dp_forget}

Here we consider forget operation.
Let $B\subseteq \src_G$ and 
\begin{align}
    \sat(\fg_{B}(G), \varphi) =  \biguplus_{i\in [p]}  \sat(G,\varphi_i)\boxtimes \{S_i\}.\label{eq:dp_forget}
\end{align}
Assume we have already computed $\DP[G,\varphi_i]$ for each $i\in [p]$.
We compute $\DP[\fg_B(G),\varphi]$.

We first sample $\bm{c}\in \mathbb{R}_{>0}$ uniformly from $\left[\frac{2\log (2p\varepsilon^{-1})}{\varepsilon\OPT[G,\varphi,S]},\frac{4\log (2p\varepsilon^{-1})}{\varepsilon\OPT[G,\varphi,S]}\right]$.
For each $i\in [p]$, we denote $\OPT_i=\OPT[G, \varphi_{i}] + w(S_i)$.
By definition, we have $\max_{i\in [p]}\OPT_i = \OPT[\fg_B(G), \varphi]$.
Then, we take $\bm{i}^*=\softargmax^{\varepsilon}_{i\in [p]}\OPT_i$ and define $\DP[\fg_B(G), \varphi] :=  \DP[G, \varphi_{\bm{i}^*}] \cup S_{\bm{i}^*}$.
First we analyze the approximation ratio.

\begin{lemma}\label{lem:forget_approx}
Let $0 < \alpha \leq 1$ and suppose $\E\left[w(\DP[G,\varphi_i])\right]\geq \alpha \OPT[G,\varphi_i]$
holds for all $i\in [p]$.
Then, we have $\DP[\fg_{B}(G),\varphi]\geq (1-\varepsilon)\alpha\OPT[\fg_{B}(S),\varphi]$.
\end{lemma}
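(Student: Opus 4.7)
The plan is to mirror the argument of \Cref{lem:paracom_approx} almost line by line, since the forget case has the same high-level structure as the parallel-composition case: we apply \softmax to select one of $p$ options, each of whose approximate value is controlled by induction.

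First, I would unfold the definition: by construction $\DP[\fg_B(G),\varphi] = \DP[G,\varphi_{\bm i^*}] \cup S_{\bm i^*}$, and since $\DP[G,\varphi_{\bm i^*}]\subseteq V(G)\setminus \src(G)$ while $S_{\bm i^*}\subseteq \src(G)$, the two pieces are disjoint, so $w(\DP[\fg_B(G),\varphi]) = w(\DP[G,\varphi_{\bm i^*}]) + w(S_{\bm i^*})$. Conditioning on the value of $\bm i^*$ and applying the hypothesis to the inner randomness of $\DP[G,\varphi_{\bm i^*}]$, I get
\[
    \E\left[w(\DP[\fg_B(G),\varphi])\right] \;\geq\; \E\left[\alpha\,\OPT[G,\varphi_{\bm i^*}] + w(S_{\bm i^*})\right].
\]
Using $\alpha\leq 1$ and $w(S_{\bm i^*})\geq 0$, I can bound $w(S_{\bm i^*})\geq \alpha\, w(S_{\bm i^*})$, so the right-hand side is at least $\alpha\,\E[\OPT_{\bm i^*}] = \alpha\,\E\bigl[\softmax^\varepsilon_{i\in[p]}\OPT_i\bigr]$.

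Finally, I would apply \Cref{lem:softmax_approx} to the values $\OPT_1,\dots,\OPT_p$, which yields $\E[\softmax^\varepsilon_{i\in[p]}\OPT_i]\geq (1-\varepsilon)\max_{i\in[p]}\OPT_i$, and identify $\max_{i\in[p]}\OPT_i = \max_{i\in[p]}(\OPT[G,\varphi_i]+w(S_i)) = \OPT[\fg_B(G),\varphi]$ from the decomposition \eqref{eq:dp_forget}. Chaining these steps gives $\E[w(\DP[\fg_B(G),\varphi])] \geq (1-\varepsilon)\alpha\,\OPT[\fg_B(G),\varphi]$, which is the claimed bound (the statement's ``$\DP$'' on the left should be read as $\E[w(\DP[\,\cdot\,])]$, matching the shape of \Cref{lem:paracom_approx}).

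I do not expect any real obstacle here: the only subtlety compared with the parallel-composition case is the extra additive term $w(S_{\bm i^*})$ coming from the forgotten sources, but since these contribute identically and deterministically conditional on $\bm i^*$, the step $w(S_{\bm i^*})\geq \alpha\, w(S_{\bm i^*})$ (valid because $\alpha\leq 1$) lets them be absorbed into the $\alpha$ factor, after which \Cref{lem:softmax_approx} closes the argument.
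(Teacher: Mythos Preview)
Your proposal is correct and follows essentially the same chain of inequalities as the paper's proof: expand $w(\DP[\fg_B(G),\varphi])$ as $w(\DP[G,\varphi_{\bm i^*}])+w(S_{\bm i^*})$, apply the inductive hypothesis, use $\alpha\leq 1$ to pull $\alpha$ outside the bracket, and then invoke \Cref{lem:softmax_approx}. Your added remarks (disjointness of $\DP[G,\varphi_{\bm i^*}]$ and $S_{\bm i^*}$, the explicit use of $\alpha\leq 1$, and the interpretation of the statement's left-hand side) only make the argument more explicit.
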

\begin{proof}
We have
\begin{align*}
    \E\left[w(\DP[G,\varphi])\right]
    &= \E\left[w(\DP[G, \varphi_{\bm{i}^*}]  \cup S_{\bm{i}^*})\right]
    \geq \E\left[\alpha\OPT[G, \varphi_{\bm{i}^*}] + w(S_{\bm{i}^*})\right]
    \geq \alpha\E\left[\OPT[G, \varphi_{\bm{i}^*}] + w(S_{\bm{i}^*})\right]\\
    & = \alpha\E\left[\softmax^{\varepsilon}_{i\in [p]}\OPT_i\right]
    \geq (1-\varepsilon)\alpha\max_{i\in [p]}\OPT_i
    =(1-\varepsilon)\alpha\OPT[\fg_{B}(G),\varphi].\qedhere
\end{align*}
\end{proof}

Now we analyze the Lipschitz constant.
\begin{lemma}\label{lem:forget_lip}
Let $\beta\in \mathbb{R}_{\geq 0}$ and suppose $\EM\left(\DP_{w}[G,\varphi_{i}],\DP_{w+\delta \bm{1}_u}[G,\varphi_{i}]\right)\leq \beta$
holds for all $i\in [p]$. Then, we have $\EM\left(\DP_{w}[\fg_{B}(G),\varphi],\DP_{w+\delta\bm{1}_{u}}[\fg_{B}(G),\varphi]\right)\leq 31\varepsilon^{-1}\log(2p\varepsilon^{-1})\delta + \beta$.
\end{lemma}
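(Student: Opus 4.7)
The plan is to adapt the argument of Lemma \ref{lem:paracom_lip} to the forget case. As in the parallel-composition case, I would couple the outputs in two stages: first an optimal coupling of the indices $\bm{i}^*_w$ and $\bm{i}^*_{w+\delta\bm{1}_u}$ selected by $\softargmax$, and then, conditioned on the indices matching, an optimal coupling of $\DP_w[G,\varphi_i]$ and $\DP_{w+\delta\bm{1}_u}[G,\varphi_i]$ supplied by hypothesis. The new wrinkle over paracom is that the scores $\OPT_i = \OPT[G,\varphi_i] + w(S_i)$ depend on $w$ explicitly via the $w(S_i)$ term, and moreover the set $S_i$ is absorbed into the output, so its weight contributes directly to the Hamming distance.

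First I would check that $|\OPT_i^{w} - \OPT_i^{w+\delta\bm{1}_u}| \leq \delta$ for every $i$. Since $S_i \subseteq B \subseteq \src(G)$ and every $S \in \sat(G,\varphi_i)$ is disjoint from $\src(G)$ (by the invariant emphasized in \Cref{sec:dp_definition}), the vertex $u$ can affect at most one of the two summands: if $u \in S_i$, then $w(S_i)$ changes by $\delta$ and $\OPT[G,\varphi_i]$ is unaffected, while otherwise $w(S_i)$ is unchanged and $\OPT[G,\varphi_i]$ changes by at most $\delta$. With this stability in hand, Lemma \ref{lem:softmax_tv} applied to the scores $\OPT_i$ yields
\[
  \TV\bigl(\bm{i}^*_w,\bm{i}^*_{w+\delta\bm{1}_u}\bigr) \leq \frac{10\varepsilon^{-1}\log(2p\varepsilon^{-1})\,\delta}{\OPT_w[\fg_B(G),\varphi]}.
\]

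Next I would bound the two contributions to the expected weighted Hamming distance. When the coupled indices disagree (probability at most $\TV(\bm{i}^*_w,\bm{i}^*_{w+\delta\bm{1}_u})$), the distance between the two outputs is crudely bounded by $\OPT_w[\fg_B(G),\varphi]+\OPT_{w+\delta\bm{1}_u}[\fg_B(G),\varphi]\leq 3\,\OPT_w[\fg_B(G),\varphi]$, using $\delta \leq \OPT_w[\fg_B(G),\varphi]$ (the same tacit hypothesis used in the paracom case, which is automatic when Lemma \ref{lem:softmax_tv} applies). Multiplying through, this contributes $30\varepsilon^{-1}\log(2p\varepsilon^{-1})\,\delta$. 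When the indices agree on some $i$, the two outputs are $\DP_w[G,\varphi_i]\cup S_i$ and $\DP_{w+\delta\bm{1}_u}[G,\varphi_i]\cup S_i$ under weights $w$ and $w+\delta\bm{1}_u$; the common factor $S_i$ contributes at most $\delta \cdot \mathbf{1}[u\in S_i]\leq \delta$ to the weighted Hamming distance, while the remaining contribution is bounded by $\beta$ using the given coupling on $\DP_w[G,\varphi_i]$ and $\DP_{w+\delta\bm{1}_u}[G,\varphi_i]$. Summing the three terms gives $30\varepsilon^{-1}\log(2p\varepsilon^{-1})\delta + \delta + \beta \leq 31\varepsilon^{-1}\log(2p\varepsilon^{-1})\delta + \beta$, as required.

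The main obstacle I expect is the bookkeeping around the $w(S_i)$ term: one must argue simultaneously that (i) the change of the scores $\OPT_i$ is genuinely bounded by $\delta$ (not $2\delta$) so that Lemma \ref{lem:softmax_tv} fires with the same $\delta$, and (ii) the $S_i$ part of the output absorbs at most an additional $\delta$ of distance, yielding the extra ``$+1$'' relative to the $30$ in Lemma \ref{lem:paracom_lip}. Both rely crucially on the disjointness between $\src(G)$ and the feasible vertex sets, which is the reason the $\frest{S}$ bookkeeping was introduced in \Cref{sec:prelim_logic}.
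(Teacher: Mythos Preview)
Your proposal is correct and follows essentially the same two-stage coupling as the paper's proof: bound $\TV(\bm{i}^*_w,\bm{i}^*_{w+\delta\bm{1}_u})$ via \Cref{lem:softmax_tv}, multiply by the crude bound $\OPT_w+\OPT_{w+\delta\bm{1}_u}\le 3\,\OPT_w$, and on the matched-index event split the output into the $\DP$-part (contributing $\beta$) and the $S_i$-part (contributing $\le\delta$), yielding $30\varepsilon^{-1}\log(2p\varepsilon^{-1})\delta+\beta+\delta$. Your justification that each score $\OPT_i$ moves by at most $\delta$ (not $2\delta$) via the disjointness of $S_i\subseteq\src(G)$ from the feasible sets is in fact more explicit than the paper, which applies \Cref{lem:softmax_tv} directly without spelling this out.
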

\begin{proof}
We have
\begin{align*}
    \EM&\left(\DP_{w}[\fg_{B}(G),\varphi],\DP_{w+\delta\bm{1}_{u}}[\fg_{B}(G),\varphi]\right)\\
    &\leq \TV\left(\bm{i}^{*}_w,\bm{i}^{*}_{w+\delta \bm{1}_u}\right) \left(\OPT_{w}[\fg_{B}(G),\varphi]+\OPT_{w+\delta \bm{1}_u}[\fg_{B}(G),\varphi]\right)\\
    & \quad + \max_{i\in [p]}\left(\EM\left(\DP_{w}[G,\varphi_i]\cup S_i,\DP_{w+\delta \bm{1}_u}[G,\varphi_i]\cup S_i\right)\right)\\
    &\leq \frac{10\varepsilon^{-1}\log(2p\varepsilon^{-1})\delta}{\OPT_w[\fg_{B}(G),\varphi]}\left(\OPT_w[\fg_{B}(G),\varphi]+\OPT_{w+\delta \bm{1}_u}[\fg_{B}(G),\varphi]\right) \\
    &\quad + \max_{i\in [p]}\left(\EM\left(\DP_{w}[G,\varphi_i],\DP_{w+\delta \bm{1}_u}[G,\varphi_i]\right)\right) + \max_{i\in [p]}\left[d_{\mathrm{w}}((S_i,w),(S_i,w+\delta \bm{1}_u))\right]\\
    &\leq 30\varepsilon^{-1}\log(2p\varepsilon^{-1})\delta + \beta + \delta
    \leq 31\varepsilon^{-1}\log(2p\varepsilon^{-1})\delta + \beta.\qedhere
\end{align*}
\end{proof}

\subsection{Putting Together}\label{sec:dp_combine}

Let $G=(V,E)$ be a graph with treewidth $k$.
From \Cref{lem:td_boundeddiam} and \Cref{prop:height-equivalence}, we can compute a \HR-parse tree $t$ over $(k+1)$-graph denoting $G$ with height $O(\log k + \log n)\leq O(\log n)$.
We perform the dynamic programming algorithm on this parse tree.
We have the following.
\begin{lemma}\label{lem:combineall}
Let $\varepsilon \in (0,1]$ and $h$ be the height of $t$.
Our algorithm outputs a solution $X\in \sat(G,\varphi)$ such that $\E[w(X)]\geq (1-h\varepsilon)\OPT[G,\varphi]$.
The Lipschitz constant is bounded by $31h\varepsilon^{-1}\log(2p_{\max}\varepsilon^{-1})$, where $p_{\max}$ is the maximum of $p$ among all update formulas~(\ref{eq:dp_paracom})~or~(\ref{eq:dp_forget}) that the algorithm uses.
\end{lemma}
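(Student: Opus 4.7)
The plan is to prove Lemma~\ref{lem:combineall} by induction on the height $h'$ of subterms of $t$. For every subterm $t'$ and every \CMSO{r}{q}-formula $\varphi'$ that may arise in the dynamic programming on $t'$, I would establish two claims simultaneously: (i) $\E[w(\DP[G_{t'},\varphi'])] \geq (1-h'\varepsilon)\OPT[G_{t'},\varphi']$, and (ii) for every weight $w \in \mathbb{R}_{\geq 0}^{V}$, every vertex $u$, and every sufficiently small $\delta > 0$,
\[
\EM\!\bigl(\DP_w[G_{t'},\varphi'],\,\DP_{w+\delta\bm{1}_u}[G_{t'},\varphi']\bigr) \;\leq\; 31\,h'\,\varepsilon^{-1}\log(2p_{\max}\varepsilon^{-1})\,\delta.
\]

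The base case $h' = 0$ is exactly the leaf case treated in Section~\ref{sec:dp_intro}: since $\DP[G,\varphi] = \emptyset$, claim (i) is an equality with both sides $0$, and claim (ii) is trivial.

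For the inductive step at a node of height $h' \geq 1$, I split on whether the top operator is a parallel composition or a forget. In the parallel-composition case, the inductive hypothesis supplies approximation factor $\alpha := 1-(h'-1)\varepsilon$ and Lipschitz bound $\beta := 31(h'-1)\varepsilon^{-1}\log(2p_{\max}\varepsilon^{-1})\delta$ for each $\DP[G,\theta_i]$ and $\DP[H,\psi_i]$ appearing in Equation~(\ref{eq:dp_paracom}). Lemma~\ref{lem:paracom_approx} then yields approximation factor $(1-\varepsilon)\alpha \geq 1-h'\varepsilon$ (since $(1-\varepsilon)(1-(h'-1)\varepsilon) = 1-h'\varepsilon+(h'-1)\varepsilon^{2}$), and Lemma~\ref{lem:paracom_lip} yields Lipschitz bound $30\,\varepsilon^{-1}\log(2p\varepsilon^{-1})\delta + \beta$, which, using $p \leq p_{\max}$ and $30+31(h'-1) \leq 31h'$, is at most $31h'\varepsilon^{-1}\log(2p_{\max}\varepsilon^{-1})\delta$. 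The forget case is entirely analogous, invoking Lemmas~\ref{lem:forget_approx} and~\ref{lem:forget_lip}, where the marginally worse additive constant $31$ still satisfies $31+31(h'-1) = 31h'$.

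Finally, specializing the induction to the root of $t$ (whose height is at most $h$) and invoking Lemma~\ref{lem:seeoneelement}, with $c$ chosen as the minimum over all softmax invocations of the threshold required by Lemma~\ref{lem:softmax_tv}, converts the single-coordinate $\EM$ bound into the claimed global Lipschitz constant. The main bookkeeping obstacle is simply to verify that the family of formulas $\varphi'$ spawned along the recursion is bounded, so that $p_{\max}$ is finite and the induction closes; this follows from the well-known fact that, up to logical equivalence, there are only finitely many \CMSO{r}{q}-formulas over $(k+1)$-graphs, as noted after Theorem~\ref{thm:mso-fg-split}, and from the fact that the decompositions in Corollary~\ref{cor:separeted-forget-paracom} stay within this finite class.
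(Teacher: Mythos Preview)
Your proposal is correct and follows essentially the same approach as the paper: the paper's proof simply says ``repeatedly apply Lemmas~\ref{lem:paracom_approx} and~\ref{lem:forget_approx}'' (together with $(1-\varepsilon)^h\geq 1-h\varepsilon$) for approximability, and ``repeatedly apply Lemmas~\ref{lem:paracom_lip} and~\ref{lem:forget_lip}'' for the Lipschitz bound, which is precisely the induction you spell out in detail. Your final paragraph about the finiteness of the \CMSO{r}{q}-formula class is slightly tangential here---$p_{\max}$ is trivially finite since the algorithm applies only finitely many update formulas (one per parse-tree node); the finiteness of the formula class is what ensures $p_{\max}$ is bounded by a function of $k$ and $|\varphi|$, which is used only later in the proof of Theorem~\ref{thm:alg_max}.
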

\begin{proof}
The approximability bound is obtained by repeatedly applying \Cref{lem:paracom_approx,lem:forget_approx} and $(1-\varepsilon)^h\geq 1-h\varepsilon$.
The Lipschitzness bound is obtained by repeatedly applying \Cref{lem:paracom_lip,lem:forget_lip}.
\end{proof}

Since $p_{\max}$ is bounded by a function of $k$ and $|\varphi|$, we have the following.
\begin{proof}[Proof of \Cref{thm:alg_max}]
    The claim follows by substituting $\varepsilon$, $h$, and $k$ in \Cref{lem:combineall} with $\frac{\varepsilon}{\Theta(\log n)}$, $O(\log n)$, and $3k+2$, respectively.
\end{proof}

\section{Special Cases}\label{sec:special_cases}

\subsection{Independent Set}\label{sec:dp_independent}

From the proof of \Cref{thm:alg_max}, $f(k, \varphi)$ in \Cref{thm:alg_max} is bounded by $\log(2p_{\max})$. 
Particularly, if $p_{\max}$ is bounded by $2^{O(k)}$, the Lipschitz constant depends linearly on $k$.
Here, we prove \Cref{thm:alg_independentset} for the maximum weight independent set problem by showing that $p_{\max} \leq 2^{O(k)}$.
Similar arguments can be applied to several other problems, such as the minimum weight vertex cover problem and the minimum weight dominating set problem.
For a $k$-graph $G$ and $S\subseteq \src(G)$, let 
\begin{align*}
    \varphi_{S}(X) \equiv \forall x(x\in \src(G)\rightarrow \lnot x\in X)\land \forall x \forall y (((x\in X\lor x\in S)\land (y\in X\lor y\in S))\rightarrow \lnot \adj(x,y)) .
\end{align*}
In words, $\sat(G,\varphi_S)$ is the family of subsets $X$ of $V$ such that $X$ is disjoint from $\src(G)$ and $X\cup S$ is an independent set of $G$. 
Then, we have
\begin{align*}
    \sat(\fg_B(G), \varphi_S) = \biguplus_{S'\subseteq B}\sat(G,\varphi_{S\cup S'})\boxtimes \{S'\},\quad
    \sat(G\paracom H, \varphi_S) = \sat(G,\varphi_{S})\boxtimes \sat(H,\varphi_S).
\end{align*}
Particularly, we have $p_{\max}\leq 2^{O(k)}$ and therefore \Cref{thm:alg_independentset} is proved.

\subsection{Max Ones}\label{sec:dp_maxone}
Recall that in \emph{max ones problem}, we given a 3CNF formula $\Phi$ over a variable set $X$ and a weight function $w:X \to \mathbb R_{\geq 0}$, the goal is to find a satisfying assignment $\sigma:X \to \{0,1\}$ that maximizes the weight $\sum_{x \in \sigma^{-1}(1)}w(x)$.
We reduce the problem to a graph problem.
Let $G_\Phi = (X \cup \bar{X} \cup \mathcal C, E)$ be the graph such that
$X$ is the variable set of $\Phi$, $\bar{X} = \{\bar{x}_i : x_i\in X\}$, 
$\mathcal C$ is the clause set of $\Phi$, $\{x_i, \bar{x}_i\}\in E$ for all $x_i\in X$, $\{C_j, x_i\}\in E$ iff clause $C_j$ contains $x_i$ as a positive literal, and $\{C_j, \bar{x}_i\}\in E$ iff clause $C_j$ contains $x_i$ as a negative literal.
Let $w'(x) = w(x)$ if $x\in X$, and otherwise $w'(x) = 0$. Note that the treewidth of $G_\Phi$ is at most $2|X|$.

Here, the notations $a \in A \cup B$, $a \in A\cap B$, $a \in A \setminus B$, $\forall a \in A\; \psi$, and $\exists a \in A\; \psi$ are syntactic sugar defined in the usual sense.
For a $k$-graph $G$ and $S, D\subseteq \src(G)$, let 
\[
 \varphi_{S,D}(A) \equiv \exists B  \begin{bmatrix*}[l]
  & \forall a \in A\cup S\; \forall b \in B \; (a \in X \land b \in \bar X \land \lnot\adj(a, b)) \\
  \land & \forall x \in X \cup \bar X\;\forall y \in X \cup \bar X\; (\adj(x,y) \to (x \in A\cup S\cup B \lor y \in A \cup S \cup B)\\
  \land & \forall s \in \src(G) \; (\lnot s \in A) 
  \land  \forall c \in \mathcal C\setminus D\; \exists x \in A\cup S\cup B\; (\adj(x,c))
 \end{bmatrix*}.
\]
The first and the second rows says that $A\cup S$ and $B$ represent the sets $\sigma^{-1}(1)$ and $\sigma^{-1}(0)$ of an assignment $\sigma$, respectively.
The third row says that $A$ has no sources, and all clauses except in $D$ are satisfied by the $\sigma$ defined by $A\cup S$ and $B$.
Then, the max ones problem is equivalent to find a vertex set $Y$ maximizes the weight $\sum_{x \in Y}w'(x)$ and satisfies $G_{\Phi}\models \varphi_{\emptyset, \emptyset}(Y)$.
Here, we have
\begin{align*}
    \sat(\fg_B(G), \varphi_{S,D}) &= \begin{cases}
    \biguplus_{S'\subseteq B}\sat(G,\varphi_{S\cup S',D})\boxtimes \{S'\} & \text{if $D\cap B = \emptyset$}, \\
    \emptyset & \text{otherwise}       
    \end{cases}\\
    \sat(G\paracom H, \varphi_{S,D}) &= \biguplus_{\substack{D_1 \cap D_2 = D\\D_1, D_2 \subseteq \src(G)}}\sat(G,\varphi_{S,D_1})\boxtimes \sat(H,\varphi_{S,D_2}).
\end{align*}
Particularly, we have $p_{\max}\leq 2^{O(k)}$.

\section{Baker's Technique}\label{sec:planar}

In this section, we provide a technique for Lipschitz continuous algorithms for optimization problems on planar graphs. 
Our technique is a modification of Baker's technique~\cite{Baker94}, which provides a PTAS for many optimization problems on planar graphs. 
For simplicity, in this section, we focus on the maximum weight independent set problem on planar graphs. 
Similar techniques can be applied to obtain Lipschitz continuous algorithms for other problems, such as the minimum weight vertex cover problem on planar graphs.

Let $G=(V,E)$ be a planar graph and $w\in \mathbb{R}_{\geq 0}^V$ be a weight vector.
A vertex subset $X$ is an independent set of $G$ if no two vertices in $X$ are adjacent.
The \emph{maximum weight planar independent set} problem asks to maximize the weight $w(X):=\sum_{v\in X}w_v$ among all independent sets of $G$.
We denote this maximum value by $\OPT$.
Let $\varepsilon>0$. 
We provide a $(1-\varepsilon)$-approximation algorithm for the maximum weight planar independent set problem with Lipschitz constant $O((\varepsilon^{-1}+\log \log n)\varepsilon^{-1}\log^2 n)$.

Let $r\in V$ be an arbitrary vertex. For $i\in \mathbb{Z}_{\geq 0}$, let $R_i$ be the set of vertices at exact distance $i$ from $r$, where the distance between $r$ and $v$ is measured as the minimum number of edges in an $r$-$v$ path.
Let $m\in \mathbb{Z}_{\geq 1}$.
For $j\in \{0,\dots, m-1\}$, let $G_j$ be the subgraph of $G$ induced by $V\setminus \bigcup_{i\in \mathbb{Z}_{\geq 0}\colon i\equiv j \pmod m}R_i$.
We denote the maximum weight of an independent set of $G_j$ by $\OPT_j$.
The following lemmas are known.
\begin{lemma}[\rm{\cite{RobertsonS84}}]
For each $k\in \{0,\dots, m-1\}$, the treewidth of $G_i$ is at most $3m-2$.
\end{lemma}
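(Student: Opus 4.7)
The plan is to deduce this lemma from Bodlaender's classical theorem that every $k$-outerplanar graph has treewidth at most $3k-1$, by showing that each connected component of $G_j$ is $(m-1)$-outerplanar in the planar embedding inherited from $G$.

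First I would fix an arbitrary planar embedding of $G$ and use the standard BFS property that every edge connects vertices whose distances from $r$ differ by at most one. Consequently, after removing every layer $R_i$ with $i \equiv j \pmod m$, the graph $G_j$ decomposes as the vertex-disjoint union of ``slabs'' of the form $S_t := R_{tm+j+1} \cup R_{tm+j+2} \cup \cdots \cup R_{(t+1)m+j-1}$, with boundary cases spanning even fewer layers; no edge of $G_j$ crosses between two distinct slabs. Since the treewidth of a disjoint union is the maximum treewidth over its parts, it suffices to bound the treewidth of each slab individually.

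Next I would show that every slab, with its inherited planar embedding, is $(m-1)$-outerplanar. The key structural fact is that in a BFS from $r$ in a plane graph, each layer $R_i$ forms a ``ring'' that separates the earlier layers from the remainder, so removing the preceding layer $R_{tm+j}$ leaves $R_{tm+j+1}$ incident to the outer face of the residual embedding. Peeling this outermost layer reduces the slab to one with at most $m-2$ BFS layers; iterating the argument $m-1$ times empties the slab, which is precisely the definition of $(m-1)$-outerplanarity. Bodlaender's theorem then yields treewidth at most $3(m-1)-1 = 3m-4 \leq 3m-2$ per slab, and hence $\tw(G_j) \leq 3m-2$.

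The main obstacle is making the ``ring'' argument rigorous when $G$ fails to be $2$-connected: in general, a BFS layer of a plane graph is not a simple cycle, and the outer boundary of a slab may touch several faces. The standard way around this is to induct on $m$ while re-embedding the residual graph on the sphere after each peel (for instance, by contracting the just-removed layer to a single point taken to be $\infty$), so that the new outermost surviving layer again becomes incident to the outer face of the new embedding. That re-embedding is the only non-formulaic step of the plan; once it is in place, the peeling argument goes through mechanically and the bound follows immediately from Bodlaender's theorem.
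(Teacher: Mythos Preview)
The paper does not prove this lemma; it is stated with a citation to Robertson and Seymour and used as a black box in the Baker-style argument of Section~\ref{sec:planar}. There is therefore no in-paper proof to compare your proposal against.

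Your plan---show that each slab is $(m-1)$-outerplanar and then invoke Bodlaender's bound $\tw\le 3k-1$ for $k$-outerplanar graphs---is the standard route and is essentially correct; it even yields the slightly sharper bound $3m-4$. One small fix: the embedding should not be ``arbitrary'' but rather one that places $r$ on the outer face (always possible for any planar graph). With that choice, every vertex $u\in R_a$ is joined to $r$ by a path all of whose other vertices lie in $R_0\cup\cdots\cup R_{a-1}$, hence in the deleted region; following that path in the drawing shows $u$ is incident to the unbounded face of the inherited embedding of the slab. Peeling therefore removes at least one full BFS layer per round, and after $m-1$ rounds the slab is empty. So the argument works directly in a single fixed embedding, and the spherical re-embedding you flag as the main obstacle is not actually needed.
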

\begin{lemma}[\rm{\cite{Baker94}}]
We have $\max_{j\in \{0,\dots, m-1\}}\OPT_{j}\geq \left(1-\frac{1}{m}\right)\OPT$.
\end{lemma}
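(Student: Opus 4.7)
The plan is to apply Baker's standard averaging argument over the choice of the shift $j$. Let $X^* \subseteq V$ be an optimal independent set of $G$, so $w(X^*) = \OPT$. The key observation is that the layers $\{R_i\}_{i \geq 0}$ form a partition of $V$, and each layer $R_i$ is removed in exactly one of the subgraphs $G_j$, namely the one with $j \equiv i \pmod m$. Hence, as $j$ ranges over $\{0, \dots, m-1\}$, each vertex $v \in V$ appears in the removed set $\bigcup_{i \equiv j \pmod m} R_i$ for exactly one value of $j$.

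Next I would define, for each $j \in \{0, \dots, m-1\}$, the set $X^*_j := X^* \cap V(G_j) = X^* \setminus \bigcup_{i \equiv j \pmod m} R_i$. Since $G_j$ is an induced subgraph of $G$ and $X^*$ is independent in $G$, the set $X^*_j$ is independent in $G_j$, so $\OPT_j \geq w(X^*_j)$.

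Then I would sum over $j$. By the observation above, each vertex of $X^*$ contributes its weight to $w(X^* \setminus X^*_j)$ for exactly one index $j$, so
\[
\sum_{j=0}^{m-1} w(X^* \setminus X^*_j) \;=\; w(X^*) \;=\; \OPT,
\]
and consequently $\sum_{j=0}^{m-1} w(X^*_j) = (m-1)\,\OPT$. By averaging, there exists some $j^* \in \{0,\dots,m-1\}$ with $w(X^*_{j^*}) \geq \frac{m-1}{m}\OPT = (1 - 1/m)\OPT$, which yields $\OPT_{j^*} \geq (1-1/m)\OPT$ and hence the claimed bound on the maximum.

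There is no real obstacle here: the entire argument is a pigeonhole/averaging over a disjoint partition of $V$ into layers, and it does not use planarity at all (planarity is only needed for the bounded-treewidth property of $G_j$ in the previous lemma). The only point worth stating carefully is that $\bigcup_{i \equiv j \pmod m} R_i$ really does partition $V$ as $j$ varies, which follows immediately from each vertex having a unique BFS distance from $r$.
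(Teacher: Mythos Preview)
Your argument is correct and is precisely Baker's original averaging argument. The paper does not supply its own proof of this lemma; it merely cites \cite{Baker94}, so there is nothing to compare against beyond noting that what you wrote is the standard proof.
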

In the original Baker's technique, a $(1-\varepsilon)$-approximation algorithm for the maximum weight planar independent set problem is obtained by finding the maximum weight independent set for each $G_j$ and then outputting the one with the maximum weight among them.
We make this algorithm Lipschitz continuous by replacing the max operation with a softmax operation.
We set $m=\lceil2\varepsilon^{-1}\rceil$.
Then, we take $\bm{j}^*=\softargmax^{m^{-1}}_{j\in \{0,\dots, m-1\}}\OPT_j$ and compute an approximate maximum weight independent set of $G_{\bm{j}^*}$ using the algorithm in Section~\ref{sec:DP}, where $\varepsilon$ is set as $m^{-1}$.

\begin{lemma}
Our algorithm has approximation ratio $1-\varepsilon$.
\end{lemma}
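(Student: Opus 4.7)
The plan is to compose three sources of approximation loss, each losing a factor of at most $(1-m^{-1})$: Baker's structural lemma recalled above, the softmax that selects $\bm{j}^{*}$, and the \MSOt subroutine from \Cref{sec:DP} used to compute an approximate maximum-weight independent set in $G_{\bm{j}^{*}}$.

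Let $X$ denote the random output of the algorithm. I would first condition on $\bm{j}^{*}=j$. Given this event, $X$ is produced by the algorithm of \Cref{thm:alg_max} applied to $G_j$, which has treewidth at most $3m-2$, with accuracy parameter $m^{-1}$. Hence
\[
  \E\bigl[w(X)\mid \bm{j}^{*}=j\bigr]\;\geq\;(1-m^{-1})\OPT_j.
\]
Taking the outer expectation over $\bm{j}^{*}$ gives $\E[w(X)]\geq (1-m^{-1})\E[\OPT_{\bm{j}^{*}}]$. The softmax guarantee of \Cref{lem:softmax_approx}, applied with $\epsilon=m^{-1}$ to the family $\{\OPT_j\}_{j\in\{0,\dots,m-1\}}$, then yields $\E[\OPT_{\bm{j}^{*}}]\geq (1-m^{-1})\max_{j}\OPT_j$. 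Combined with Baker's lemma $\max_{j}\OPT_j\geq (1-m^{-1})\OPT$, this gives $\E[w(X)]\geq (1-m^{-1})^{3}\OPT$.

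For $m=\lceil 2\varepsilon^{-1}\rceil$ this is at least $(1-O(\varepsilon))\OPT$; absorbing the constant into $\varepsilon$ (or equivalently enlarging $m$ by a fixed multiplicative constant when defining the algorithm) yields the claimed $(1-\varepsilon)$-approximation. No serious obstacle arises: the proof is a routine composition of three approximation bounds. The only care needed is to disentangle the two layers of randomness -- the softmax choice of $\bm{j}^{*}$ and the internal randomness of the \MSOt subroutine -- via the law of total expectation, and to verify that the subroutine's guarantee from \Cref{thm:alg_max} is applicable to each $G_j$ because the independent-set constraint is a fixed \MSOt formula and $\mathrm{tw}(G_j)\leq 3m-2$ is bounded.
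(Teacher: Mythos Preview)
Your proof is correct and follows the same approach as the paper: condition on $\bm{j}^{*}$, apply the subroutine's $(1-m^{-1})$ guarantee, then use the softmax bound (\Cref{lem:softmax_approx}) and Baker's lemma, composing via the law of total expectation. You are in fact slightly more careful than the paper's write-up, which multiplies only two $(1-m^{-1})$ factors in its final line (silently dropping the Baker factor), whereas you correctly track all three and handle the resulting constant by the standard absorption into~$\varepsilon$.
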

\begin{proof}
From Lemma~\ref{lem:softmax_approx}, we have $\E\left[\OPT_{\bm{j}^*}\right]\geq \left(1-\frac{1}{m}\right)\max_{j\in \{0,\dots, m-1\}}\OPT_j$.
For each $j\in \{0,\dots, m-1\}$, our framework outputs $\left(1-\frac{1}{m}\right)$-approximate solution for the maximum weight independent set problem on $G_j$. 
Therefore, the overall approximation ratio is $\left(1-\frac{1}{m}\right)^2\geq 1-\frac{2}{m}\geq 1-\varepsilon$.
\end{proof}



\begin{lemma}
Our algorithm has a Lipschitz constant $O((\varepsilon^{-1}+\log \log n)\varepsilon^{-1}\log^2 n)$.
\end{lemma}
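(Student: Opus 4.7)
By Lemma~\ref{lem:seeoneelement}, it suffices to fix $w \in \mathbb{R}_{\geq 0}^V$, $u \in V$, and a small $\delta > 0$, and bound the earth-mover distance between the output distribution on weights $w$ and on $w + \delta\bm{1}_u$. Denote by $\mathcal{A}_j$ the subroutine of Section~\ref{sec:DP} that produces an approximate maximum weight independent set of $G_j$ with accuracy parameter $m^{-1}$, and let $\bm{j}^*_{w}$ denote the softargmax outcome on weight vector $w$. Since, conditional on $\bm{j}^* = j$, the final output is distributed as $\mathcal{A}_j$ applied to the corresponding weight vector, coupling the two choices of $\bm{j}^*$ optimally gives
\[
  \EM \;\leq\; 2\OPT \cdot \TV\bigl(\bm{j}^*_w, \bm{j}^*_{w+\delta\bm{1}_u}\bigr) \;+\; \max_{j}\, \EM\bigl((\mathcal{A}_j(w),w),(\mathcal{A}_j(w+\delta\bm{1}_u),w+\delta\bm{1}_u)\bigr),
\]
up to an additive $O(\delta)$. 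The plan is to bound the two summands separately and show that the second dominates.

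For the first summand, note that $u$ lies in $V(G_j)$ for every $j$ except possibly the unique index $j_0$ with $u \in R_i$ and $i \equiv j_0 \pmod{m}$; hence $|\OPT_j(w) - \OPT_j(w+\delta\bm{1}_u)| \leq \delta$ uniformly in $j$. Applying Lemma~\ref{lem:softmax_tv} with $p = m$ and $\epsilon = m^{-1}$ yields $\TV(\bm{j}^*_w, \bm{j}^*_{w+\delta\bm{1}_u}) \leq O(m \log m)\,\delta/\max_j \OPT_j$. Since Baker's lemma gives $\max_j \OPT_j \geq (1 - 1/m)\OPT \geq \OPT/2$ for $m \geq 2$, the first summand contributes at most $O(m \log m)\cdot \delta = O(\varepsilon^{-1}\log \varepsilon^{-1})\cdot \delta$ to the EM distance, which is comfortably absorbed into the target bound.

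For the second summand, each $G_j$ has treewidth at most $3m - 2 = O(\varepsilon^{-1})$, so I can invoke the analysis behind \Cref{thm:alg_independentset} for each $G_j$ with approximation parameter $m^{-1}$. Concretely, compute a binary \HR-parse tree of $G_j$ of height $O(\log n)$ via \Cref{lem:td_boundeddiam} and \Cref{prop:height-equivalence}, instantiate \Cref{lem:combineall} with $h = O(\log n)$, inner accuracy $m^{-1}/\Theta(\log n)$, and $p_{\max} = 2^{O(\varepsilon^{-1})}$ from \Cref{sec:dp_independent}; this produces a per-$j$ Lipschitz bound of $O((\varepsilon^{-1} + \log \log n)\,\varepsilon^{-1}\log^2 n)\cdot \delta$ after absorbing $\log \varepsilon^{-1}$ into $\varepsilon^{-1}$. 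Adding the two summands yields the stated bound. The main technical obstacle is the coupling step: one must argue that changing $w$ at a single coordinate perturbs \emph{all} of the softargmax scores $\OPT_j$ simultaneously by at most $\delta$, while only the single slice $G_{j_0}$ even sees the perturbation internally; the uniform perturbation bound on $\OPT_j$ together with Baker's lower bound on $\max_j \OPT_j$ is precisely what makes the softargmax contribution scale with $\delta$ rather than $\OPT$.
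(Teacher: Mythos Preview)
Your argument is correct and follows essentially the same route as the paper: decompose the earth-mover distance into a $\TV(\bm{j}^*_w,\bm{j}^*_{w+\delta\bm{1}_u})\cdot O(\OPT)$ term bounded via \Cref{lem:softmax_tv} together with Baker's lower bound $\max_j \OPT_j \geq (1-1/m)\OPT$, plus a per-slice term handled by \Cref{thm:alg_independentset} applied to each $G_j$ with treewidth $O(\varepsilon^{-1})$. You are in fact slightly more explicit than the paper about why each $\OPT_j$ moves by at most $\delta$ (needed for the hypothesis of \Cref{lem:softmax_tv}) and about the factor $2$ in front of $\OPT$, but these are cosmetic; the structure and the final bound are identical.
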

\begin{proof}
Let $w\in \mathbb{R}_{\geq 0}^{V}$, $u\in V$, and $\delta > 0$.
We bound $\EM\left(\bm{X}_w,\bm{X}_{w+\delta \bm{1}_u}\right)$, where $\bm{X}_w$ and $\bm{X}_{w+\delta \bm{1}_u}$ are the output of our algorithm on weight vector $w$ and $w+\delta \bm{1}_{u}$, respectively.
We denote the variable $\bm{j}^*$ corresponding to the weight vector $w$ and $w+\delta \bm{1}_u$ by $\bm{j}^*$ and $\bm{j}^*_{w+\delta \bm{1}_u}$, respectively.
From Lemma~\ref{lem:softmax_tv}, we have 
\begin{align*}
\TV\left(\bm{j}^*_w,\bm{j}^{*}_{w+\delta \bm{1}_u}\right)
\leq \frac{10m\log (2m\varepsilon^{-1})\delta}{\max_{j\in \{0,\dots, m-1\}}\OPT_j}
\leq \frac{10m\log (2p\varepsilon^{-1})\delta}{\left(1-\frac{1}{m}\right)\OPT}.
\end{align*}
Therefore, we have
\begin{align*}
\EM\left(\bm{X}_w,\bm{X}_{w+\delta \bm{1}_u}\right)
&\leq \TV\left(\bm{j}^*_w,\bm{j}^{*}_{w+\delta\bm{1}_u}\right) \OPT 
+ O\left(m+\log m + \log \log n\right)m\log^2 n\delta\\
&\leq \left(O(m\log m)+O\left(\left(m + \log \log n\right)m\log^2 n\right)\right)\delta\\
&\leq O\left(\left(m + \log \log n\right)m\log^2 n\right)\delta\\
&= O\left(\left(\varepsilon^{-1} + \log \log n\right)\varepsilon^{-1}\log^2 n\right)\delta,
\qedhere
\end{align*}
where the first inequality is from Theorem~\ref{thm:alg_independentset}.
\end{proof}


\section{Lower Bounds for Max Ones}\label{sec:lower_bound}

In this section, we prove \Cref{thm:lowerbound}.
To this end, we use a known sensitivity lower bound for the maximum cut problem.
In the maximum cut problem, given a graph $G=(V,E)$, the goal is to find a bipartition of $V$ that maximizes the number of edges between the two parts.
Then, the \emph{sensitivity} of a randomized algorithm $\mathcal A$ for the maximum cut problem on a graph $G=(V,E)$ is defined to be
\[
    \max_{e \in E}\EM(\mathcal A(G), \mathcal A(G-e)),
\]
where $G-e$ denotes the graph obtained from $G$ by deleting $e$ and the distance $d$ underlying the earth mover's distance is the Hamming distance, i.e., $d(S,S') = |S\triangle S'|$.
We use the following lower bound:
\begin{lemma}[\cite{fleming2024sensitivity}]\label{lem:max-cut-lb}
    There exist $\varepsilon,\delta>0$ such that any $(1-\varepsilon)$-approximation algorithm for the maximum cut problem sensitivity $\Omega(n^\delta)$, even on constant-degree graphs.
\end{lemma}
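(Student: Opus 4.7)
My plan is to prove this sensitivity lower bound via a PCP-style reduction, broken into three stages. First, I would establish an unconditional sensitivity lower bound for a base constraint satisfaction problem on constant-degree instances, say 3-LIN. Concretely, I would construct for each $n$ a family of instance pairs $(\Phi, \Phi')$ that differ in a single clause, such that every $(1-\varepsilon_0)$-approximate assignment to $\Phi$ must disagree with every $(1-\varepsilon_0)$-approximate assignment to $\Phi'$ on $\Omega(n^{\delta_0})$ variables. Such families can be obtained from gap-amplified PCPs whose underlying constraint graph is a good expander: gap amplification propagates the ``influence'' of a single clause globally, so any near-optimal assignment must reorganize itself on polynomially many coordinates after a single clause is deleted. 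By an averaging/coupling argument, any approximation algorithm achieving the claimed guarantee on both $\Phi$ and $\Phi'$ must then exhibit expected Hamming drift $\Omega(n^{\delta_0})$, which is the CSP sensitivity bound.

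Second, I would transfer this CSP lower bound to Max-Cut via a gap-preserving and sensitivity-preserving reduction. Standard composition-based reductions from 3-LIN to Max-Cut, using local gadgets in the style of Trevisan--Sorkin--Sudan--Williamson, encode each clause as a constant-size cluster of edges. Consequently, a single edge deletion in the resulting Max-Cut instance corresponds to a local perturbation of at most one underlying clause. The inverse decoding, from a near-optimal cut to a near-satisfying assignment, can be made Lipschitz in Hamming distance, and coupling this decoding across the two input graphs transfers the $\Omega(n^{\delta_0})$ CSP sensitivity to an $\Omega(n^{\delta})$ Max-Cut sensitivity for some $\delta<\delta_0$ that absorbs the polynomial instance blow-up. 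Constant degree is maintained throughout by starting from Dinur-style constant-degree PCPs and using bounded-size gadgets; the small approximation factor $\varepsilon$ follows from the PCP soundness gap composed with the gadget's completeness/soundness parameters.

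The main obstacle I anticipate is engineering the reduction to be simultaneously gap-preserving \emph{and} sensitivity-preserving. Classical PCP reductions are tuned for the former, and the inner tests they typically employ (long-code or Hadamard-code tests) spread the influence of a single constraint over many edges, which would collapse the sensitivity bound. The crux is either to devise a strictly local gadget whose footprint is confined to $O(1)$ edges per clause, or to establish an explicit optimal-coupling argument showing that the output distribution of any $(1-\varepsilon)$-approximation algorithm can only drift by $\tilde{O}(n^{\delta})$ coordinates under the kind of structured single-edge perturbations produced by the reduction. A secondary issue is handling the rounding of cuts to Boolean assignments in a Lipschitz way; this is usually done by deterministic thresholding when the gadget is unambiguous, but if ambiguity exists one must argue that near-optimal cuts concentrate on a single assignment, which is where expander-based amplification inside the PCP is essential.
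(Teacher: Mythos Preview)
The paper does not prove this lemma at all: it is stated with the citation \cite{fleming2024sensitivity} and used as a black box in the proof of \Cref{thm:lowerbound}. There is therefore no ``paper's own proof'' to compare your proposal against.

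That said, your proposal is a plausible high-level reconstruction of how the cited result of Fleming and Yoshida is obtained. The paper's related-work section confirms that the cited work ``constructed a PCP framework to prove the sensitivity lower bound for the constraint satisfaction problem,'' which is exactly the paradigm you outline: start from a PCP-based CSP sensitivity lower bound and push it to Max-Cut via a local, gap-preserving gadget reduction. Your identification of the main obstacle---making the gadget reduction \emph{sensitivity}-preserving rather than merely gap-preserving---is the right place to focus, and is indeed the nontrivial content of the cited work. What remains a sketch in your write-up is the first stage: you assert that gap-amplified expander PCPs force any near-optimal assignment to reorganize on $\Omega(n^{\delta_0})$ coordinates after a single clause deletion, but this does not follow automatically from the approximation gap alone and requires a dedicated argument (this is the core of the Fleming--Yoshida framework). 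If you intend to actually write out a proof rather than cite the lemma, that step is where the real work lies.
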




\begin{proof}[Proof of \Cref{thm:lowerbound}]
    Let $\varepsilon,\delta>0$ be as in \Cref{lem:max-cut-lb}.
    Assume that there exists a $(1-\varepsilon)$-approximation algorithm $\mathcal A_{\rm MO}$ for the max ones problem with Lipschitz constant $s(n)$, where $n$ is the number of variables in the input 3CNF formula.
    Below, we design a low sensitivity algorithm $\mathcal A_{\rm MC}$ for the maximum cut problem using $\mathcal A_{\rm MO}$.

    Given an unweighted graph $G=(V,E)$, we construct a 3CNF formula $\varphi$ over $X := \{x_v : v \in V\} \cup \{x_{uv} : \{u,v\} \in  \binom{V}{2}\}$ as follows.
    For each pair $\{u,v\} \in \binom{V}{2}$, we add constraints over $x_u$, $x_v$, and $x_{uv}$ that are satisfied if and only if (i) $x_{uv}=1$ and $x_u \neq x_v$ or (ii) $x_{uv}=0$.
    More specifically, we add constraints $(\bar x_{uv} \vee x_u \vee x_v)$ and $(\bar x_{uv}\vee \bar x_u \vee \bar x_v)$.
    Note that $\varphi$ does not depend on $E$.
    Next, we define a weight function $w:X \to \mathbb R_{\geq 0}$ such that $w(x_u) = 0$ for every $u \in V$ and $w(x_{uv}) = 1$ if and only if $\{u,v\} \in E$.
    Then, we run the algorithm $\mathcal A_{\rm MO}$ on the instance $(\varphi,w)$, and let $\sigma:X \to \{0,1\}$ be the obtained solution.
    Finally, let $S := \sigma^{-1}(1) \cap V$ be the solution for $G$, and outputs $S$ with probability half and $V\setminus S$ with the remaining probability of half.

    Note that the total weight of edges cut by $S$ in $G$ is equal to the total weight of $\sigma^{-1}(1)$.
    Hence, $\mathcal A_{\rm MC}$ is a $(1-\varepsilon)$-approximation algorithm for the maximum cut problem.
    
    Now, we analyze the sensitivity of $\mathcal A_{\rm MC}$.
    Let $G=(V,E)$ and $G'=(V,E')$ be two graphs with $|E \triangle E'|=1$.
    Let $w$ and $w'$ be the weight functions constructed from $G$ and $G'$, respectively.
    Note that $\|w-w'\|_1=1$.  
    Let $\sigma$ and $\sigma'$ be the outputs of $\mathcal A_{\rm MO}$ on $(\varphi,w)$ and $(\varphi,w')$, respectively.
    Let $F = \{\{u,v\} \in E \mid \sigma(x_{uv})=1 \}$ and $F' = \{\{u,v\} \in E' \mid \sigma'(x_{uv})=1 \}$.
    Let $S$ and $S'$ be the outputs of $\mathcal A_{\mathrm MC}$ for $G$ and $G'$, respectively.
    Then, we have
    \[
        \EM(S, S') \leq \Delta \cdot \EM(F,F') = O(\EM(F,F')),
    \]
    where $\Delta = O(1)$ is the degree bound of the graphs $G$ and $G'$.
    Because $\mathcal A_{\rm MO}$ has Lipschitz constant $s(n)$ and the number of variables in $\varphi$ is $n^3$, we have 
    \begin{align*}
        & \EM(S, S') \leq O(\EM(F,F')) 
        = 
        O(\EM((\sigma^{-1}(1),w), ((\sigma')^{-1}(1),w'))) \\
        & = O\left(\frac{\EM((\sigma,w), (\sigma',w'))}{\|w-w'\|_1}\right) = O(s(n^3)).        
    \end{align*}
    This implies that $\mathcal A_{\rm MC}$ has sensitivity at most $O(s(n^3))$.
    By \Cref{lem:max-cut-lb}, we must have $s(n) = \Omega(n^{\delta/3})$.
\end{proof}

\section{Clique-Width}\label{sec:DP_clique}

In this section, we prove \Cref{thm:alg_max_clique,thm:alg_min_clique}.
Since the proofs are similar to that for \Cref{thm:alg_max_clique}, we provide most of the proof of \Cref{thm:alg_min_clique} in \Cref{app:minimization}.
Let $G$ be a vertex-colored graph with $n$ vertices.
A \emph{VR-term} $\mathcal{T}$ of $G$ represents a procedure that constructs $G$ by recursively applying the following operations~\cite{CourcelleE2012}:
\begin{itemize}
    \item Create a vertex with color $i$.
    \item Take the disjoint union of two vertex-colored graphs with VR-terms $T_1, T_2$.
    \item Choose two different colors $i, j$ and add edges between all vertices with color $i$ and all vertices with color $j$.
    \item Choose two different colors $i, j$ and recolor all vertices with color $i$ by color $j$.
\end{itemize}
The \emph{width} of a VR-term is the number of distinct colors that appear in the overall procedure.
For a vertex $v$ in $G$, the \emph{height} of $v$ with respect to a VR-term $\mathcal{T}$ is the number of times that the disjoint union operation is applied to a graph containing $v$ during the construction procedure.  
The \emph{height} of $\mathcal{T}$ is the maximum height among all vertices with respect to $\mathcal{T}$.  
The \emph{clique-width} of a graph $G$ is the minimum width among all VR-terms of $G$.
We have the following.
\begin{lemma}[\cite{courcelle2007graph,Oum08}]\label{lem:cw_boundedheight}
Let $G$ be a graph with $n$ vertices and $\cw$ be the clique-width of $G$. 
Then, a VR-term of $G$ with width $O^*(2^\cw)$ and height $O(\log n)$ can be computed in $2^{O(\cw)} n^3$ time.
\end{lemma}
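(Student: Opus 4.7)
The plan is to obtain the VR-term in two stages: first compute \emph{some} VR-term of width $2^{O(\cw)}$ using an approximation algorithm for clique-width (or equivalently rank-width), then \emph{rebalance} it to bring the height down to $O(\log n)$ while blowing up the width by only a constant factor.

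For the first stage, I would invoke Oum's approximation algorithm from \cite{Oum08}. Given a graph $G$, that algorithm computes a rank decomposition of $G$ whose width is at most $2\cw+1$ in time $2^{O(\cw)}n^3$. Since rank-width and clique-width differ by at most an exponential factor, this rank decomposition can be translated syntactically into a VR-term whose width is $2^{O(\cw)}$ on the same underlying tree (the leaves correspond to single vertices, internal nodes to disjoint union together with the required edge-creation and recoloring operations dictated by the boundary information of the rank decomposition). At this point we already have a VR-term of the claimed width, but possibly of height $\Theta(n)$.

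For the second stage, I would apply the balancing procedure for VR-terms from \cite{courcelle2007graph}, which is the VR-analogue of Bodlaender--Hagerup balancing for tree decompositions. The key idea is to repeatedly identify a disjoint-union subterm whose underlying graph contains between $n/3$ and $2n/3$ of the vertices, split the term along this node, and recursively balance both halves. Because each subterm is characterized, up to semantic equivalence, by its boundary behavior (which color classes it exports and how they interact under the finitely many available VR operations), the restructuring can be carried out algebraically using the identities among parallel composition, edge-addition, and recoloring. This yields a VR-term of height $O(\log n)$; the boundary information that must be tracked during rebalancing increases the color palette by only a constant factor, which is absorbed in the $O^*(2^\cw)$ bound.

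The main obstacle will be the second stage: unlike tree decompositions, VR-terms include the unary edge-creation and recoloring operations, which commute with parallel composition only in restricted ways. Pushing these unary operations through a rebalancing step requires a careful case analysis to ensure that the graph denoted by the term is preserved. Fortunately this analysis has been carried out in \cite{courcelle2007graph}, so the lemma follows by composing the two cited constructions and verifying that the total running time is dominated by Oum's $2^{O(\cw)}n^3$ step, with the balancing contributing only linear overhead in the size of the term.
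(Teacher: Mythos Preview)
Your proposal is correct and matches the paper's treatment: the paper does not give its own proof of this lemma but simply cites \cite{courcelle2007graph,Oum08}, and your two-stage decomposition (Oum's $2^{O(\cw)}n^3$ approximation to obtain a VR-term of width $2^{O(\cw)}$, followed by the Courcelle--Kant\'e balancing to reduce the height to $O(\log n)$ at a constant-factor cost in width) is exactly what those two citations supply.
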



\begin{theorem}[\cite{CourcelleE2012}]\label{thm:mso-vr-split}
    Let $r, q, k$ be positive integers.
    Let $\varphi({X})$ be a \CMSO{r}{q}-formula over $k$-colored graph with a second-order free variable ${X}$. Then, the following holds.
    \begin{enumerate}
    \item  For any $i,j \in [k]$, there exist \CMSO{r}{q}-formulas $\psi({X})$ and $\psi'({X})$ 
    such that for any colored graph $G$, we have 
    \begin{gather*}
    \sat(\eta_{i,j}(G), \varphi)  =  \sat(G, \psi), \quad \sat(\rho_{i\to j}(G), \varphi) =  \sat(G, \psi'),
    \end{gather*}
    where, $\eta_{i,j}(G)$ is the graph obtained from $G$ by adding edges between all vertices with color $i$ and all vertices with color $j$, 
    $\rho_{i\to j}(G)$ is the graph that obtained from $G$ by recoloring all vertices with color $i$ by color $j$.
    
    \item There exists a family of tuples
      $\braces{\theta_i({X}), \psi_i({X})}_{i \in [p]}$
      of \CMSO{r}{q}-formulas with a free variable ${X}$
      such that, for any $k$-colored $G$ and $H$,
    \[
        \sat(G\oplus H, \varphi) =  \biguplus_{i \in [p]} 
         \sat(G, \theta_i) \boxtimes  \sat(H, \psi_i).
    \]
    \end{enumerate}
\end{theorem}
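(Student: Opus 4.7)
My plan is to handle the three operations separately, treating the two unary operations by direct syntactic translation and the binary disjoint union by a Feferman--Vaught style composition argument, following the approach of \cite{CourcelleE2012}.

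For part~1, observe that $\eta_{i,j}(G)$ differs from $G$ only in that every pair consisting of a color-$i$ vertex and a color-$j$ vertex becomes adjacent. Since the color classes are part of the logical signature, I construct $\psi$ by recursively replacing every atomic subformula of the form $\adj(x,y)$ in $\varphi$ by
\[
    \adj(x,y) \lor (\mathrm{col}_i(x) \land \mathrm{col}_j(y)) \lor (\mathrm{col}_j(x) \land \mathrm{col}_i(y)).
\]
This substitution preserves both the quantifier height and the modular counting predicates, so $\psi$ is a \CMSO{r}{q}-formula of the same parameters. For $\rho_{i\to j}$, I similarly replace every atomic subformula $\mathrm{col}_i(x)$ by $\false$ and every $\mathrm{col}_j(x)$ by $\mathrm{col}_i(x) \lor \mathrm{col}_j(x)$. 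Correctness in both cases follows by a routine induction on formula structure.

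For part~2, I appeal to the composition theorem for \CMSO{r}{q} over disjoint unions. The key fact is that, for fixed $r,q,k$, there are only finitely many \CMSO{r}{q}-formulas with a free set variable $X$ up to logical equivalence; fix a finite family $\Theta$ of representatives. Given $X \subseteq V(G \oplus H)$, set $X_G := X \cap V(G)$ and $X_H := X \cap V(H)$; these are separated, so $X = X_G \dunion X_H$. An Ehrenfeucht--Fra\"{i}ss\'{e} game argument shows that whether $G \oplus H \models \varphi(X)$ depends only on which formulas in $\Theta$ are satisfied by $(G, X_G)$ and by $(H, X_H)$: a winning duplicator strategy on each component can be combined coordinate-wise into a winning strategy on the disjoint union. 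Hence, for every pair $(\theta, \psi) \in \Theta \times \Theta$, one can decide, uniformly in $G, H, X_G, X_H$, whether the combined assignment satisfies $\varphi$; enumerating the affirmative pairs gives the desired decomposition with $p \leq |\Theta|^2$.

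The main obstacle is extending the composition argument to the counting predicates. A subformula $\Card_{m,r'}(X)$ evaluates in $G \oplus H$ to $(|X_G|+|X_H|) \equiv m \pmod{r'}$, so the plan is to branch over all pairs $(m_1, m_2) \in \{0,\dots,r'-1\}^2$ with $m_1 + m_2 \equiv m \pmod{r'}$, and to refine each type by recording the residues of $|X|$ modulo every $r' \leq r$ appearing in $\varphi$. Since only finitely many such moduli occur, the refined type system remains finite, and the resulting $p$ is bounded by a computable function of $r$, $q$, and $k$, as required. The disjointness of the union on the right-hand side is enforced by taking representatives that are pairwise mutually exclusive, which is possible because we are enumerating complete types.
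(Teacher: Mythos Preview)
The paper does not prove this theorem; it is stated with the citation \cite{CourcelleE2012} and used as a black box, so there is no ``paper's own proof'' to compare against. Your outline is the standard argument one finds in that reference: syntactic relativisation of the atomic predicates for the unary operations $\eta_{i,j}$ and $\rho_{i\to j}$, and a Feferman--Vaught composition via complete \CMSO{r}{q}-types for the disjoint union, with the counting predicates handled by splitting residues modulo each $r'\le r$. This is correct and nothing essential is missing; the disjointness of the $\biguplus$ is indeed guaranteed by using mutually exclusive complete types, and the separation hypothesis for $\boxtimes$ is automatic since $V(G)$ and $V(H)$ are disjoint.
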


\subsection{Definitions}\label{sec:dp_definition_clique}

For a graph $G$, a weight vector $w$, and an \MSOo formula $\varphi$, we define  
\begin{align}
    \OPT_w[G,\varphi] = \max_{G\models \varphi(S)}w(S).\label{eq:def_opt_max_clique}
\end{align}
for maximization problems and
\begin{align}
    \OPT_w[G,\varphi] = \min_{G\models \varphi(S)}w(S).\label{eq:def_opt_min_clique}
\end{align}
for minimization problems.
If $\sat(G,\varphi)=\emptyset$, we define $\OPT_w[G,\varphi]=\nil$.
For graphs $G$ and \MSOo formulas $\varphi$ with $\sat(G,\varphi)\neq \emptyset$, our algorithm recursively computes the vertex set $\DP_w[G,\varphi]$ that approximately achieves the maximum in Equation~(\ref{eq:def_opt_max_clique}) or minimum in Equation~(\ref{eq:def_opt_min_clique}). 
When it is clear from the context, we omit the subscript $w$ and simply write $\OPT[G,\varphi]$ and $\DP[G,\varphi]$.
We perform the dynamic programming algorithm using the formulas defined in~\Cref{thm:mso-vr-split} over the parse tree of a term obtained from \Cref{lem:cw_boundedheight}. 
Specifically, at every stage of the algorithm, it is guaranteed that each $X\in \sat(G,\varphi)$ satisfies $X\subseteq V(G)$.
Furthermore, our algorithm is randomized. Thus, $\DP[G,\varphi]$ can be considered as a probability distribution over vertex sets in $\sat(G,\varphi)$.

To bound the approximation ratio, for a weight vector $w\in \mathbb{R}_{\geq 0}^V$, we bound the ratio between $\E\left[w(\DP[G,\varphi])\right]$ and $\OPT[G,\varphi]$.
To bound the Lipschitz constant, for a weight vector $w\in \mathbb{R}_{\geq 0}^V$, $u\in V$, and $\delta > 0$, we bound $\EM\left(\DP_w[G,\varphi],\DP_{w+\delta \bm{1}_u}[G,\varphi]\right)$.
From now on, we will concentrate on maximizing problems.
The algorithm for minimization problems is similar to that for maximization problems, while the detail of the analysis is slightly different.
We discuss the minimization version in \Cref{app:minimization}.


\subsection{Base Case}\label{sec:dp_intro_clique}
Here we consider the base case.
Let $G$ be a graph with a single vertex $v$ and no edges, and $\varphi$ be an \MSOt formula such that $\sat(G,\varphi)$ is nonempty.
Then, $\sat(G,\varphi)$ is either $\{\emptyset\}$, $\{\{v\}\}$, or $\{\emptyset, \{v\}\}$.  
Our algorithm sets $\DP[G,\varphi] := \emptyset$ if $\sat(G,\varphi) = \{\emptyset\}$ and $\DP[G,\varphi] := \{v\}$ otherwise.  
Then, we have $w(\DP[G,\varphi]) = \OPT(G,\varphi)$, so the approximation ratio is $1$.  
Furthermore, the value of $\EM\left(\DP_w[G,\varphi],\DP_{w+\delta \bm{1}_u}[G,\varphi]\right)$ is $\delta$ if $u=v$ and $0$ otherwise.

\subsection{Disjoint Union}\label{sec:dp_disjoint_clique}

Next, we consider the disjoint union.
The algorithm and discussion here are the same as those in \Cref{sec:dp_paracom}, where the parallel composition is replaced by the disjoint union.
Therefore, we omit the technical details and only provide the definition of the DP table, as well as the bounds on approximability and Lipschitz continuity.  
Let
\begin{align}
    \sat(G\oplus H, \varphi) =  \biguplus_{i\in [p]}\sat(G, \theta_{i})  \boxtimes \sat(H, \psi_{i}).\label{eq:dp_paracom_clique}
\end{align}
Assume we have already computed $\DP[G, \theta_{i}]$ and $\DP[H, \psi_{i}]$ for each $i\in [p]$.
For each $i\in [p]$, we denote $\OPT_{i}:=\OPT[G, \theta_{i}] + \OPT[H, \psi_{i}]$.
Then, we take $\bm{i}^*=\softargmax^{\varepsilon}_{i\in [p]}\OPT_i$ and define
$\DP[G\oplus H, \varphi] :=  \DP[G, \theta_{\bm{i}^*}] \cup \DP[H, \psi_{\bm{i}^*}]$.
We have the following lemmas.
\begin{lemma}\label{lem:paracom_approx_clique}
Let $0 < \alpha \leq 1$ and suppose $\E\left[w(\DP[G,\theta_i])\right]\geq \alpha \OPT[G,\theta_i]$ and $\E\left[w(\DP[H,\psi_i])\right]\geq \alpha \OPT[H,\psi_i]$
hold for all $i\in [p]$.
Then, we have $\E\left[w\left(\DP[G\oplus H,\varphi]\right)\right]\geq (1-\varepsilon)\alpha\OPT[G\oplus H,\varphi]$.
\end{lemma}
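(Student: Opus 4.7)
The plan is to mimic the proof of \Cref{lem:paracom_approx} almost verbatim, since the disjoint union $\oplus$ plays the same structural role in the clique-width DP as the parallel composition $\paracom$ played in the treewidth DP: in both cases $\DP[G\oplus H,\varphi]$ is obtained by sampling $\bm{i}^* = \softargmax^{\varepsilon}_{i\in[p]}\OPT_i$ and returning $\DP[G,\theta_{\bm{i}^*}]\cup \DP[H,\psi_{\bm{i}^*}]$, where $\OPT_i = \OPT[G,\theta_i]+\OPT[H,\psi_i]$. The key facts I will combine are: (i) the vertex sets of $G$ and $H$ are disjoint in $G\oplus H$, so weights add linearly across the two sides; (ii) the independence of the random choices made inside $\DP[G,\theta_i]$ and $\DP[H,\psi_i]$ from the softmax sampling; (iii) \Cref{lem:softmax_approx}; and (iv) the identity $\OPT[G\oplus H,\varphi]=\max_{i\in[p]}\OPT_i$, which follows from decomposition \eqref{eq:dp_paracom_clique}.

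First I would unfold $\DP[G\oplus H,\varphi]$ and use disjointness of $V(G)$ and $V(H)$ to write
\[
    w\bigl(\DP[G,\theta_{\bm{i}^*}]\cup \DP[H,\psi_{\bm{i}^*}]\bigr) = w\bigl(\DP[G,\theta_{\bm{i}^*}]\bigr) + w\bigl(\DP[H,\psi_{\bm{i}^*}]\bigr).
\]
Taking expectation and conditioning on $\bm{i}^*$, the independence between the softmax draw and the recursive randomness lets me apply the inductive hypotheses $\E[w(\DP[G,\theta_i])]\geq \alpha\OPT[G,\theta_i]$ and $\E[w(\DP[H,\psi_i])]\geq \alpha\OPT[H,\psi_i]$ for each fixed $i$, yielding
\[
    \E\bigl[w(\DP[G\oplus H,\varphi])\bigr] \geq \alpha\,\E\bigl[\OPT[G,\theta_{\bm{i}^*}]+\OPT[H,\psi_{\bm{i}^*}]\bigr] = \alpha\,\E\bigl[\softmax^{\varepsilon}_{i\in[p]}\OPT_i\bigr].
\]

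Next I would apply \Cref{lem:softmax_approx} to bound the softmax expectation below by $(1-\varepsilon)\max_{i\in[p]}\OPT_i$, and finally use the decomposition \eqref{eq:dp_paracom_clique} (from \Cref{thm:mso-vr-split}~(2)) to identify $\max_{i\in[p]}\OPT_i$ with $\OPT[G\oplus H,\varphi]$, giving the desired $(1-\varepsilon)\alpha$ bound.

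There is no real obstacle here; the argument is essentially mechanical and parallels \Cref{lem:paracom_approx} line by line. The only point worth stating explicitly in the write-up is the independence used in step two, i.e., that the random choices producing $\DP[G,\theta_i]$ and $\DP[H,\psi_i]$ are made independently of $\bm{c}$ in the softmax, so that the inductive approximation factor $\alpha$ may be pulled outside the expectation over $\bm{i}^*$.
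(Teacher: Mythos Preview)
Your proposal is correct and follows exactly the approach the paper intends: the paper itself omits the proof of \Cref{lem:paracom_approx_clique}, stating that ``the algorithm and discussion here are the same as those in \Cref{sec:dp_paracom}, where the parallel composition is replaced by the disjoint union,'' so your plan to replicate the proof of \Cref{lem:paracom_approx} line by line is precisely what is expected. Your added remarks on disjointness of $V(G)$ and $V(H)$ and on the independence of the recursive randomness from the softmax draw are correct refinements that the paper leaves implicit.
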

\begin{lemma}\label{lem:paracom_lip_clique}
Let $\beta\in \mathbb{R}_{\geq 0}$ and suppose $\EM\left(\DP_{w}[G,\theta_i],\DP_{w+\delta\bm{1}_{u}}[G,\theta_i]\right)\leq \beta$
holds for all $i\in [p]$.
Then, we have $\EM\left(\DP_{w}[G\oplus H,\varphi],\DP_{w+\delta\bm{1}_{u}}[G\oplus H,\varphi]\right) \leq 30\varepsilon^{-1}\log(2p\varepsilon^{-1})\delta + \beta$.
\end{lemma}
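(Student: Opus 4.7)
The statement is the clique-width analogue of \Cref{lem:paracom_lip}, and I would prove it by essentially copying the proof of \Cref{lem:paracom_lip} almost verbatim, with the parallel composition $\paracom$ replaced by disjoint union $\oplus$. The only structural subtlety is that, in the tree-width case, one has to argue that $V(G)\setminus \src_G$ and $V(H)\setminus \src_H$ are disjoint so that the perturbed vertex $u$ lies in at most one of the two sub-instances; in the clique-width case this is automatic since $V(G)$ and $V(H)$ are disjoint by the very definition of $\oplus$.

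The plan is as follows. First, assume without loss of generality that $u \in V(G)$ (the case $u \in V(H)$ is symmetric, and the case $u \notin V(G) \cup V(H)$ is trivial). Since $u \notin V(H)$, the distribution $\DP_w[H,\psi_i]$ coincides with $\DP_{w+\delta\bm{1}_u}[H,\psi_i]$ and the two can be coupled identically, so perturbing $w$ does not affect the $H$-side at all. Writing $\bm{i}^*_w$ and $\bm{i}^*_{w+\delta\bm{1}_u}$ for the softmax index chosen under the two weight vectors, construct a coupling for the output of the combined DP by taking the optimal coupling of $\bm{i}^*_w$ and $\bm{i}^*_{w+\delta\bm{1}_u}$, then, conditioned on the chosen indices, taking the optimal coupling of $\DP_w[G,\theta_i]$ with $\DP_{w+\delta\bm{1}_u}[G,\theta_i]$, and using the identical coupling on the $H$-side.

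Under this coupling, whenever $\bm{i}^*_w = \bm{i}^*_{w+\delta\bm{1}_u} = i$ the weighted Hamming distance between the two outputs is at most $\EM(\DP_w[G,\theta_i], \DP_{w+\delta\bm{1}_u}[G,\theta_i]) \leq \beta$; and whenever they differ we use the trivial bound $\OPT_w[G\oplus H,\varphi] + \OPT_{w+\delta\bm{1}_u}[G\oplus H,\varphi]$ on the distance. Invoking \Cref{lem:softmax_tv} gives
\[
    \TV(\bm{i}^*_w, \bm{i}^*_{w+\delta\bm{1}_u}) \leq \frac{10\varepsilon^{-1}\log(2p\varepsilon^{-1})\delta}{\OPT_w[G\oplus H,\varphi]},
\]
and since the hypothesis of that lemma in turn forces $\delta \leq \OPT_w[G\oplus H,\varphi]$, we get $\OPT_w + \OPT_{w+\delta\bm{1}_u} \leq 3\,\OPT_w[G\oplus H,\varphi]$. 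Combining these two observations yields the desired bound $30\varepsilon^{-1}\log(2p\varepsilon^{-1})\delta + \beta$.

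There is no serious obstacle here; the lemma is essentially a transplant of \Cref{lem:paracom_lip}, and the only thing to verify is the harmless observation that $V(G)$ and $V(H)$ are disjoint under $\oplus$, which replaces the slightly more delicate reasoning about sources in the \HR-algebra. The whole argument is a direct repetition of the corresponding three-line calculation in \Cref{lem:paracom_lip}.
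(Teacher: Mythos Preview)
Your proposal is correct and is exactly what the paper does: it explicitly states that the disjoint-union case is identical to \Cref{sec:dp_paracom} with $\paracom$ replaced by $\oplus$ and omits the details, so the proof of \Cref{lem:paracom_lip_clique} is literally a transplant of the proof of \Cref{lem:paracom_lip}. Your remark that the disjointness of $V(G)$ and $V(H)$ is automatic under $\oplus$ (replacing the source-set reasoning in the \HR\ case) is the one structural difference, and you have identified it correctly.
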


\subsection{Putting Together}

Here we complete the proof of \Cref{thm:alg_max_clique}.
Let $G=(V,E)$ be a graph with clique-width $\cw$.
From \Cref{lem:td_boundeddiam}, we can compute a VR-term $\mathcal{T}$ over a $k$-colored graph denoting $G$ with height $O(\log n)$.
We perform the dynamic programming algorithm on the parse tree of $\mathcal{T}$.
We remark here that operations other than disjoint union do not affect either the approximability or the Lipschitz continuity, by setting  
\begin{align*}
    \DP[\eta_{i,j}(G), \varphi]  :=  \DP[G, \psi], \quad \DP[\rho_{i\to j}(G), \varphi] :=  \DP[G, \psi'].
\end{align*}
We have the following.
\begin{lemma}\label{lem:combineall_clique}
Let $\varepsilon \in (0,1]$ and $h$ be the height of the VR-term $\mathcal{T}$.
Our algorithm outputs a solution $X\in \sat(G,\varphi)$ such that $\E[w(X)]\geq (1-h\varepsilon)\OPT[G,\varphi]$.
The Lipschitz constant is bounded by $30(h+1)\varepsilon^{-1}\log(2p_{\max}\varepsilon^{-1})$, where $p_{\max}$ is the maximum of $p$ among all update formula~(\ref{eq:dp_paracom_clique}) that the algorithm uses.
\end{lemma}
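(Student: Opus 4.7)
The proof proceeds by structural induction on the VR-term $\mathcal{T}$, paralleling the argument used for \Cref{lem:combineall}. The plan is to track, for each subterm $t$ producing a colored graph $G_t$ together with each \MSO-formula $\psi$ arising in the dynamic programming, two quantities: (i) the approximation ratio $\alpha_t$ such that $\E[w(\DP[G_t,\psi])] \geq \alpha_t \cdot \OPT[G_t,\psi]$ and (ii) a bound $\beta_t$ such that $\EM(\DP_w[G_t,\psi],\DP_{w+\delta\mathbf{1}_u}[G_t,\psi]) \leq \beta_t$ uniformly over $w, u, \delta$.

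For the base case, a subterm producing a single-vertex graph contributes $\alpha=1$ and $\beta \leq \delta$, as established in \Cref{sec:dp_intro_clique}. For a subterm of the form $\eta_{i,j}(t')$ or $\rho_{i\to j}(t')$, the algorithm sets $\DP[\eta_{i,j}(G_{t'}),\varphi] := \DP[G_{t'},\psi]$ and $\DP[\rho_{i\to j}(G_{t'}),\varphi] := \DP[G_{t'},\psi']$, where $\psi,\psi'$ are given by \Cref{thm:mso-vr-split}(1). Since $\sat(\eta_{i,j}(G_{t'}),\varphi) = \sat(G_{t'},\psi)$ and analogously for $\rho$, both the objective value and the output distribution are unchanged, so neither $\alpha$ nor $\beta$ is affected. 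For a subterm $t = t_1 \oplus t_2$, \Cref{lem:paracom_approx_clique} yields $\alpha_t \geq (1-\varepsilon)\min(\alpha_{t_1},\alpha_{t_2})$, and \Cref{lem:paracom_lip_clique} yields $\beta_t \leq \max(\beta_{t_1},\beta_{t_2}) + 30\varepsilon^{-1}\log(2p_{\max}\varepsilon^{-1})\delta$.

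Next I count contributions along any root-to-leaf path. By definition of the height of $\mathcal{T}$, every vertex is enclosed in at most $h$ disjoint-union operations, so each branch of the parse tree invokes the disjoint-union update at most $h$ times. Iterating the approximation recursion along such a branch yields $\alpha_{\mathrm{root}} \geq (1-\varepsilon)^h \geq 1 - h\varepsilon$. Iterating the Lipschitz recursion, starting from $\beta \leq \delta$ at the leaves and adding $30\varepsilon^{-1}\log(2p_{\max}\varepsilon^{-1})\delta$ at each of at most $h$ disjoint-union nodes, yields
\[
\beta_{\mathrm{root}} \leq \delta + h\cdot 30\varepsilon^{-1}\log(2p_{\max}\varepsilon^{-1})\delta \leq 30(h+1)\varepsilon^{-1}\log(2p_{\max}\varepsilon^{-1})\delta,
\]
using $\varepsilon^{-1}\log(2p_{\max}\varepsilon^{-1})\geq 1$. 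Dividing by $\delta$ and invoking \Cref{lem:seeoneelement} gives the stated Lipschitz constant.

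The only subtle point is confirming that the edge-addition and recoloring steps genuinely contribute nothing, both to approximation and to Lipschitzness; this is immediate because the algorithm literally reuses the child's DP entry without any randomization or optimization, so the distribution over outputs is identical. The rest is a routine induction combining the per-operation bounds. I expect no substantive obstacle, since the analysis mirrors \Cref{lem:combineall} with the parallel-composition node replaced by the disjoint-union node and the forget operation absent.
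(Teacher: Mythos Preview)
Your proposal is correct and follows essentially the same approach as the paper: the paper's proof simply states that the approximability bound follows by repeatedly applying \Cref{lem:paracom_approx_clique} together with $(1-\varepsilon)^h\geq 1-h\varepsilon$, and the Lipschitz bound by repeatedly applying \Cref{lem:paracom_lip_clique}. Your write-up is a fuller elaboration of exactly this induction, including the (correct) observation that the $\eta_{i,j}$ and $\rho_{i\to j}$ steps are inert; the only minor slip is that the justification ``$\varepsilon^{-1}\log(2p_{\max}\varepsilon^{-1})\geq 1$'' for absorbing the base-case $\delta$ is slightly stronger than needed---what you actually use is $30\varepsilon^{-1}\log(2p_{\max}\varepsilon^{-1})\geq 1$, which holds trivially.
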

\begin{proof}
The approximability bound is obtained by repeatedly applying \Cref{lem:paracom_approx_clique} and $(1-\varepsilon)^h\geq 1-h\varepsilon$.
The Lipschitzness bound is obtained by repeatedly applying \Cref{lem:paracom_lip_clique}.
\end{proof}

Since $p_{\max}$ is bounded by a function of $k$ and $|\varphi|$, we have the following.
\begin{proof}[Proof of \Cref{thm:alg_max_clique}]
    The claim follows by substituting $\varepsilon$, $h$, and $k$ in \Cref{lem:combineall} with $\frac{\varepsilon}{\Theta(\log n)}$, $O(\log n)$, and $O^*(2^k)$, respectively.
\end{proof}

\bibliography{bibliography}

\appendix
\section{Omitted Proofs on Logic}\label{app:logics}
\begin{proof}[Proof of \cref{prop:height-equivalence}]
Let $(\mathcal B, \mathcal T)$ be a tree decomposition of $G$.
Let $t\in V(\mathcal T)$.
Let $\mathcal T_t$ denote the subtree of $\mathcal T$ consisting of all descendants of $t$.
Note that $\mathcal T_t \subseteq V(G)$ since a node of $\mathcal T$ is in $\mathcal B$.

	We recursively transform a tree decomposition $(\mathcal B, \mathcal T)$ to a \HR-parse tree.
	In this proof, we denote the height of the parse tree of a term $t$ by $\height(t)$.
	Let $v=B_v$ be a node of $\mathcal T$. We have the following three cases.
	\begin{description}
		\item[$v$ is a leaf node:] 
    For a graph $H$, let $\Cons{Graph}(H)$ denote the term $(\paracom_{i\in V(H)} \Cons{i}) \paracom (\paracom_{\{i,j\} \in E(H)}\Cons{ij})$.
    We arbitrarily identify $B_v$ with $\{1, \ldots, |B_v|\}$.
		Then, it is clear that the term $\Cons{Graph}(G[B_v])$ denote the graph $G[B_v]$.
		When $\Cons{Graph}(G[B_v])$ is expanded linearly, its height is $O(|B_v|^2) = O(k^2)$.
		However, by restructuring it to a balanced binary tree, the height can be reduced to $O(\log k)$ since parallel-composition operation has the associative property.
		\item[$v$ has exactly one child node $c=B_c$:] 
		Let $t_c$ be the term representing $G[\mathcal T_c]$.
		Let $G_\Delta = G[B_v \setminus B_c]$.
		Then, the term $\Cons{Graph}(G_\Delta) \paracom \fg_{B_c \setminus B_v}(t_c)$ denotes the graph $G[\mathcal T_v]$ and its height is at most $\max\{\log(k),\height(t_c)\}+2$.

		\item[$v$ has exactly two child nodes $c_1=B_{c_1}$ and $c_2=B_{c_2}$:] 
		Let $t_1$ and $t_2$ be the terms representing $G[\mathcal T_{c_1}]$ and $G[\mathcal T_{c_2}]$, respectively.
		Let $G_\Delta = G[B_v \setminus (B_{c_1} \cup B_{c_2})]$.
		Then, the term $\Cons{Graph}(G_\Delta) \paracom \fg_{B_{c} \setminus B_v}(t_1) \paracom \fg_{B_{c_2} \setminus B_v}(t_{2})$
		denotes the graph $G[\mathcal T_v]$ and the height is at most $\max\{\log(k),\allowbreak \height\parens{t_1}, \height\parens{t_2}\}+3$.
	\end{description}
	It is easy to verify that the resulting parse tree represents $G$, and the height is at most $O(\log k + h)$. 
\end{proof}

\begin{proof}[Proof of \cref{cor:separeted-forget-paracom}]
    First we show the first statement of \cref{cor:separeted-forget-paracom}.
    Since $S' \subseteq B \subseteq \src(G)$, $\sat(G,\psi\frest{S\cup S'})$ and $\{S'\}$ is separated.
    It is clear that families of vertex sets $(\sat(G,\psi\frest{S\cup S'})\boxtimes \{S'\})_{S' \subseteq B}$ are disjoint.
    
    Let $\psi(X)$ be a \CMSO{r}{q}-formula obtained by applying the first statement of \cref{thm:mso-fg-split} to $\varphi(X)$ and $B$.
    Then, $\fg_B(G) \models \varphi(A) \iff G \models \psi(A)$ for any vertex set $A$.
    We show that $\sat(\fg_B(G), \varphi\frest{S}) = \biguplus_{S'\subseteq B}\sat(G,\psi\frest{S\cup S'})\boxtimes \{S'\}.$
    
    Let $A \in \sat(\fg_B(G), \varphi\frest{S})$ and $A' = (\src(G) \cap A)\setminus S$.
    Since $\fg_B(G) \models \varphi \frest{S}(X)$, we have 
    $\fg_B(G) \models \varphi(A \cup S)$ and $\fg_B(G) \models \nosrc(A)$.
    Thus, $G \models \psi(A\cup S)$ and then $G \models \psi((A \setminus A') \cup (A \cup A'))$.
    Furthermore, $A \cap \src({\fg_B(G)}) = \varnothing$ and thus,
    \[
      A'= (\src(G) \cap A) \setminus S 
      = \parens[big]{\parens{\src(G) \setminus \src(\fg_B(G))} \cap A}\setminus S 
      = (B \cap A) \setminus S.
    \]
    Since $B \cap S = \varnothing$, we have $A' = B \cap A$, and thus, $G \models \nosrc(A\setminus A')$.
    Thus, $A \in \sat(G,\psi\frest{S\cup A'})\boxtimes \{A'\}$ and $A' \subseteq B$.
    
    Let $A \in \sat(G,\psi\frest{S\cup S'})\boxtimes \{S'\}$ for some $S' \subseteq B$.
    Then, $G \models \psi((A\setminus S') \cup S \cup S')$.
    Thus, $\fg_B(G) \models \varphi(A \cup S)$.
    Since $G \models \nosrc(A\setminus S')$ and $\src(\fg_B(G)) = \src(G) \setminus B \subseteq \src(G) \setminus S'$,
    we have $A \cap \src(\fg_B(G)) = \varnothing$. Thus, $A \in \sat(\fg_B(G), \varphi\frest{S})$.
    
    Next, we show the second statement of \cref{cor:separeted-forget-paracom}.
    From \cref{thm:mso-fg-split}, 
      there exists a family of tuples $\braces{\theta'_i({X}), \psi'_i({X})}_{i\in [p']}$
        of \CMSO{r}{q}-formulas with a free variable ${X}$
        such that, for any \kgraphs{$k$} $G$ and $H$,
        \begin{equation*} 
            \sat(G\paracom H, \varphi) =  \biguplus_{ i \in [p']} 
             \{S \cup P : S \in \sat(G, \theta'_i),  P \in \sat(H, \psi'_i)\}.
        \end{equation*}
    Then, we have
    \begin{align*}
        \sat&(G\paracom H, \varphi\frest{S}) \boxtimes \{S\}\\
        &= \sat(G\paracom H, \varphi)\cap \{X\colon \src\parens{G\paracom H}\cap X = S\}\\
        & 
        =\parens[\Bigg]{
          \biguplus_{i\in[p]}
             \braces*{S \cup P : S \in \sat(G, \theta'_i),  P \in \sat(H, \psi'_i)}
          } 
          \cap \{X\colon \src\parens{G\paracom H}\cap X = S\}\\
        & \begin{aligned}
          = \biguplus_{i\in [p]}\bigcup_{S'\subseteq \src\parens{G}}\bigcup_{S''\subseteq \src(H)}
          \parens{\{S' \cup S''\} \boxtimes \sat(G,\theta_i\frest{S'})\boxtimes 
            \sat(H,\psi_i\frest{S''})} \\ 
           {} \cap \{X\colon \src\parens{G\paracom H}\cap X = S\}
        \end{aligned}\\
        &= \biguplus_{i\in [p]}\bigcup_{\substack{(S', S'')\colon\\ S'\cup S'' = S,\\S'\subseteq \src(G),\ S''\subseteq \src(H)}}\sat(G,\theta_i\frest{S'})\boxtimes \sat(H,\psi_i\frest{S''})\boxtimes \{S\}.
    \end{align*}
    Thus, there exists a family of tuples
    $\braces{\theta_i\frest{S_i}({X}), \psi_i\frest{S'_i}({X})}_{i \in [p]}$ of \CMSO{r}{q}-formulas with free variables ${X}$
    such that, for any \kgraphs{$k$} $G$ and $H$,
    \[
        \sat(G\paracom H, \varphi\frest{S}) =
        \bigcup_{i \in [p']}\sat(G,\theta_i\frest{S_i})\boxtimes \sat(H,\psi_i\frest{S'_i}).
    \]
    Using the same method as the proof of \cite[Propopsition 5.37]{CourcelleE2012}, $\bigcup$ can be replaced with the disjoint union $\biguplus$,
    and the proof is complete.
\end{proof}
\section{Analysis of Softmax and Softmin}\label{app:softmax}

In this section, we provide the approximation ratios and stabilities of $\softmax$ and $\softmin$.
The following lemma is useful.
\begin{lemma}
Let $x\geq \delta>0$ and $q > 0$.
Consider sampling $\bm{c}$ uniformly from $\left[\frac{q}{x},\frac{2q}{x}\right]$ and $\bm{c}'$ uniformly from $\left[\frac{q}{x+\delta},\frac{2q}{x+\delta}\right]$.
Then, we have $\TV(\bm{c},\bm{c}')\leq \frac{2\delta}{x}$.
\end{lemma}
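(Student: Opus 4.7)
The plan is to compute the total variation distance directly from the densities of the two uniform distributions. Let $a_1 = q/x$, $b_1 = 2q/x$, $a_2 = q/(x+\delta)$, and $b_2 = 2q/(x+\delta)$, so that $\bm{c}$ has density $f \equiv x/q$ on $[a_1, b_1]$ and $\bm{c}'$ has density $f' \equiv (x+\delta)/q$ on $[a_2, b_2]$. Since $\delta > 0$, we immediately get $a_2 < a_1$ and $b_2 < b_1$, and because $x \geq \delta$ we also have $a_1 \leq b_2$ (equivalent to $x + \delta \leq 2x$), so the two intervals overlap on $[a_1, b_2]$ and their union is exactly $[a_2, b_1]$.

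Next, I would express $\TV(\bm{c}, \bm{c}') = \tfrac{1}{2}\int |f(t) - f'(t)|\,dt$ as a sum of three integrals, one over each of the regions $[a_2, a_1]$, $[a_1, b_2]$, and $[b_2, b_1]$. On the first region only $f'$ is nonzero, on the third only $f$ is nonzero, and on the middle region the difference $|f - f'|$ equals the constant $\delta/q$. Computing the three interval lengths gives $a_1 - a_2 = q\delta / (x(x+\delta))$, $b_2 - a_1 = q(x-\delta)/(x(x+\delta))$, and $b_1 - b_2 = 2q\delta/(x(x+\delta))$. Multiplying each by the corresponding density and summing yields the closed-form expression
\[
    \TV(\bm{c}, \bm{c}') = \frac{1}{2}\left(\frac{\delta}{x} + \frac{\delta(x-\delta)}{x(x+\delta)} + \frac{2\delta}{x+\delta}\right).
\]

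Finally, I would bound the three terms individually: the first is exactly $\delta/(2x)$; the second is at most $\delta/(2x)$ using $x - \delta \leq x + \delta$; and the third is $\delta/(x+\delta) \leq \delta/x$ since $x + \delta \geq x$. Summing gives the claimed bound $2\delta/x$. There is no substantive obstacle here; the only subtle point is verifying that the two intervals genuinely overlap, which is where the hypothesis $x \geq \delta$ is used. If this hypothesis failed the two supports would be disjoint and the TV distance would be $1$, so the assumption is essential rather than merely convenient.
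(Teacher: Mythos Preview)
Your proof is correct and follows essentially the same approach as the paper: direct computation of the total variation distance from the two uniform densities. The paper's version is a bit more compact because it uses the one-sided formula $\TV = \int_{\{f>f'\}}(f-f')$ and observes that $\{f>f'\}=[b_2,b_1]$ where $f'=0$, yielding $\TV = \Pr[\bm{c}\in[b_2,b_1]] = 2\delta/(x+\delta)\le 2\delta/x$ in one line, whereas your three-region split with the symmetric formula $\tfrac12\int|f-f'|$ arrives at the same exact value (before your term-by-term bound) via a slightly longer route.
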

\begin{proof}
We have
\[
    \TV\left(\bm{c},\bm{c}'\right)
     = \Pr\left[\bm{c}\in \left[\frac{2q}{x+\delta},\frac{2q}{x}\right]\right]
    = \frac{\frac{2q}{x}-\frac{2q}{x+\delta}}{\frac{2q}{x}-\frac{q}{x}}
    = 2 - \frac{2x}{x+\delta}
    = \frac{2\delta}{x+\delta}\leq \frac{2\delta}{x}.\qedhere
\]
\end{proof}

Now we analyze the $\softmax$ operation.

\begin{proof}[Proof of Lemma~\ref{lem:softmax_approx}]
We have
\begin{align*}
    \E\left[\softmax^{\epsilon}_{i\in [p]}x_i\right]
    &\geq \Pr\left[\softmax^{\epsilon}_{i\in [p]}x_i \geq \left(1-\frac{\epsilon}{2}\right)\max_{i\in [p]} x_i\right]\cdot \left(1-\frac{\epsilon}{2}\right)\max_{i\in [p]}x_i\\
    &\geq \left(1-p\cdot \frac{\exp\left(c\left(1-\frac{\epsilon}{2}\right)\max_{i\in[p]}x_i\right)}{\sum_{i\in [p]}\exp(cx_i)}\right)\cdot \left(1-\frac{\epsilon}{2}\right)\max_{i\in [p]}x_i\\
    &\geq \left(1-p\cdot \frac{\exp\left(c\left(1-\frac{\epsilon}{2}\right)\max_{i\in[p]}x_i\right)}{\max_{i\in [p]}\exp(cx_i)}\right)\cdot \left(1-\frac{\epsilon}{2}\right)\max_{i\in [p]}x_i\\
    &= \left(1-p\cdot \frac{\exp\left(c\left(1-\frac{\epsilon}{2}\right)\max_{i\in[p]}x_i\right)}{\exp\left(c\max_{i\in [p]}x_i\right)}\right)\cdot \left(1-\frac{\epsilon}{2}\right)\max_{i\in [p]}x_i\\
    &= \left(1-p\cdot \exp\left(-\frac{c\epsilon}{2}\max_{i\in[p]}x_i\right)\right)\cdot \left(1-\frac{\epsilon}{2}\right)\max_{i\in [p]}x_i\\
    &\geq \left(1-p\cdot \exp\left(\log(2p\epsilon^{-1})\right)\right)\cdot \left(1-\frac{\epsilon}{2}\right)\max_{i\in [p]}x_i\\
    &= \left(1-\frac{\epsilon}{2}\right)\cdot \left(1-\frac{\epsilon}{2}\right)\max_{i\in [p]}x_i
    \geq (1-\epsilon)\max_{i\in [p]}x_i.\qedhere
\end{align*}
\end{proof}

\begin{proof}[Proof of Lemma~\ref{lem:softmax_tv}]
Let $\bm{c}$ and $\bm{c}'$ be the vectors sampled when computing $\softargmax^{\epsilon}_{i\in [p]}x_i$ and $\softargmax^{\epsilon}_{i\in [p]}x_i'$, respectively.
Let $I = \left[\frac{2\epsilon^{-1}\log(2p\epsilon^{-1})}{\max_{i\in [p]}x_i},\frac{4\epsilon^{-1}\log(2p\epsilon^{-1})}{\max_{i\in [p]}x_i'}\right]$.
For each parameter $c\in I$, we consider transporting probability mass for $\bm{c}=c$ to that for $\bm{c}'=c$ as far as possible. The remaining mass is transported arbitrarily. Then, we have
\begin{align*}
    &\TV\left(\softargmax^{\epsilon}_{i\in [p]}x_i, \softargmax^{\epsilon}_{i\in [p]}x_i'\right)\\
    &\leq \TV\left(\bm{c},\bm{c}'\right) + \sup_{c\in I}\TV\left(\left(\softargmax^{\epsilon}_{i\in [p]}x_i\mid \bm{c}=c\right), \left(\softargmax^{\epsilon}_{i\in [p]}x_i'\mid \bm{c'}=c\right)\right).
\end{align*}
Now, for $c\in I$, we have
\begin{align*}
    &\TV\left(\left(\softargmax^{\epsilon}_{i\in [p]}x_i\mid \bm{c}=c\right), \left(\softargmax^{\epsilon}_{i\in [p]}x_i'\mid \bm{c'}=c\right)\right)\\
    &=\sum_{i\in [p]}\max\left(0,\Pr\left[\softargmax^{\epsilon}_{i\in [p]}x_i'=i\mid \bm{c}'=c\right]-\Pr\left[\softargmax^{\epsilon}_{i\in [p]}x_i=i\mid \bm{c}=c\right]\right)\\
    &= \sum_{i\in [p]}\max\left(0,\frac{\exp(cx_{i}')}{\sum_{i'\in [p]}\exp(cx_{i'}')}-\frac{\exp(cx_{i})}{\sum_{i'\in [p]}\exp(cx_{i'})}\right)\\
    &\leq \sum_{i\in [p]}\frac{\exp(cx_{i}')-\exp(cx_{i})}{\sum_{i'\in [p]}\exp(cx_{i'})}
    = \sum_{i\in [p]}\frac{(\exp(c(x_{i}'-x_{i}))-1)\exp(cx_{i})}{\sum_{i'}\exp(cx_{i'})}\\
    &\leq \sum_{i\in [p]}\frac{(\exp(c\delta)-1)\exp(cx_{i})}{\sum_{i'}\exp(cx_{i'})}
    = \exp(c\delta)-1
    \leq 2c\delta
    \leq \frac{8\epsilon^{-1}\log (2p\epsilon^{-1})\delta}{\max_{i\in [p]}x_i},
\end{align*}
where the fourth inequality is from the fact that $\exp(x)-1\leq 2x$ holds for $x \leq 1$.
Therefore, we have
\[
    \TV\left(\softargmax^{\epsilon}_{i\in [p]}x_i, \softargmax^{\epsilon}_{i\in [p]}x_i'\right)
    \leq \frac{2\delta}{\max_{i\in [p]}x_i} + \frac{8\epsilon^{-1}\log (2p\epsilon^{-1})\delta}{\max_{i\in [p]}x_i}\leq \frac{10\epsilon^{-1}\log (2p\epsilon^{-1})\delta}{\max_{i\in [p]}x_i}.\qedhere
\]
\end{proof}

Now we analyze the $\softmin$ operation.
\begin{proof}[Proof of Lemma~\ref{lem:softmin_approx}]
For $t\geq \frac{1}{2}$, we have
\begin{align*}
    \Pr\left[x_{\bm{i}^*}\geq \left(1+t\epsilon\right)\min_{i\in [p]}x_i\right]
    &\leq \frac{p\exp(-c(1+t\epsilon)\min_{i\in [p]}x_i)}{\sum_{i\in [p]}\exp(-cx_i)}
    \leq \frac{p\exp(-c(1+t\epsilon)\min_{i\in [p]}x_i)}{\max_{i\in [p]}\exp(-cx_i)}\\
    &= \frac{p\exp(-c(1+t\epsilon)\min_{i\in [p]}x_i)}{\exp(-c\min_{i\in [p]}x_i)}
    = p\exp\left(-ct\epsilon\min_{i\in [p]}x_i\right)\\
    &= p\exp(-tb\log(2p\epsilon^{-1}))
    \leq p\cdot \left(\frac{\epsilon}{2p}\right)^{2t} 
    \leq \left(\frac{\epsilon}{2}\right)^{2t}.
\end{align*}
Therefore, we have
\begin{align*}
    \E[x_{\bm{i}^*}]
    &= \int_{0}^{\infty}\Pr\left[x_{\bm{i}^*}\geq \left(1+t\epsilon\right)\min_{i\in [p]}x_i\right]dt\cdot  \min_{i\in [p]}x_i\\
    &\leq \left(\left(1+\frac{\epsilon}{2}\right)+\int_{1+\frac{\epsilon}{2}}^{\infty}\Pr\left[x_{\bm{i}^*}\geq \left(1+t\epsilon\right)\min_{i\in [p]}x_i\right]dt\right)\cdot  \min_{i\in [p]}x_i\\
    &\leq \left(\left(1+\frac{\epsilon}{2}\right)+\int_{1+\frac{\epsilon}{2}}^{\infty}\left(\frac{\epsilon}{2}\right)^{2t}dt\right)\cdot  \min_{i\in [p]}x_i\\
    &\leq \left(\left(1+\frac{\epsilon}{2}\right)+\left(\frac{\epsilon}{2}\right)^{2+\epsilon}\cdot 2\log\left(2\epsilon^{-1}\right)\right)\cdot  \min_{i\in [p]}x_i\\
    &\leq \left(\left(1+\frac{\epsilon}{2}\right)+\left(\frac{\epsilon}{2}\right)^{2}\cdot 2\right)\cdot  \min_{i\in [p]}x_i
    \leq (1+\epsilon)\min_{i\in [p]}x_i.\qedhere
\end{align*}
\end{proof}

\begin{proof}[Proof of Lemma~\ref{lem:softmin_tv}]
Let $\bm{c}$ and $\bm{c}'$ be the vectors sampled when computing $\softargmin^{\epsilon}_{i\in [p]}x_i$ and $\softargmin^{\epsilon}_{i\in [p]}x_i'$, respectively.
Let $I = \left[\frac{2\epsilon^{-1}\log(2p\epsilon^{-1})}{\min_{i\in [p]}x_i},\frac{4\epsilon^{-1}\log(2p\epsilon^{-1})}{\min_{i\in [p]}x_i'}\right]$.
For each parameter $c\in I$, we consider transporting probability mass for $\bm{c}=c$ to that for $\bm{c}'=c$ as far as possible. The remaining mass is transported arbitrarily. Then, we have
\begin{align*}
    &\TV\left(\softargmin^{\epsilon}_{i\in [p]}x_i, \softargmin^{\epsilon}_{i\in [p]}x_i'\right)\\
    &\leq \TV\left(\bm{c},\bm{c}'\right) + \sup_{c\in I}\TV\left(\left(\softargmin^{\epsilon}_{i\in [p]}x_i\mid \bm{c}=c\right), \left(\softargmin^{\epsilon}_{i\in [p]}x_i'\mid \bm{c'}=c\right)\right).
\end{align*}
Now, for $c\in I$, we have
\begin{align*}
    &\TV\left(\left(\softargmin^{\epsilon}_{i\in [p]}x_i\mid \bm{c}=c\right), \left(\softargmin^{\epsilon}_{i\in [p]}x_i'\mid \bm{c'}=c\right)\right)\\
    &=\sum_{i\in [p]}\max\left(0,\Pr\left[\softargmin^{\epsilon}_{i\in [p]}x_i=i\mid \bm{c}=c\right]-\Pr\left[\softargmin^{\epsilon}_{i\in [p]}x_i'=i\mid \bm{c}'=c\right]\right)\\
    &= \sum_{i\in [p]}\max\left(0,\frac{\exp(-cx_{i})}{\sum_{i'\in [p]}\exp(-cx_{i'})}-\frac{\exp(-cx_{i}')}{\sum_{i'\in [p]}\exp(-cx_{i'}')}\right)\\
    &\leq \sum_{i\in [p]}\frac{\exp(-cx_{i})-\exp(-cx_{i}')}{\sum_{i'\in [p]}\exp(-cx_{i'})}
    = \sum_{i\in [p]}\frac{(1-\exp(-c(x_{i}'-x_{i})))\exp(-cx_{i})}{\sum_{i'\in [p]}\exp(-cx_{i'})}\\
    &\leq \sum_{i\in [p]}\frac{(1-\exp(-c\delta))\exp(-cx_{i})}{\sum_{i'\in [p]}\exp(-cx_{i'})}
    = 1-\exp(-c\delta)
    \leq 2c\delta
    \leq \frac{8\epsilon^{-1}\log (2p\epsilon^{-1})\delta}{\min_{i\in [p]}x_i},
\end{align*}
where the fourth inequality is from the fact that $1-\exp(-x)\leq x\leq 2x$ holds for $x \leq 1$.
Therefore, we have
\[
    \TV\left(\softargmin^{\epsilon}_{i\in [p]}x_i, \softargmin^{\epsilon}_{i\in [p]}x_i'\right)
    \leq \frac{2\delta}{\min_{i\in [p]}x_i} + \frac{8\epsilon^{-1}\log (2p\epsilon^{-1})\delta}{\min_{i\in [p]}x_i}\leq \frac{10\epsilon^{-1}\log (2p\epsilon^{-1})\delta}{\min_{i\in [p]}x_i}.\qedhere
\]
\end{proof}
\section{Minimization}\label{app:minimization}

\subsection{Treewidth}\label{sec:minimize_tree}

Here, we provide the minimization version of the framework from Section~\ref{sec:DP}. 
The algorithm is obtained simply by replacing $\softargmax$ with $\softargmin$. 
Some parts of the analysis are identical to that in Section~\ref{sec:DP}. 
Specifically, for the introduce operation, exactly the same results can be obtained with exactly the same proof. 
Furthermore, by replacing $(1+\epsilon)$ with $(1-\epsilon)$ and reversing all inequalities in the proofs of Lemmas~\ref{lem:paracom_approx}~and~\ref{lem:forget_approx}, we obtain the following.
\begin{lemma}\label{lem:paracom_approx_min}
Let $0 < \alpha \leq 1$ and assume 
\begin{align*}
    \E\left[w(\DP[G,\varphi^G_i])\right]&\leq \alpha \OPT[G,\varphi^G_i],\\
    \E\left[w(\DP[H,\varphi^H_i])\right]&\leq \alpha \OPT[H,\varphi^H_i]
\end{align*}
holds for all $i\in [p]$.
Then, we have $\E\left[w\left(\DP[G\paracom H,\varphi]\right)\right]\leq (1+\epsilon)\alpha\OPT[G\paracom H,\varphi]$.
\end{lemma}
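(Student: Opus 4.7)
The plan is to mirror the proof of Lemma~\ref{lem:paracom_approx} almost line-by-line, replacing $\softmax$ with $\softmin$ and reversing every inequality accordingly. The structural observation driving the proof carries over unchanged: by Corollary~\ref{cor:separeted-forget-paracom} we can write $\sat(G \paracom H, \varphi) = \biguplus_{i\in[p]} \sat(G,\theta_i) \boxtimes \sat(H,\psi_i)$ with the two factors separated, and the algorithm picks an index $\bm{i}^* = \softargmin^{\varepsilon}_{i\in[p]} \OPT_i$ where $\OPT_i = \OPT[G,\theta_i] + \OPT[H,\psi_i]$, then outputs $\DP[G,\theta_{\bm{i}^*}] \cup \DP[H,\psi_{\bm{i}^*}]$.

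First I would note that because the union is separated, we have $w(\DP[G,\theta_{\bm{i}^*}] \cup \DP[H,\psi_{\bm{i}^*}]) = w(\DP[G,\theta_{\bm{i}^*}]) + w(\DP[H,\psi_{\bm{i}^*}])$ deterministically. Second, conditioning on $\bm{i}^*$ and applying the per-subproblem hypothesis $\E[w(\DP[G,\theta_i])] \leq \alpha\,\OPT[G,\theta_i]$ (and similarly for $H$) gives
\begin{align*}
    \E\bigl[w(\DP[G\paracom H,\varphi])\bigr]
    &= \E\bigl[w(\DP[G,\theta_{\bm{i}^*}]) + w(\DP[H,\psi_{\bm{i}^*}])\bigr] \\
    &\leq \alpha\,\E\bigl[\OPT[G,\theta_{\bm{i}^*}] + \OPT[H,\psi_{\bm{i}^*}]\bigr] = \alpha\,\E\Bigl[\softmin{}^{\varepsilon}_{i\in[p]}\OPT_i\Bigr].
\end{align*}
Third, I would invoke Lemma~\ref{lem:softmin_approx} to bound the last expectation by $(1+\varepsilon)\min_{i\in[p]}\OPT_i$, which by Corollary~\ref{cor:separeted-forget-paracom} equals $\OPT[G\paracom H,\varphi]$.

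One small subtlety to flag: in the maximization proof, the inductive hypothesis is invoked inside the expectation by linearity, and the sign of $\alpha$ lets us pull it through cleanly; here the same maneuver works because $\alpha \geq 0$ and $\OPT \geq 0$, so multiplying by $\alpha$ preserves the direction of the inequality $\E[w(\DP)] \leq \alpha\,\OPT$. Apart from this sign check, there is no new obstacle — the only nontrivial ingredient, the softmin bound, is already packaged as Lemma~\ref{lem:softmin_approx}, which is the direct analogue of Lemma~\ref{lem:softmax_approx} used in the maximization case. Thus the entire proof is two displayed lines of inequalities, and I expect no step to be significantly harder than its maximization counterpart.
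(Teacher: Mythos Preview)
Your proposal is correct and follows exactly the approach the paper itself indicates: the paper explicitly states that the proof is obtained from that of Lemma~\ref{lem:paracom_approx} by replacing $(1-\varepsilon)$ with $(1+\varepsilon)$ and reversing all inequalities, which is precisely what you do (conditioning on $\bm{i}^*$, applying the per-subproblem hypothesis, then invoking Lemma~\ref{lem:softmin_approx}). The only cosmetic mismatch is notation --- the lemma statement writes $\varphi^G_i,\varphi^H_i$ while you use the $\theta_i,\psi_i$ of Corollary~\ref{cor:separeted-forget-paracom} --- but these denote the same formulas.
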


\begin{lemma}\label{lem:forget_approx_min}
Let $0 < \alpha \leq 1$ and assume 
\begin{align*}
    \E\left[w(\DP[G,\varphi_i])\right]&\leq \alpha \OPT[G,\varphi_i]
\end{align*}
holds for all $i\in [p]$.
Then, we have $\DP[\fg_{B}(G),\varphi]\leq (1+\epsilon)\alpha\OPT[\fg_{B}(S),\varphi]$.
\end{lemma}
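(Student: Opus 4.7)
The lemma is the minimization mirror of \Cref{lem:forget_approx}; the natural proof follows that template step-by-step, substituting $\softargmin$ for $\softargmax$. Before writing it out I would flag one issue: for minimization $w(\DP[G,\varphi_i])\ge\OPT[G,\varphi_i]$ always holds whenever $\DP[G,\varphi_i]$ is feasible, so the hypothesis $\E[w(\DP[G,\varphi_i])]\le\alpha\,\OPT[G,\varphi_i]$ is only informative for $\alpha\ge 1$. I will verify the conclusion in that regime, where $(1+\epsilon)\alpha\,\OPT[\fg_B(G),\varphi]$ is a nontrivial approximation guarantee and the argument is the genuine analog of the maximization version.

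The first step is to exploit disjointness: by construction $\DP[\fg_B(G),\varphi]=\DP[G,\varphi_{\bm i^\ast}]\cup S_{\bm i^\ast}$ with $\bm i^\ast=\softargmin^{\epsilon}_{i\in[p]}\OPT_i$ and $\OPT_i=\OPT[G,\varphi_i]+w(S_i)$, and the two components are disjoint because $\DP[G,\varphi_{\bm i^\ast}]\subseteq V(G)\setminus\src(G)$ by the $\frest{\cdot}$ convention, whereas $S_{\bm i^\ast}\subseteq B\subseteq\src(G)$. Hence $\E[w(\DP[\fg_B(G),\varphi])]=\E[w(\DP[G,\varphi_{\bm i^\ast}])]+\E[w(S_{\bm i^\ast})]$. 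Conditioning on $\bm i^\ast$ and invoking the inductive hypothesis gives $\E[w(\DP[G,\varphi_{\bm i^\ast}])]\le\alpha\,\E[\OPT[G,\varphi_{\bm i^\ast}]]$, and combining with the elementary inequality $w(S_{\bm i^\ast})\le\alpha\,w(S_{\bm i^\ast})$ (which uses $\alpha\ge 1$ and $w(S_{\bm i^\ast})\ge 0$) yields
\begin{align*}
\E\bigl[w(\DP[\fg_B(G),\varphi])\bigr]
\le \alpha\,\E\bigl[\OPT[G,\varphi_{\bm i^\ast}]+w(S_{\bm i^\ast})\bigr]
= \alpha\,\E\bigl[\softmin^{\epsilon}_{i\in[p]}\OPT_i\bigr].
\end{align*}

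Finally, \Cref{lem:softmin_approx} bounds the right-hand expectation by $(1+\epsilon)\min_{i\in[p]}\OPT_i$, and the identity $\min_{i\in[p]}\OPT_i=\OPT[\fg_B(G),\varphi]$ coming from \Cref{cor:separeted-forget-paracom}(1) closes the chain $\E[w(\DP[\fg_B(G),\varphi])]\le(1+\epsilon)\alpha\,\OPT[\fg_B(G),\varphi]$. The only real point of attention is the sign-sensitive combination step $\alpha\,\OPT[G,\varphi_{\bm i^\ast}]+w(S_{\bm i^\ast})\le\alpha(\OPT[G,\varphi_{\bm i^\ast}]+w(S_{\bm i^\ast}))$; its direction is what forces $\alpha\ge 1$ here and $\alpha\le 1$ in the maximization counterpart, but otherwise the proof is completely parallel to that of \Cref{lem:forget_approx}.
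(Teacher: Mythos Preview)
Your proof is correct and follows exactly the approach the paper intends: it explicitly says the proof is obtained from that of \Cref{lem:forget_approx} by reversing all inequalities and swapping $(1-\epsilon)$ for $(1+\epsilon)$. Your flag about the hypothesis is also on point---the constraint $0<\alpha\le 1$ in the lemma statement is a copy-paste slip from the maximization version and should read $\alpha\ge 1$, which is precisely what is needed for the step $\alpha\,\OPT[G,\varphi_{\bm i^\ast}]+w(S_{\bm i^\ast})\le\alpha(\OPT[G,\varphi_{\bm i^\ast}]+w(S_{\bm i^\ast}))$ and what actually occurs downstream in \Cref{lem:combineall_min}.
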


In the analysis of Lipschitz continuity, $\OPT$ cannot be used as a trivial bound for the earth mover's distance. Thus, we need a slightly different proof.
\begin{lemma}\label{lem:paracom_lip_min}
Let $0 < \alpha \leq 1$, $\beta\in \mathbb{R}_{\geq 0}$ and assume 
\begin{align*}
    \E\left[w(\DP[G\paracom H,\varphi])\right]&\leq \alpha \OPT_w[G\paracom H,\varphi],\\
    \E\left[(w+\delta \bm{1}_u)(\DP[G\paracom H,\varphi])\right]&\leq \alpha \OPT_{w+\delta\bm{1}_u}[G\paracom H,\varphi],\\
    \EM\left(\DP_{w}[G,\varphi^{G}_i],\DP_{w+\delta\bm{1}_{u}}[G,\varphi^{G}_i]\right)&\leq \beta
\end{align*}
holds for all $i\in [p]$.
Then, we have
\begin{align*}
    \EM\left(\DP_{w}[G\paracom H,\varphi],\DP_{w+\delta\bm{1}_{u}}[G\paracom H,\varphi]\right) \leq 30\alpha\epsilon^{-1}\log(2p\epsilon^{-1})\delta + \beta.
\end{align*}
\end{lemma}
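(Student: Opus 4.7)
The plan is to mirror the coupling argument used for the maximization variant (Lemma~\ref{lem:paracom_lip}), with modifications to accommodate the fact that in the minimization setting the DP output weight has no pointwise upper bound of $\OPT_w[G\paracom H,\varphi]$. First, I would construct the analogous coupling: couple $\bm{i}^*_w$ and $\bm{i}^*_{w+\delta\bm{1}_u}$ maximally, so by Lemma~\ref{lem:softmin_tv} they agree with probability at least $1-\TV$ where $\TV \leq 10\epsilon^{-1}\log(2p\epsilon^{-1})\delta/\OPT_w[G\paracom H,\varphi]$. When the two indices agree at some value $i$, I would use the inductive coupling of $\DP_w[G,\varphi_i^G]$ and $\DP_{w+\delta\bm{1}_u}[G,\varphi_i^G]$ (whose expected cost is at most $\beta$); when they disagree, draw independent samples. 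As in the maximization proof, one may assume $u\in V(G)\setminus\src(G)$ without loss of generality, so the $H$-side distributions coincide.

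Next, I would decompose the expected earth-mover distance into the agree and disagree events. The agree event contributes at most $\beta$ by the inductive hypothesis (the $H$-side is unchanged, and the $G$-side uses the inductive coupling). The key departure from the maximization analysis is the disagree event: the maximization proof uses the pointwise bound $w(\DP)\leq\OPT_w$ that follows from feasibility, yielding a contribution of $\TV\cdot(\OPT_w+\OPT_{w+\delta\bm{1}_u})$. For minimization this pointwise bound fails; instead, I would replace it by the expected-weight assumptions $\E[w(\DP_w[G\paracom H,\varphi])]\leq \alpha\OPT_w[G\paracom H,\varphi]$ and $\E[(w+\delta\bm{1}_u)(\DP_{w+\delta\bm{1}_u}[G\paracom H,\varphi])]\leq \alpha\OPT_{w+\delta\bm{1}_u}[G\paracom H,\varphi]$ to obtain an analogous bound with the additional factor $\alpha$, namely $\TV\cdot\alpha(\OPT_w+\OPT_{w+\delta\bm{1}_u})$.

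The main obstacle is justifying this substitution rigorously, since replacing a pointwise upper bound by an expected one inside a TV-multiplied term is not valid by elementary inequalities alone; a naive decomposition into conditional expectations only yields $\E[d_{\mathrm w}\mathbf{1}[\text{disagree}]]\leq \alpha(\OPT_w+\OPT_{w+\delta\bm{1}_u})$, which is not $\delta$-proportional. I would resolve this by exploiting the concentration property of softmin: an index $i$ whose softmin probability $q_i$ is not negligible must have $\OPT_{w,i}$ close to $\OPT_w$, so the conditional expected DP-output weight on the $i$-th branch is $O(\alpha\OPT_w)$, and therefore the contribution $\sum_i \max(0, q_i - q'_i)\cdot \E[w(\DP_w[G,\theta_i]\cup\DP_w[H,\psi_i])]$ from the disagree event is $O(\alpha\OPT_w)\cdot\TV$. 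Combining this with $\OPT_{w+\delta\bm{1}_u}\leq\OPT_w+\delta\leq 2\OPT_w$ (using the implicit assumption $\delta\leq\OPT_w$) and $\epsilon\leq 1$ yields the stated bound $\EM\leq \beta+30\alpha\epsilon^{-1}\log(2p\epsilon^{-1})\delta$.
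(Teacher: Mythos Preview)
Your coupling strategy and the agree/disagree decomposition match the paper exactly. The paper's entire proof is the single chain
\[
\EM \;\le\; \TV(\bm{i}^*_w,\bm{i}^*_{w+\delta\bm 1_u})\,\alpha\bigl(\OPT_w[G\paracom H,\varphi]+\OPT_{w+\delta\bm 1_u}[G\paracom H,\varphi]\bigr)
\;+\;\max_{i\in[p]} \EM\bigl(\DP_w[G,\varphi^G_i],\DP_{w+\delta\bm 1_u}[G,\varphi^G_i]\bigr),
\]
followed by Lemma~\ref{lem:softmin_tv} and $\OPT_{w+\delta\bm 1_u}\le 2\,\OPT_w$. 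That is, the paper simply inserts the factor $\alpha$ into the maximization-case inequality and offers no further justification for this first line --- it is exactly the substitution you call into question. So the concentration argument you sketch is \emph{more} than the paper supplies, not a different route to the same endpoint.

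Your softmin-concentration repair is the natural idea, but as written it has a gap: to conclude that the conditional expected DP weight on branch $i$ is $O(\alpha\,\OPT_w)$ you implicitly use a per-branch approximation bound $\E[w(\DP_w[G,\theta_i]\cup\DP_w[H,\psi_i])]\le \alpha\,\OPT_{w,i}$, whereas the lemma's stated hypotheses give only the composite bound $\sum_i q_i\,\E[w(\DP_{w,i})]\le \alpha\,\OPT_w$, from which the per-branch inequality does not follow. Those per-branch bounds \emph{are} available in the surrounding induction (they are precisely the hypotheses of Lemma~\ref{lem:paracom_approx_min}), so the argument can be carried out in context; but to make the lemma self-contained along your lines you would need to add them to the hypotheses explicitly.
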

\begin{proof}
We have
\begin{align*}
    &\EM\left(\DP_{w}[G\paracom H,\varphi],\DP_{w+\delta\bm{1}_{u}}[G\paracom H,\varphi]\right)\\
    &\leq \TV\left(\bm{i}^{*}_w,\bm{i}^{*}_{w+\delta \bm{1}_u}\right)\alpha\left(\OPT_w[G\paracom H,\varphi] + \OPT_{w+\delta \bm{1}_u}[G\paracom H, \varphi]\right)
    + \max_{i\in [p]}\left(\EM\left(\DP_{w}[G,\varphi^{G}_i],\DP_{w+\delta \bm{1}_u}[G,\varphi^{G}_i]\right)\right)\\
    &\leq \frac{10\alpha \epsilon^{-1}\log (2p\epsilon^{-1})\delta}{\OPT_w[G\paracom H, \varphi,S]}\left(\OPT_w[G\paracom H,\varphi] + \OPT_{w+\delta \bm{1}_u}[G\paracom H, \varphi]\right) + \beta
    \leq 30\alpha \epsilon^{-1}\log(2p\epsilon^{-1})\delta+\beta.
\end{align*}
\end{proof}

\begin{lemma}\label{lem:forget_lip_min}
Let $\beta\in \mathbb{R}_{\geq 0}$ and assume
\begin{align*}
    \E\left[w(\DP[\fg_{B}(G),\varphi])\right]&\leq \alpha \OPT_w[\fg_B(G),\varphi],\\
    \E\left[(w+\delta \bm{1}_u)(\DP[\fg_B(G),\varphi])\right]&\leq \alpha \OPT_{w+\delta\bm{1}_u}[\fg_B(G),\varphi],\\
    \EM\left(\DP_{w}[G,\varphi_{i}],\DP_{w+\delta \bm{1}_u}[G,\varphi_{i}]\right)\leq \beta
\end{align*}
holds for all $i\in [p]$. Then, we have
\begin{align*}
    \EM\left(\DP_{w}[\fg_{B}(G),\varphi],\DP_{w+\delta\bm{1}_{u}}[\fg_{B}(G),\varphi]\right)\leq 31\alpha \epsilon^{-1}\log(2p\epsilon^{-1})\delta + \beta.
\end{align*}
\end{lemma}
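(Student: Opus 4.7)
The plan is to mirror the proof of Lemma~\ref{lem:forget_lip} but with the two modifications already introduced in Lemma~\ref{lem:paracom_lip_min}: switching from $\softargmax$ to $\softargmin$ (so that Lemma~\ref{lem:softmin_tv} replaces Lemma~\ref{lem:softmax_tv}), and replacing the use of $\OPT$ as a naive upper bound on the output weight (which made sense for maximization) by the assumed bound $\alpha \OPT$, which is the only upper bound we have in the minimization setting.

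First I would write $\DP[\fg_B(G),\varphi] = \DP[G,\varphi_{\bm{i}^*}] \cup S_{\bm{i}^*}$ and construct an explicit coupling between the two runs: couple $\bm{i}^*_w$ with $\bm{i}^*_{w+\delta\bm{1}_u}$ optimally, so that the disagreement probability equals their total variation distance; conditional on the indices agreeing on value $i$, use the inductive coupling witnessing the hypothesis $\EM(\DP_w[G,\varphi_i],\DP_{w+\delta\bm{1}_u}[G,\varphi_i]) \leq \beta$; conditional on disagreement, couple arbitrarily and pay the full weights. This yields
\begin{align*}
 \EM &\leq \TV\bigl(\bm{i}^*_w,\bm{i}^*_{w+\delta\bm{1}_u}\bigr) \cdot \bigl(\E[w(\DP_w[\fg_B(G),\varphi])] + \E[(w+\delta\bm{1}_u)(\DP_{w+\delta\bm{1}_u}[\fg_B(G),\varphi])]\bigr) \\
 &\quad + \max_{i\in[p]} \EM\bigl(\DP_w[G,\varphi_i],\DP_{w+\delta\bm{1}_u}[G,\varphi_i]\bigr) + \max_{i\in[p]} d_{\mathrm w}((S_i,w),(S_i,w+\delta\bm{1}_u)).
\end{align*}

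Next I would feed in the three hypotheses. The two upper bounds on the expected output weights give the first parenthesized factor is at most $\alpha(\OPT_w[\fg_B(G),\varphi] + \OPT_{w+\delta\bm{1}_u}[\fg_B(G),\varphi])$; the inductive EM bound controls the middle term by $\beta$; the last term is at most $\delta$ since $S_i$ is the same set in both runs and $\|w-(w+\delta\bm{1}_u)\|_1=\delta$. For the TV term I would invoke Lemma~\ref{lem:softmin_tv} with the sequence $x_i = \OPT[G,\varphi_i] + w(S_i)$, whose minimum over $i$ is exactly $\OPT_w[\fg_B(G),\varphi]$; this gives $\TV \leq 10\epsilon^{-1}\log(2p\epsilon^{-1})\delta / \OPT_w[\fg_B(G),\varphi]$.

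Finally, I would combine these estimates using $\OPT_{w+\delta\bm{1}_u}[\fg_B(G),\varphi] \leq \OPT_w[\fg_B(G),\varphi] + \delta \leq 2\,\OPT_w[\fg_B(G),\varphi]$ (valid in the regime $\delta \leq \OPT_w$ where Lemma~\ref{lem:softmin_tv} is informative) to obtain the dominant $30\alpha\epsilon^{-1}\log(2p\epsilon^{-1})\delta$ contribution, and then absorb the stray additive $\delta$ from the $S_i$-distance into this term, arriving at the claimed $31\alpha\epsilon^{-1}\log(2p\epsilon^{-1})\delta + \beta$. The only substantive point beyond bookkeeping is that, unlike the maximization case, we cannot bound the EM between the two outputs by $\OPT$ directly; this is precisely why the assumptions explicitly include the upper bounds $\E[w(\DP[\fg_B(G),\varphi])] \leq \alpha\,\OPT$ on each side, and consequently why the $\alpha$ factor propagates into the conclusion.
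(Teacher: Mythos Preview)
Your proposal is correct and follows essentially the same route as the paper's proof: the paper likewise bounds $\EM$ by $\TV(\bm{i}^*_w,\bm{i}^*_{w+\delta\bm{1}_u})\cdot \alpha(\OPT_w+\OPT_{w+\delta\bm{1}_u})$ plus the inductive $\beta$ plus the at-most-$\delta$ contribution from $d_{\mathrm w}((S_i,w),(S_i,w+\delta\bm{1}_u))$, then applies Lemma~\ref{lem:softmin_tv} and the inequality $\OPT_{w+\delta\bm{1}_u}\le 2\,\OPT_w$ to reach $30\alpha\epsilon^{-1}\log(2p\epsilon^{-1})\delta+\beta+\delta$ and finally absorbs the stray $\delta$. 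The only cosmetic difference is that you first write the factor as $\E[w(\DP_w)]+\E[(w+\delta\bm{1}_u)(\DP_{w+\delta\bm{1}_u})]$ before invoking the hypotheses, whereas the paper substitutes $\alpha\,\OPT$ directly.
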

\begin{proof}
We have
\begin{align*}
    &\EM\left(\DP_{w}[\fg_{B}(G),\varphi],\DP_{w+\delta\bm{1}_{u}}[\fg_{B}(G),\varphi]\right)\\
    &\leq \TV\left(\bm{i}^{*}_w,\bm{i}^{*}_{w+\delta \bm{1}_u}\right) \alpha \left(\OPT_{w}[\fg_{B}(G),\varphi]+\OPT_{w+\delta \bm{1}_u}[\fg_{B}(G),\varphi]\right)\\
    &\quad + \max_{i\in [p]}\left(\EM\left(\DP_{w}[G,\varphi_i]\cup S_i,\DP_{w+\delta \bm{1}_u}[G,\varphi_i]\cup S_i\right)\right)\\
    &\leq \frac{10\alpha \epsilon^{-1}\log(2p\epsilon^{-1})\delta}{\OPT_w[\fg_{B}(G),\varphi]}\left(\OPT_w[\fg_{B}(G),\varphi]+\OPT_{w+\delta \bm{1}_u}[\fg_{B}(G),\varphi]\right) \\
    &\quad + \max_{i\in [p]}\left(\EM\left(\DP_{w}[G,\varphi_i],\DP_{w+\delta \bm{1}_u}[G,\varphi_i]\right)\right) + \max_{i\in [p]}\left[d_{\mathrm{w}}((S_i,w),(S_i,w+\delta \bm{1}_u))\right]\\
    &\leq 30\alpha \epsilon^{-1}\log(2p\epsilon^{-1})\delta + \beta + \delta
    \leq 31\alpha \epsilon^{-1}\log(2p\epsilon^{-1})\delta + \beta.\qedhere
\end{align*}
\end{proof}

The following lemma corresponds to Lemma~\ref{lem:combineall}, but slightly different.
\begin{lemma}\label{lem:combineall_min}
Let $\epsilon \in \mathbb{R}_{>0}$ and $h$ be the height of $T$.
Our algorithm outputs a solution $X\in \sat(G,\varphi)$ such that $\E[w(X)]\leq (1+\epsilon)^h\OPT[G,\varphi]$.
The Lipschitz constant is bounded by $31h(1+\epsilon)^h\epsilon^{-1}\log(g(t,|\varphi|)\epsilon^{-1})$ for some computable function $g$.
\end{lemma}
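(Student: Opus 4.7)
The plan is to mirror the proof of Lemma~\ref{lem:combineall} by performing a structural induction along the parse tree $t$, but with two nested inductions: first an inductive pass that tracks the approximation factor $\alpha$ at every subterm, then a second pass that uses those $\alpha$ values as inputs to the Lipschitz lemmas. The reason the two passes cannot be merged (unlike in the maximization case) is that Lemmas~\ref{lem:paracom_lip_min} and~\ref{lem:forget_lip_min} explicitly require an approximability parameter $\alpha$ as a precondition, because $\OPT$ is no longer a trivial upper bound on the earth-mover distance in the minimization setting.

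For the approximation ratio, I would induct on the height $d$ of a subterm, proving that at every node of height $d$ the DP output $\DP[G_{t'},\varphi']$ satisfies $\E[w(\DP[G_{t'},\varphi'])] \leq (1+\epsilon)^d \OPT[G_{t'},\varphi']$. The base case (a single source vertex) is exact, so $\alpha = 1$. The inductive step at a parallel-composition (resp.\ forget) node follows from Lemma~\ref{lem:paracom_approx_min} (resp.\ Lemma~\ref{lem:forget_approx_min}), each of which multiplies $\alpha$ by a factor of $(1+\epsilon)$. Setting $d = h$ at the root yields the claimed $(1+\epsilon)^h$ bound.

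For the Lipschitz constant, I would induct again on the height $d$, showing that at a node of height $d$ the earth-mover distance is bounded by $31 d (1+\epsilon)^h \epsilon^{-1}\log(2p_{\max}\epsilon^{-1})\delta$, where $p_{\max}$ is the maximum $p$ over all update formulas (which is bounded by a computable function $g(t,|\varphi|)$ of the treewidth and formula size, since up to logical equivalence there are only finitely many \CMSO{r}{q}-formulas over \kgraphs{$k$}). The base case gives $0$. For the inductive step, I would feed the bound $\alpha = (1+\epsilon)^d \leq (1+\epsilon)^h$ from the first pass into Lemma~\ref{lem:paracom_lip_min} or~\ref{lem:forget_lip_min}, together with the inductive hypothesis $\beta = 31(d-1)(1+\epsilon)^h \epsilon^{-1}\log(2p_{\max}\epsilon^{-1})\delta$; the conclusion of the lemma then telescopes to $31 d (1+\epsilon)^h \epsilon^{-1}\log(2p_{\max}\epsilon^{-1})\delta$ as required.

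The main obstacle, and the place where the proof differs nontrivially from the maximization case, is managing the $\alpha$ dependence cleanly at each recursive step: one must ensure that the approximation precondition needed by the Lipschitz lemma at the current node is precisely the one established by the first-pass induction at that same node, and one must bound $\alpha$ uniformly by the root value $(1+\epsilon)^h$ so that it factors out of the telescoping sum. Beyond this, the argument is a routine unrolling, and the function $g$ is obtained exactly as in the maximization proof by counting the number of formulas produced by Corollary~\ref{cor:separeted-forget-paracom}.
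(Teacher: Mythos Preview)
Your proposal is correct and follows essentially the same approach as the paper: the paper's proof also obtains the approximation bound by repeatedly applying Lemmas~\ref{lem:paracom_approx_min} and~\ref{lem:forget_approx_min}, and then obtains the Lipschitz bound by repeatedly applying Lemmas~\ref{lem:paracom_lip_min} and~\ref{lem:forget_lip_min} together with the uniform bound $\alpha \leq (1+\epsilon)^h$ from the first pass. Your two-pass structure and the explanation of why the $\alpha$ precondition matters in the minimization setting make explicit exactly what the paper compresses into one line.
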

\begin{proof}
The approximability bound is obtained by repeatedly applying Lemmas~\ref{lem:paracom_approx_min}~and~\ref{lem:forget_approx_min}.
The Lipschitzness bound is obtained by repeatedly applying Lemmas~\ref{lem:paracom_lip_min}~and~\ref{lem:forget_lip_min} together with $\alpha\leq (1+\epsilon)^h$.
\end{proof}

Thus, we have the following.
\begin{proof}[Proof of Theorem~\ref{thm:alg_min}]
We prove the theorem by substituting $\epsilon$, $h$, and $t$ in Lemma~\ref{lem:combineall_min} with $\frac{\epsilon}{O(\log n)}$, $O(\log n)$, and $3t-2$, respectively.
Let $c\in \mathbb{R}_{>0}$ be the number such that $c\log n$ is an upper bound of $h$.
We substitute $\epsilon$ by $\frac{\epsilon}{2c\log n}$.
Then, the approximation ratio is bounded as
\begin{align*}
    \left(1+\frac{\epsilon}{2c\log n}\right)^{c\log n}\leq \lim_{n\to \infty}\left(1+\frac{\epsilon}{2n}\right)^n\leq e^\frac{\epsilon}{2}\leq 1+\epsilon,
\end{align*}
where the last inequality is from $\epsilon\in (0,1]$.
The Lipschitz constant is bounded as
\begin{align*}
    &31c\log n\left(1+\frac{\epsilon}{2c\log n}\right)^{c\log n}\cdot 2c\log n\epsilon^{-1}\log\left(g(3t-2,|\varphi|)\cdot 2c\log n\epsilon^{-1}\right)\\
    &\leq O\left(\epsilon^{-1}\log^2 n\log(g(3t-2,|\varphi|)\epsilon^{-1}\log n)\right)
    \leq O\left((f(t,|\varphi|)+\log \epsilon^{-1} + \log \log n)\epsilon^{-1}\log^2 n\right),
\end{align*}
where $f$ is some computable function.
\end{proof}

\subsection{Clique-Width}


Here, we provide the minimization version of the framework from \Cref{sec:DP_clique}.
Like in \Cref{sec:DP_clique}, the discussion and proof are the same as in \Cref{sec:minimize_tree}.
Therefore, we only provide the statements of the following lemmas.

\begin{lemma}\label{lem:paracom_approx_min_clique}
Let $0 < \alpha \leq 1$ and assume 
\begin{align*}
    \E\left[w(\DP[G,\varphi^G_i])\right]&\leq \alpha \OPT[G,\varphi^G_i],\\
    \E\left[w(\DP[H,\varphi^H_i])\right]&\leq \alpha \OPT[H,\varphi^H_i]
\end{align*}
holds for all $i\in [p]$.
Then, we have $\E\left[w\left(\DP[G\oplus H,\varphi]\right)\right]\leq (1+\epsilon)\alpha\OPT[G\oplus H,\varphi]$.
\end{lemma}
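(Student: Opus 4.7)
The plan is to mirror the proof of Lemma~\ref{lem:paracom_approx} almost verbatim, replacing the softmax-based selection with its softmin analogue and swapping the inequality directions throughout. As in \Cref{sec:dp_disjoint_clique}, the dynamic programming table for the disjoint union operation is defined by taking $\bm{i}^{*} = \softargmin^{\varepsilon}_{i\in[p]} \OPT_{i}$ with $\OPT_{i} = \OPT[G,\theta_{i}] + \OPT[H,\psi_{i}]$, and setting $\DP[G\oplus H,\varphi] := \DP[G,\theta_{\bm{i}^{*}}] \cup \DP[H,\psi_{\bm{i}^{*}}]$. Since $V(G)$ and $V(H)$ are disjoint, weight is additive across the two components, which is the only structural property of $\oplus$ that the argument needs.

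The key step is a short expectation calculation. Because the internal randomness used to compute each $\DP[G,\theta_{i}]$ and $\DP[H,\psi_{i}]$ is independent of the randomness in $\bm{i}^{*}$, I can apply the inductive hypotheses coordinate-by-coordinate inside the expectation to obtain
\begin{align*}
    \E\!\left[w(\DP[G\oplus H,\varphi])\right]
    &= \E\!\left[w(\DP[G,\theta_{\bm{i}^{*}}]) + w(\DP[H,\psi_{\bm{i}^{*}}])\right] \\
    &\leq \alpha\, \E\!\left[\OPT[G,\theta_{\bm{i}^{*}}] + \OPT[H,\psi_{\bm{i}^{*}}]\right]
    = \alpha\, \E\!\left[\softmin^{\varepsilon}_{i\in[p]} \OPT_{i}\right].
\end{align*}
Applying Lemma~\ref{lem:softmin_approx} bounds this by $\alpha(1+\varepsilon)\min_{i\in[p]}\OPT_{i}$, and by the decomposition in \Cref{eq:dp_paracom_clique}, $\min_{i\in[p]}\OPT_{i} = \OPT[G\oplus H,\varphi]$, yielding the claimed bound.

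The main conceptual point to verify carefully is the application of the inductive hypothesis under a random index. I would make it explicit by conditioning on $\bm{i}^{*} = i$, writing $\E[w(\DP[G,\theta_{\bm{i}^{*}}])] = \sum_{i}\Pr[\bm{i}^{*}=i]\,\E[w(\DP[G,\theta_{i}])]$, and then using the hypothesis $\E[w(\DP[G,\theta_{i}])] \leq \alpha\,\OPT[G,\theta_{i}]$ for each individual $i$; the same reasoning handles $H$. I do not expect any real obstacle here: the disjoint-union case is strictly simpler than parallel composition (no identified sources) and the only nontrivial ingredient, the approximation guarantee of $\softmin$, is already packaged in Lemma~\ref{lem:softmin_approx}. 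The edge and recolor operations do not change the DP value and so need no analysis at this step, which is why only the disjoint-union rule appears in the lemma.
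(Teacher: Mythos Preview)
Your proof is correct and follows exactly the approach the paper indicates: the paper does not give a standalone proof of this lemma but states that it is obtained from the proof of Lemma~\ref{lem:paracom_approx} by reversing inequalities, replacing $\softmax$/$\max$ with $\softmin$/$\min$, and swapping $\paracom$ for $\oplus$. Your explicit conditioning argument for applying the inductive hypothesis under the random index $\bm{i}^{*}$ is a nice clarification that the paper leaves implicit.
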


\begin{lemma}\label{lem:paracom_lip_min_clique}
Let $0 < \alpha \leq 1$, $\beta\in \mathbb{R}_{\geq 0}$ and assume 
\begin{align*}
    \E\left[w(\DP[G\oplus H,\varphi])\right]&\leq \alpha \OPT_w[G\oplus H,\varphi],\\
    \E\left[(w+\delta \bm{1}_u)(\DP[G\oplus H,\varphi])\right]&\leq \alpha \OPT_{w+\delta\bm{1}_u}[G\oplus H,\varphi],\\
    \EM\left(\DP_{w}[G,\varphi^{G}_i],\DP_{w+\delta\bm{1}_{u}}[G,\varphi^{G}_i]\right)&\leq \beta
\end{align*}
holds for all $i\in [p]$.
Then, we have
\begin{align*}
    \EM\left(\DP_{w}[G\oplus H,\varphi],\DP_{w+\delta\bm{1}_{u}}[G\oplus H,\varphi]\right) \leq 30\alpha\epsilon^{-1}\log(2p\epsilon^{-1})\delta + \beta.
\end{align*}
\end{lemma}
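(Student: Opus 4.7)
The plan is to mimic the proof of Lemma \ref{lem:paracom_lip_min} from the treewidth minimization setting, replacing the parallel composition $\paracom$ by the disjoint union $\oplus$. The disjoint union is structurally simpler because $V(G)$ and $V(H)$ are disjoint, so by symmetry the perturbed vertex $u$ lies in exactly one of them; without loss of generality $u\in V(G)$, in which case $\DP_w[H,\psi_i]$ and $\DP_{w+\delta\bm{1}_u}[H,\psi_i]$ coincide as distributions for every $i\in[p]$. The coupling strategy is then to first couple $\bm{i}^*_w$ and $\bm{i}^*_{w+\delta\bm{1}_u}$ optimally, and, conditioned on them returning the same index $i$, couple $\DP_w[G,\theta_i]$ to $\DP_{w+\delta\bm{1}_u}[G,\theta_i]$ optimally while identifying the two $H$ samples.

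This decomposition yields
\begin{align*}
  &\EM\!\left(\DP_{w}[G\oplus H,\varphi],\DP_{w+\delta\bm{1}_{u}}[G\oplus H,\varphi]\right) \\
  &\qquad\leq \TV\!\left(\bm{i}^*_w,\bm{i}^*_{w+\delta\bm{1}_u}\right)\cdot D + \max_{i\in[p]}\EM\!\left(\DP_w[G,\theta_i],\DP_{w+\delta\bm{1}_u}[G,\theta_i]\right),
\end{align*}
where $D$ is any uniform bound on the weighted Hamming distance between two independently drawn outputs. Applying Lemma \ref{lem:softmin_tv} to the softargmin over $\OPT_i := \OPT[G,\theta_i]+\OPT[H,\psi_i]$, whose minimum equals $\OPT_w[G\oplus H,\varphi]$, gives
\begin{align*}
  \TV\!\left(\bm{i}^*_w,\bm{i}^*_{w+\delta\bm{1}_u}\right) \leq \frac{10\epsilon^{-1}\log(2p\epsilon^{-1})\delta}{\OPT_w[G\oplus H,\varphi]}.
\end{align*}

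The main obstacle, and the reason this lemma's statement differs from its maximization counterpart Lemma \ref{lem:paracom_lip_clique}, is controlling the diameter $D$. In maximization one has the trivial bound $w(X)\leq \OPT_w$ for every feasible $X$, but in minimization the analogous inequality goes the wrong way. Instead, I would use $d_{\mathrm{w}}((X,w),(X',w'))\leq w(X)+w'(X')$ combined with the two hypotheses $\E[w(\DP[G\oplus H,\varphi])]\leq \alpha\OPT_w[G\oplus H,\varphi]$ and $\E[(w+\delta\bm{1}_u)(\DP[G\oplus H,\varphi])]\leq \alpha\OPT_{w+\delta\bm{1}_u}[G\oplus H,\varphi]$ to conclude $D\leq \alpha\bigl(\OPT_w[G\oplus H,\varphi]+\OPT_{w+\delta\bm{1}_u}[G\oplus H,\varphi]\bigr)$.

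Finally, since the softargmin regime implicitly restricts $\delta$ to be small relative to $\OPT_w$ (otherwise Lemma \ref{lem:softmin_tv} gives only a vacuous bound and the target inequality is trivial), monotonicity of the minimum under nonnegative perturbations yields $\OPT_{w+\delta\bm{1}_u}[G\oplus H,\varphi]\leq \OPT_w[G\oplus H,\varphi]+\delta\leq 2\OPT_w[G\oplus H,\varphi]$. Hence the first term is at most $\tfrac{10\epsilon^{-1}\log(2p\epsilon^{-1})\delta}{\OPT_w[G\oplus H,\varphi]}\cdot 3\alpha\OPT_w[G\oplus H,\varphi] = 30\alpha\epsilon^{-1}\log(2p\epsilon^{-1})\delta$, and adding the sub-problem EM bound $\beta$ recovers the claimed inequality.
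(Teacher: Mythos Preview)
Your proposal is correct and follows essentially the same approach as the paper. The paper explicitly states that the proof of this lemma is identical to that of Lemma~\ref{lem:paracom_lip_min} with $\paracom$ replaced by $\oplus$, and your chain of inequalities---TV-coupling of $\bm{i}^*$, applying Lemma~\ref{lem:softmin_tv} with denominator $\OPT_w[G\oplus H,\varphi]$, bounding the diameter by $\alpha(\OPT_w+\OPT_{w+\delta\bm{1}_u})$ via the two approximation hypotheses, and using $\OPT_{w+\delta\bm{1}_u}\le\OPT_w+\delta\le 2\OPT_w$---reproduces the paper's computation line by line, with the added (and correct) observation that the disjoint-union structure makes the $H$-side contribution vanish once $u\in V(G)$.
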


The following lemma corresponds to Lemma~\ref{lem:combineall_clique}.
The proof is the same as Lemma~\ref{lem:combineall_min}.
\begin{lemma}\label{lem:combineall_min_clique}
Let $\epsilon \in \mathbb{R}_{>0}$ and $h$ be the height of $T$.
Our algorithm outputs a solution $X\in \sat(G,\varphi)$ such that $\E[w(X)]\leq (1+\epsilon)^h\OPT[G,\varphi]$.
The Lipschitz constant is bounded by $30(h+1)(1+\epsilon)^h\epsilon^{-1}\log(g(t,|\varphi|)\epsilon^{-1})$ for some computable function $g$.
\end{lemma}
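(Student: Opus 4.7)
The plan is to induct on the structure of the VR-parse tree $\mathcal T$ of height $h$, mirroring the argument for \Cref{lem:combineall_min} in the treewidth setting but using the disjoint-union, edge-addition, and recoloring primitives in place of parallel composition and forget.

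First I would handle the base case: when the subterm is a single colored vertex, the algorithm of \Cref{sec:dp_intro_clique} returns the unique element of $\sat(G,\varphi)$ exactly, so $\E[w(\DP[G,\varphi])] = \OPT[G,\varphi]$ and $\EM(\DP_w,\DP_{w+\delta\bm 1_u}) \leq \delta$ trivially. The edge-addition and recoloring steps are handled simply by setting $\DP[\eta_{i,j}(G),\varphi] := \DP[G,\psi]$ and $\DP[\rho_{i\to j}(G),\varphi] := \DP[G,\psi']$ using the formulas provided by \Cref{thm:mso-vr-split}; these assignments do not consume any layer of height and preserve both the approximation ratio and the expected transport cost exactly, so they contribute nothing to either bound.

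The main work is the disjoint-union step, handled inductively. Suppose the claim holds at each child of a disjoint-union node with approximation factor $(1+\epsilon)^{h-1}$ and Lipschitz bound $\beta := 30h(1+\epsilon)^{h-1}\epsilon^{-1}\log(g(t,|\varphi|)\epsilon^{-1})$. Applying \Cref{lem:paracom_approx_min_clique} with $\alpha=(1+\epsilon)^{h-1}$ yields $\E[w(\DP[G\oplus H,\varphi])] \leq (1+\epsilon)^{h}\OPT[G\oplus H,\varphi]$, establishing the approximation bound at height $h$. For Lipschitzness, I would verify the two hypotheses of \Cref{lem:paracom_lip_min_clique} with $\alpha=(1+\epsilon)^{h}$ (which follows from the approximability bound just derived, applied for both weight vectors $w$ and $w+\delta\bm 1_u$) and $\beta$ as above, to obtain an additive increment of $30(1+\epsilon)^{h}\epsilon^{-1}\log(2p_{\max}\epsilon^{-1})\delta$. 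Since $p_{\max}$ in \Cref{thm:mso-vr-split} depends only on $t$ (the width of the VR-term) and $|\varphi|$, this can be absorbed into the computable function $g$. Summing these increments over at most $h+1$ levels of the VR-parse tree yields the stated Lipschitz bound $30(h+1)(1+\epsilon)^h\epsilon^{-1}\log(g(t,|\varphi|)\epsilon^{-1})$.

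The only subtlety, and the main obstacle compared with the maximization case, is the $\alpha=(1+\epsilon)^h$ multiplicative factor appearing in the Lipschitz bound of \Cref{lem:paracom_lip_min_clique}: unlike the maximization argument, one cannot use $\OPT$ as a trivial bound on the earth mover's distance, so at each disjoint-union node the additive contribution scales with the current approximation factor rather than with $1$. The key observation that makes this still work is that $\alpha$ is monotone nondecreasing along any root-to-leaf path, so the final value $(1+\epsilon)^h$ dominates the factor at every node; this lets us pull $(1+\epsilon)^h$ out as a uniform multiplier across all $h+1$ levels of summation, producing the clean bound in the statement.
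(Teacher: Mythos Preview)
Your proof is correct and follows essentially the same approach as the paper, which simply states that the argument is identical to that of \Cref{lem:combineall_min}: the approximability bound comes from repeatedly applying \Cref{lem:paracom_approx_min_clique}, and the Lipschitz bound from repeatedly applying \Cref{lem:paracom_lip_min_clique} together with the uniform estimate $\alpha \leq (1+\epsilon)^h$. Your explicit identification of the subtlety---that each Lipschitz increment is scaled by the current approximation factor $\alpha$, which is monotone along root-to-leaf paths and hence can be bounded uniformly by $(1+\epsilon)^h$---is exactly what the paper compresses into the phrase ``together with $\alpha\leq (1+\epsilon)^h$.''
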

Theorem~\ref{thm:alg_min_clique} is also proved by the same proof as Theorem~\ref{thm:alg_min}.
\end{document}